\title{Improved Algorithms for Low Rank Approximation from Sparsity}
\author{
    David P. Woodruff \\
    Carnegie Mellon University \\
    \texttt{dwoodruf@cs.cmu.edu}
    \and
    Taisuke Yasuda \\
    Carnegie Mellon University \\
    \texttt{taisukey@cs.cmu.edu}
}
\date{}
\begin{document}

\begin{titlepage}
\maketitle
\thispagestyle{empty}

\begin{abstract}
    We overcome two major bottlenecks in the study of low rank approximation by assuming the low rank factors themselves are sparse. Specifically,
    \begin{enumerate}[label={(\arabic*)}]
    \item for low rank approximation with spectral norm error, we show how to improve the best known $\nnz(\bfA) k / \sqrt{\eps}$ running time to $\nnz(\bfA)/\sqrt{\eps}$ running time plus low order terms depending on the sparsity of the low rank factors, and
    \item for streaming algorithms for Frobenius norm error, we show how to bypass the known $\Omega(nk/\eps)$ memory lower bound and obtain an $s k (\log n)/ \poly(\eps)$ memory bound, where $s$ is the number of non-zeros of each low rank factor. Although this algorithm runs in exponential time, as it must under standard complexity-theoretic assumptions, we also present polynomial time algorithms using $\poly(s,k,\log n,\eps^{-1})$ memory that output rank $k$ approximations supported on an $O(sk/\eps)\times O(sk/\eps)$ submatrix.
    \end{enumerate}
    Both the prior $\nnz(\bfA) k / \sqrt{\eps}$ running time and the $nk/\eps$ memory for these problems were long-standing barriers; our results give a natural way of overcoming them assuming sparsity of the low rank factors.
\end{abstract}

\end{titlepage}


\newpage

\section{Introduction}

In modern computational problems, one often encounters large scale and high dimensional datasets that are too large and complex to be understood as is. A common approach to dealing with this complexity is to find \emph{low rank approximations} to these datasets. That is, given a matrix $\bfA\in\mathbb R^{n\times d}$ and a target rank $k$, one seeks a rank $k$ matrix $\bfB\in\mathbb R^{n\times d}$ such that $\bfA$ is well-approximated by $\bfB$. This is advantageous, as a rank $k$ matrix can be written as $\bfB = \bfU\cdot \bfV^\top$ for an $n\times k$ matrix $\bfU$ and a $d\times k$ matrix $\bfV$, which only takes $(n+d)k$ parameters to describe rather than $nd$. Oftentimes, the task of finding this $\bfB$ is accomplished by computing a rank $k$ matrix $\bfB$ that approximately minimizes the reconstruction error $\norm*{\bfA - \bfB}$, where $\norm*{\cdot}$ is some norm, e.g., the Frobenius norm or the spectral norm. 

In practice, the coordinates often have physical interpretations, and in such cases, it is desirable to be able to interpret the low rank approximation found. However, $\bfB$ as obtained above, now described with $(n+d)k$ parameters, is often still too large to interpret easily. Thus, one could further ask for the low rank approximation factor $\bfB$ to be sparse as well. One way to formalize this is to require that $\bfB$ can be written as the sum of $k$ rank $1$ matrices, each of which is supported on an $s\times s$ submatrix for some sparsity parameter $s$. We refer to such a $\bfB$ as being an \emph{$s\times s$-sparse rank $k$ matrix}. 

\begin{Definition}[$s\times s$-sparse rank $k$ matrix]\label{def:sxs-sparse-rank-k-matrix}
    For a sparsity parameter $s\in\mathbb N$ and rank parameter $k \leq n$, an $n\times d$ matrix $\bfB$ is $s\times s$ sparse rank $k$ if it can be written as
    \[
        \bfB = \sum_{i=1}^k \tau_i \bfx_i\bfy_i^\top
    \]
    where $\tau_i \in\mathbb R$ and $\bfx_i\in\mathbb R^n$ and $\bfy_i\in\mathbb R^d$ are $s$-sparse unit vectors for each $i\in[k]$. When the ambient dimensions $n$ and $d$ are clear from context, we write $\mathcal S_{s,k}\subseteq\mathbb R^{n\times d}$ for the set of $s\times s$-sparse rank $k$ matrices. 
\end{Definition}

The above notion of sparse low rank matrices can be either used as an \emph{assumption} or a \emph{constraint}. We study both variants of the low rank approximation problem, and in both cases, we obtain improved algorithms over prior low rank approximation algorithms in certain settings. 

In the first setting, one \emph{assumes} that the optimal rank $k$ approximation takes the form of a sparse low rank matrix. This scenario is a phenomenon known as \emph{localization} of eigenvectors, and occurs frequently in many applications \cite{DBLP:journals/nla/HernandezBCY21, DBLP:journals/corr/abs-2009-04414}, for example in quantum many-body problems \cite{lagendijk2009fifty, nandkishore2015many} and network analysis \cite{pastor2018eigenvector}. In this case, the optimal solution to the low rank approximation problem can be computed by an SVD, which has efficient approximation algorithms \cite{DBLP:conf/stoc/ClarksonW13,DBLP:conf/nips/MuscoM15,DBLP:conf/nips/ZhuL16}. However, one may further ask the question of whether we can do better than the best known algorithms for approximate SVD under this additional sparsity assumption on the singular vectors. This question was studied in the work of \cite{DBLP:journals/nla/HernandezBCY21} and a followup work of \cite{DBLP:journals/corr/abs-2009-04414}, which studied algorithms for computing eigenvectors in symmetric matrices with localized eigenvectors. In \cite{DBLP:journals/nla/HernandezBCY21}, the authors study an algorithm for finding a small submatrix containing the supports of the leading eigenvectors by greedily adding rows and columns without formal guarantees, and \cite{DBLP:journals/corr/abs-2009-04414} seek to improve this approach using reinforcement learning techniques. 

In the second setting, we instead deal with an arbitrary input matrix $\bfA$, but still seek a sparse low rank matrix to approximate $\bfA$. Thus, we \emph{constrain} our low rank approximation to be sparse, and optimize over this smaller space instead. It turns out that this constraint makes the problem significantly harder in time complexity (as we show later), which is a phenomenon that is well-documented in the context of the related sparse PCA problem (see Section \ref{sec:related-work} for a discussion). Nonetheless, we show that this formulation of the problem is in fact \emph{easier} than the unconstrained low rank approximation problem in the context of streaming algorithms. 

\subsection{Randomized Block Krylov Methods for Sparse Singular Vectors}

Our first contribution gives new algorithms for the spectral low rank approximation problem, when we are promised that the top $k$ singular components are $s$-sparse. Let $\bfA\in\mathbb R^{n\times d}$ be a rank $r$ matrix whose top $k$ left and right singular vectors are $s$-sparse. The best previous known general case algorithm of \cite{DBLP:conf/nips/MuscoM15} (see Theorem \ref{thm:randomized-block-krylov}), based on block Krylov methods, runs in time dominated by roughly
\[
  \frac{\nnz(\bfA)k}{\sqrt\eps}. 
\]
We improve this bound to roughly
\[
  \frac{\nnz(\bfA)}{\sqrt\eps}. 
\]

\begin{theorem}[Spectral low rank approximation for sparse singular vectors]\label{thm:slra-sparse-block-krylov}
    Let $\bfA\in\mathbb R^{n\times d}$ be a rank $r$ matrix whose top $k$ left and right singular vectors are supported on $s$ coordinates. Then, there is an algorithm which outputs $\hat\bfA$ such that
    \[
        \norm*{\bfA - \hat\bfA}_2^2 \leq (1+\eps)\norm*{\bfA - \bfA_k}_2^2
    \]
    in time
    \[
        O\parens*{\frac{\nnz(\bfA)}{\sqrt\eps} + \frac{n}{\eps}}\log \frac{srk\log n}{\eps} + \poly\parens*{s,k,\frac1\eps,\log n}.
    \]
\end{theorem}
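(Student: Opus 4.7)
The plan is to reduce the problem to finding the SVD of a small $O(sk)\times O(sk)$ submatrix. Since the top $k$ left and right singular vectors of $\bfA$ are $s$-sparse, their unions of supports $S\subseteq[n]$ and $T\subseteq[d]$ satisfy $|S|,|T|\leq sk$, and the optimal rank-$k$ approximation $\bfA_k$ is supported within $S\times T$. Thus, once we identify $S$ and $T$, we can compute an approximate top-$k$ SVD of $\bfA_{S,T}$ in $\poly(s,k,1/\eps)$ time and form $\hat\bfA = \bfA\tilde\bfV\tilde\bfV^\top$, where $\tilde\bfV \in \mathbb{R}^{d\times k}$ consists of the resulting right singular vectors padded with zeros off $T$; this final projection only accesses columns of $\bfA$ indexed by $T$, and the Pythagorean-style decomposition for $S\times T$-supported $\bfA_k$ lets us convert a $(1+\eps)$ approximation on $\bfA_{S,T}$ into a $(1+\eps)$ spectral approximation on $\bfA$.

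To find $S$ and $T$ within the target running time, I would run the randomized block Krylov method of \cite{DBLP:conf/nips/MuscoM15} with block size $1$ (single Gaussian starting vector) for $q = O(\log(n)/\sqrt\eps)$ iterations. Each iteration is one matrix-vector product costing $O(\nnz(\bfA))$, for a total of $O(\nnz(\bfA)/\sqrt\eps)$, while orthogonalizing the $q$-dimensional Krylov basis over $\mathbb{R}^n$ contributes the $O(n/\eps)$ term. From the resulting Krylov subspace (of dimension $q\geq k$ for small $\eps$), extract candidate approximate singular vectors via a small internal SVD. Repeating this procedure $O(\log(srk\log n/\eps))$ times with fresh Gaussian starts amplifies success probability. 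For each extracted approximation, the indices of the top $s$ coordinates by magnitude yield a candidate sparse support, and taking the union of these across all trials and all $k$ extracted vectors produces $S$ and $T$ of size $O(sk\log(srk\log n/\eps))$ that contain the true supports with high probability. The subsequent approximate SVD of $\bfA_{S,T}$ is then folded into the $\poly(s,k,1/\eps,\log n)$ additive term.

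The main obstacle is the single-vector (block size $1$) Krylov analysis: standard randomized block Krylov uses block size $k$ precisely to avoid pathological starting vectors having small overlap with some top singular direction, and extending the analysis to block size $1$ in the general case would reintroduce a factor of $k$. The key leverage here is that we do not need an $\ell_2$-accurate reconstruction of each singular vector — only good enough that top-$s$ thresholding recovers its support, a considerably weaker condition. Establishing that $O(\log(n)/\sqrt\eps)$ iterations suffice for this weaker goal, and that $O(\log(srk\log n/\eps))$ independent restarts boost the success probability to $1-1/\poly(n,s,k,1/\eps)$, is where the main technical novelty over \cite{DBLP:conf/nips/MuscoM15} lies; once this sparse-support-recovery guarantee is in hand, the remaining reductions are standard.
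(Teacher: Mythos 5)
Your high-level framing---find $S,T$ of size $O(sk)$ containing the supports, then SVD the submatrix---matches the paper, but the support-identification mechanism you propose has a genuine gap that the paper's actual argument is entirely devoted to closing.

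The core difficulty is the \emph{fine-grained} separation between singular values. Power method (or plain single-vector Krylov without Rayleigh--Ritz) for $q=O(\log(\cdot)/\sqrt\eps)$ iterations separates components $j$ with $\sigma_j\geq(1+\sqrt\eps)\sigma_{k+1}$ from the tail, because $(1+\sqrt\eps)^{2q}$ is a large polynomial blow-up. But the theorem needs to capture all singular vectors down to $\sigma_j\geq(1+\eps)\sigma_{k+1}$, and $(1+\eps)^{2/\sqrt\eps}\approx e^{2\sqrt\eps}=1+O(\sqrt\eps)$ gives essentially no amplification. Separating a $(1+\eps)$ gap in $O(1/\sqrt\eps)$ iterations requires Chebyshev acceleration (Lemma~\ref{lem:chebyshev-poly}), and a Chebyshev polynomial requires \emph{knowing the threshold $\alpha\approx\sigma_{k+1}$}. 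The paper first locates $\sigma_{k+1}$ up to a $(1+\sqrt\eps)$ factor via a nontrivial binary search with certified upper/lower bounds (Lemma~\ref{lem:approximate-upper-lower-bound}), and then enumerates $O(1/\sqrt\eps)$ candidate thresholds, each yielding a new candidate support set; this enumeration is what produces the $n/\eps$ term in the running time, not orthogonalization of the Krylov basis as you claim.

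Your Rayleigh--Ritz proposal implicitly optimizes over polynomials and so in principle subsumes Chebyshev, but the step you declare to be ``where the main technical novelty lies'' --- proving that top-$s$ thresholding of the Ritz vectors recovers supports under only a $(1+\eps)$ gap --- is precisely the hard part, and you give no mechanism for it. The obstruction is not fixed by your $O(\log(srk\log n/\eps))$ restarts: a small spectral gap between $\sigma_j$ and $\sigma_{k+1}$ is not a bad-seed event, it is a structural lack of information that fresh Gaussians cannot repair. When the gap is $(1+\eps)$, the Ritz vector for $\bfu_j$ has $\Theta(1)$ contamination from $\bfu_{k+1},\bfu_{k+2},\dots$, which are \emph{not} $s$-sparse, and their large coordinates can swamp those of $\bfu_j$; you never argue otherwise. (Note that the paper never recovers individual singular vectors at all: it only reads off the top $sk$ coordinates of a single random linear combination $p(\bfA\bfA^\top)\bfA\bfg$, which is both a simpler statistic and avoids your requirement that the Krylov dimension $q$ exceed $k$, a constraint not implied by the theorem's hypotheses.) Finally, your budget $q\cdot(\text{restarts})=O\bigl(\tfrac{\log n}{\sqrt\eps}\log\tfrac{srk\log n}{\eps}\bigr)$ matvecs exceeds the claimed $O\bigl(\tfrac{1}{\sqrt\eps}\log\tfrac{srk\log n}{\eps}\bigr)$ by a $\log n$ factor.
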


\begin{Remark}
    By a simple modification of our analyses, the condition for $s$-sparse singular vectors can be replaced by an approximate sparsity condition, such that for the $j$th singular component, there is a set of $s$ coordinates which captures a
    \[
        1 - \frac{\eps}{k\sqrt r}\frac{\sigma_{k+1}}{\sigma_j}
    \]
    fraction of the $\ell_2$ mass of the left singular vector, and similarly for the right singular vector. 

    We note that this robustness condition is a strong assumption, and we believe that an important direction for future work is to find algorithms that handle more relaxed assumptions on the approximate sparsity of the singular vectors. 
\end{Remark}

Our algorithm proceeds in two steps: (1) we first identify small subsets of coordinates $S\subset [n]$ and $T\subset[d]$ that contain all large coordinates of the sparse left and right singular vectors (2) we proceed with standard low rank approximation algorithms on the subset of coordinates. The idea is that finding top singular vectors in the smaller submatrix is sufficient to find the top singular vectors in the original matrix, since all of our original singular vectors roughly retain their singular values in the submatrix, while the other singular components can only decrease. The observation for this general approach has already been made in \cite{DBLP:journals/nla/HernandezBCY21, DBLP:journals/corr/abs-2009-04414}. However, to implement this idea efficiently, several additional new ideas are needed, and our algorithm carefully combines power method, Chebyshev polynomials, and binary search to obtain the result of Theorem \ref{thm:slra-sparse-block-krylov}. 

\subsubsection{Overview of Techniques}\label{sec:sparse-krylov-overview}

Our first idea is that with a budget of $\nnz(\bfA)/\sqrt\eps$ running time, we can run na\"ive power method for $1/\sqrt\eps$ iterations initialized with a single random Gaussian vector $\bfg\sim\mathcal N(0,\bfI_d)$ to compute $(\bfA\bfA^\top)^{1/\sqrt\eps}\bfA\bfg$. Using the SVD $\bfA = \bfU\bfSigma\bfV^\top$ of $\bfA$, we may write this as $\bfU\bfSigma^{O(1/\sqrt\eps)}\bfV^\top\bfg$. Then by the rotational invariance of the Gaussian, this random vector is distributed as a random linear combination of the left singular vectors of $\bfA$, where the $i$th left singular vector is scaled by roughly $\sigma_i^{1/\sqrt\eps}$. Then if $i\in[k]$ is such that $\sigma_i \geq (1+\sqrt\eps)\sigma_{k+1}$, then $\sigma_i^{1/\sqrt\eps}$ is a constant factor larger than $\sigma_{k+1}^{1/\sqrt\eps}$, so the $s$ entries corresponding to the $i$th left singular vector stand out in $(\bfA\bfA^\top)^{1/\sqrt\eps}\bfA\bfg$. Thus, selecting the top $sk$ entries with largest absolute value in $(\bfA\bfA^\top)^{1/\sqrt\eps}\bfA\bfg$ retrieves a superset of the support of the left singular vectors with singular value $\sigma_i$ for which $\sigma_i \geq (1+\sqrt\eps)\sigma_{k+1}$. We can repeat on the right side as well to obtain the support of the large right singular vectors. 

The above approach is enough to find the large singular components with singular value at least $(1+\sqrt\eps)\sigma_{k+1}$, but we must find singular values all the way down to $(1+\eps)\sigma_{k+1}$ for a $(1+\eps)$ relative error approximation. To do this, we use the approach of \cite{DBLP:conf/nips/MuscoM15} of using Chebyshev polynomials, which, given a location parameter $\alpha$, gives us a degree $1/\sqrt\eps$ polynomial $p$ for which $p(x)$, for all $x\geq (1+\eps)\alpha$, is a constant times greater than any $p(x)$ for $x\leq \alpha$ (see Lemma \ref{lem:chebyshev-poly} for the mathematical statement). If we knew the location $\alpha = \sigma_{k+1}$, then we could compute $p(\bfA)\bfg$ in $\nnz(\bfA)/\sqrt\eps$ time and use the same approach as before to find the support of all singular components $i$ for which $\sigma_i \geq (1+\eps)\sigma_{k+1}$. The challenge, of course, is that we do not know $\sigma_{k+1}$. 

We first show how to find the value of $\sigma_{k+1}$ up to a $(1+\sqrt\eps)$ factor. To this end, we first show that if $\sigma_i$ for $i\in[k]$ is large, i.e.\ $\sigma_i \geq (1+\sqrt\eps)\sigma_{k+1}$, then we can find $\sigma_i$ up to a $(1+\eps)$ factor using the set of $sk$ large coordinates on the left and right located before, using the power method. However, note that we do not know for which $i$ this is true. That is, if we let $\hat\bfA$ be the $sk\times sk$ submatrix supported on the large coordinates identified using the power method, we expect the large singular values of $\hat\bfA$ to be good estimates of the large singular values of $\bfA$, but we do not know which of them are large enough to actually be good estimates. 

To address this, let $i\in[k]$, and first note that $\sigma_i(\hat\bfA)$ is always a lower bound on $\sigma_i(\bfA)$ by the Cauchy interlacing theorem. Furthermore, suppose that $\hat\bfB$ is a rank $i-1$ approximation to $\hat\bfA$. Then, $\norm*{\bfA - \hat\bfB}_2$ serves as an upper bound for $\sigma_i(\bfA)$, as
\[
    \sigma_i(\bfA) = \min_{\text{rank $i-1$ $\bfC$}} \norm*{\bfA - \bfC}_2 \leq \norm*{\bfA - \hat\bfB}_2.
\]
We show that for $i\in[k]$ such that $\sigma_i \geq (1+\sqrt\eps)\sigma_{k+1}$, these are good lower and upper bounds on the singular value $\sigma_i(\bfA)$, i.e., they are within $(1+\eps)$ factors of each other. Furthermore, they can both be computed in time roughly
\[
    \frac{\nnz(\bfA)}{\sqrt\eps} + \poly(s,k,\eps^{-1}).
\]
Thus, we have an extremely efficient way to certify our estimates to the singular values $\sigma_i(\bfA)$, if they are large enough. We then consider the following binary search strategy over the singular values: if the upper and lower bounds are within $(1+\eps)$ factors of each other, then we keep searching lower, and otherwise, we search higher. If the $\sigma_{i_*}(\bfA)$ found is such that $\sigma_{i_*}(\bfA) \geq (1+\sqrt\eps)\sigma_{k+1}(\bfA)$, then the top $i_*$ singular components are found in the initial power method step accurately enough so that $\norm*{\bfA-\hat\bfB}_2$ is close to $\sigma_{i_*+1}(\bfA) \leq (1+\sqrt\eps)\sigma_{k+1}(\bfA)$, where $\hat\bfB$ is a rank $i_*$ approximation $\hat\bfB$ of $\hat\bfA$. Otherwise, $\sigma_{i_*}(\bfA)$ itself is within a $(1+\sqrt\eps)$ factor of $\sigma_{k+1}(\bfA)$. 

Now that we are within a $(1+\sqrt\eps)$ factor of $\sigma_{k+1}(\bfA)$, we just need $1/\sqrt\eps$ guesses in powers of $(1+\eps)$ in order to guess $\sigma_{k+1}(\bfA)$ up to a factor of $(1+\eps)$. We can in fact afford to guess all of these locations $\alpha$, compute the corresponding Chebyshev polynomial $p$, compute $p(\bfA)\bfg$ from precomputed Krylov iterates, select the top $sk$ entries, and then add the entries to the support that we consider. 

With the support superset in hand, we finish the algorithm by performing an approximate SVD on this submatrix. Our full discussion can be found in Section \ref{sec:sparse-krylov}. 

\subsection{Streaming Frobenius Low Rank Approximation with Sparse Factors}

We now discuss our results on the Frobenius norm sparse low rank approximation problem. For the Frobenius norm, a variant of the sparse low rank approximation problem has previously been studied in \cite{DBLP:journals/siammax/ZhangZS02,DBLP:journals/siammax/ZhangZS04}, who slightly relax the sparsity requirement compared to our formulation.

We consider the \emph{streaming setting}, in which our input matrix $\bfA$ arrives as one pass over entrywise additive updates $\bfA_{i,j} \leftarrow \bfA_{i,j} + \Delta$ for $\Delta\in\mathbb R$. The goal is to design algorithms which use low space. A standard approach is to consider \emph{sketching algorithms}, in which we choose a random $m\times (nd)$ matrix $\bfS$ and maintain $\bfS(\vc(\bfA))$, where $\vc(\bfA)$ is the flattening of the $n\times d$ matrix $\bfA$ into an $nd$-dimensional vector, and then postprocess $\bfS(\vc(\bfA))$ for the final result. This yields an algorithm of space roughly $\tilde O(m)$, up to bit complexity factors. We refer to each row of $\bfS$ as a \emph{measurement}. In fact, it is known that, with some restrictions, any turnstile streaming algorithm can be implemented as a linear sketch with roughly the same amount of memory \cite{DBLP:conf/stoc/LiNW14, DBLP:conf/coco/AiHLW16, DBLP:conf/stoc/KallaugherP20}, so in some sense, the number of measurements required for a sketching algorithm in a turnstile stream captures the space complexity of the streaming problem. 

For the standard Frobenius norm low rank approximation problem, it is known since the work of \cite{DBLP:conf/stoc/ClarksonW09, DBLP:conf/stoc/BoutsidisWZ16} that $\tilde \Theta(nk/\eps)$ measurements are necessary and sufficient to output a rank $k$ approximation of an $n\times d$ matrix $\bfA$ with a relative error Frobenius norm guarantee.

\begin{theorem}[\cite{DBLP:conf/stoc/ClarksonW09,DBLP:conf/stoc/BoutsidisWZ16}]\label{lem:frob-lra-sketching-complexity}
    Let $\bfA\in\mathbb R^{n\times d}$. Then, any randomized sketching algorithm which outputs a rank $k$ matrix $\bfB$ such that
    \[
        \norm*{\bfA - \bfB}_F^2 \leq (1+\eps)\min_{\text{rank $k$ $\bfC$}}\norm*{\bfA - \bfC}_F^2
    \]
    with probability at least $2/3$ requires $\Theta((n+d)k/\eps)$ measurements, and this lower bound is tight up to constant factors. 
\end{theorem}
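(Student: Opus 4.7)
The plan attacks the two directions of the $\tilde\Theta((n+d)k/\eps)$ bound separately.

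For the $O((n+d)k/\eps)$ upper bound, I would maintain two linear sketches in parallel: $\bfL\bfA$ for a sketching matrix $\bfL\in\mathbb R^{m\times n}$ and $\bfA\bfR$ for a sketching matrix $\bfR\in\mathbb R^{d\times m}$, with $m = O(k/\eps)$, drawn so that $\bfL$ and $\bfR^\top$ are projection-cost preserving sketches (e.g., a composition of a CountSketch with a Gaussian, or an OSNAP matrix of the appropriate dimension). Each row of $\bfL$ contributes $n$ measurements and each column of $\bfR$ contributes $d$ measurements, yielding $O((n+d)k/\eps)$ in total. To produce the final rank $k$ output, I would compute an orthonormal basis $\bfU$ for the column span of $\bfA\bfR$, then solve the regression $\min_\bfX \norm*{\bfU\bfX\bfR - \bfA\bfR}_F$ in its sketched form using $\bfL$, and read off the best rank $k$ approximation in $\bfU$'s span. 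Correctness reduces to the standard fact that a projection-cost preserving sketch lets one pick the best rank $k$ subspace inside the sketch and transfer the guarantee back to $\bfA$ up to a $(1+\eps)$ factor.

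For the $\Omega((n+d)k/\eps)$ lower bound, I would employ an encoding argument. Build a codebook of $2^{\Omega(nk/\eps)}$ matrices of the form $\bfA_\sigma = \bfU_\sigma \bfSigma + \bfN_\sigma$, where $\bfU_\sigma\in\mathbb R^{n\times k}$ is drawn from a well-separated net of $k$-dimensional subspaces, $\bfSigma$ is a fixed diagonal matrix, and $\bfN_\sigma$ is calibrated noise chosen so that (a) any rank $k$ matrix $\bfB$ satisfying $\norm*{\bfA_\sigma - \bfB}_F^2 \leq (1+\eps)\min_{\text{rank }k\ \bfC}\norm*{\bfA_\sigma - \bfC}_F^2$ must have principal angles close enough to $\bfU_\sigma$ to uniquely determine $\sigma$, while (b) the noise magnitude is small enough that such $\bfB$'s exist and large enough that the codebook is statistically rich. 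Then, given any correct sketching algorithm, one can decode $\sigma$ from $\bfS(\vc(\bfA_\sigma))$ with constant probability, forcing the sketch to carry $\Omega(nk/\eps)$ bits of information about its input. Standard information-theoretic arguments (together with Yao's minimax and a suitable discretization) turn this bit lower bound into an $\Omega(nk/\eps)$ measurement lower bound, and running the same construction on the transpose yields $\Omega(dk/\eps)$.

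The main obstacle is the lower bound: designing the codebook and the noise scale so that a single $(1+\eps)$ Frobenius guarantee is strong enough to pin down $\sigma$, yet the family remains exponentially large, requires careful geometric control of the $k$-dimensional subspace net and of the relationship between the noise and the $(k+1)$st singular value. The upper bound, by contrast, is largely a matter of plugging the right sparse subspace embedding and projection-cost preserving sketch parameters into the two-sided sketching template.
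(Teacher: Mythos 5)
The paper does not prove this theorem; it is cited verbatim from \cite{DBLP:conf/stoc/ClarksonW09} and \cite{DBLP:conf/stoc/BoutsidisWZ16} and used only as background for the lower bound discussion (Corollary~\ref{cor:sketching-lb-flra-lra}). So there is no in-paper proof to compare against; you are reconstructing the cited works from scratch.

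That said, a couple of remarks on your reconstruction. Your upper-bound outline is the right template (the two-sided sketch $\bfL\bfA$, $\bfA\bfR$ with $m=O(k/\eps)$, then a sketched constrained regression to extract a rank-$k$ matrix from the column span of $\bfA\bfR$), but the measurement accounting is phrased backwards: the sketch $\bfL\bfA$ has $m\cdot d = O(dk/\eps)$ entries, and $\bfA\bfR$ has $n\cdot m = O(nk/\eps)$ entries, so it is the \emph{columns} of $\bfL\bfA$ and the \emph{rows} of $\bfA\bfR$ that multiply out to $(n+d)k/\eps$. Also, to obtain the tight $O(k/\eps)$ sketch sizes (as opposed to $O(k/\eps^2)$ from a naive JL argument) you need the affine-embedding and approximate-matrix-product machinery of \cite{DBLP:conf/stoc/BoutsidisWZ16}, not just generic projection-cost preservation; this is where most of the work in that paper lives. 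For the lower bound, \cite{DBLP:conf/stoc/BoutsidisWZ16} indeed proceeds information-theoretically (via a reduction to the Augmented Indexing communication problem over a subspace codebook), so your encoding sketch is in the right spirit; but as you yourself flag, the hard part is calibrating the noise so that a $(1+\eps)$ Frobenius approximation both exists and uniquely determines the codeword, and the proposal currently asserts this can be done without giving the construction. As a blind attempt at a result the paper only cites, this is a sensible and essentially faithful outline, with the gap being that the quantitative core of both directions is deferred to ``standard facts'' that are precisely where the cited papers' technical content resides.
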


\subsubsection{Upper Bounds}

We overcome the above negative result by showing that we can find an $s\times s$-sparse rank $k$ matrix $\bfB$ such that 
\[
    \norm*{\bfA - \bfB}_F^2 \leq (1+\eps)\min_{\bfC\in\mathcal S_{s,k}}\norm*{\bfA - \bfC}_F^2
\]
in only $O(sk(\log n)/\eps^2)$ measurements, which for constant $\eps$ replaces the dependence on the ambient dimension $n$ by a dependence of $s$ on the desired sparsity. Our algorithm achieving this space bound iterates over a net and runs in exponential time, which is necessary as we show in Section \ref{sec:computational-complexity}. We additionally present polynomial time algorithms with some relaxations by allowing the output to be $s'\times s'$-sparse rank $k$, for $s' = O(sk/\eps)$, one with a \emph{relative error guarantee} introduced above, as well as an improved algorithm with the weaker \emph{additive error guarantee}, i.e.,
\[
    \norm*{\bfA - \bfB}_F^2 \leq \min_{\bfC\in\mathcal S_{s,k}}\norm*{\bfA - \bfC}_F^2 + \eps\norm*{\bfA}_F^2.
\] 

\begin{table}[h]
    \centering
    \begin{tabular}{ccccc}
        Theorem & Sparsity & Measurements & Running Time & Error \\
        \hline
        Theorem \ref{thm:net-iterate-fslra} & $s\times s$ & $sk\cdot\poly(\log n, \eps^{-1})$ & Exponential Time & Relative Error \\
        Theorem \ref{thm:frob-slra-polytime-rel-err} & $O(sk/\eps)\times O(sk/\eps)$ & $s^2k^2\cdot\poly(\log n, \eps^{-1})$ & Polynomial Time & Relative Error \\
        Theorem \ref{thm:frob-slra-polytime-add-err} & $O(sk/\eps)\times O(sk/\eps)$ & $sk^2\cdot\poly(\log n, \eps^{-1})$ & Polynomial Time & Additive Error \\
    \end{tabular}
    \caption{Our results for Frobenius norm sparse low rank approximation.}
    \label{tab:frob-norm-slra}
\end{table}

Both of our polynomial time bicriteria algorithms proceed by identifying $O(sk/\eps)$ heavy rows and columns that must contain an approximately optimal $s\times s$-sparse rank $k$ matrix, and then computing an approximate SVD on them. Although this algorithm is simple in the offline model, the challenge here is to implement this algorithm in one pass over a stream with low memory.

By employing variations on classical techniques for finding heavy hitters in a stream, we show that the identification of the indices of $O(sk/\eps)$ heavy rows and columns can be done in roughly $O(sk/\eps)$ measurements. For our first relative error algorithm, we also maintain a separate heavy hitters data structure which recovers every entry of $\bfA$ up to an additive error of
\[
    \frac{\eps^4}{s^2 k^2}\norm*{\bfA_{\overline{[s^2k^2/\eps^4]}}}_F^2,
\]
where $\bfA_{\overline{[s^2k^2/\eps^4]}}$ denotes the matrix obtained by deleting the $s^2k^2/\eps^4$ largest entries of $\bfA$ in absolute value. This takes an additional $s^2k^2/\eps^4$ measurements. Then, we recover the $O(sk/\eps)\times O(sk/\eps)$ submatrix of the heavy rows and columns up to an overall additive Frobenius norm error of
\[
    O\parens*{\frac{sk}{\eps}} \cdot O\parens*{\frac{sk}{\eps}} \cdot \frac{\eps^4}{s^2 k^2}\norm*{\bfA_{\overline{[s^2k^2/\eps^4]}}}_F^2 \leq O(\eps^2) \norm*{\bfA - \bfB}_F^2
\]
where $\bfB$ is any matrix supported on at most $s^2 k^2/\eps^4$ entries. This is small enough so that we can simply perform an offline approximate SVD on the recovered entries. 

In order to improve upon the measurement bound of this simple algorithm, we change our approximate SVD routine from a black box offline SVD algorithm to the $\ell_2$ sampling based approach of \cite{DBLP:journals/jacm/FriezeKV04, DBLP:journals/siamcomp/DrineasKM06a}, which first performs an adaptive sampling routine in order to reduce the dimension before performing an SVD. Although this has a weaker additive error guarantee, we show that by carefully combining this approach with heavy hitter  algorithms, we can remove a factor of $s$ in the measurement bound. 

\subsubsection{Lower Bounds}

Our upper bounds are complemented by relatively straightforward reductions from known lower bounds for related problems. Perhaps the most obvious is the lower bound for the usual Frobenius norm low rank approximation problem (Theorem \ref{lem:frob-lra-sketching-complexity}), which already gives an $\Omega(sk/\eps)$ lower bound. 

\begin{corollary}\label{cor:sketching-lb-flra-lra}
    Let $\bfA\in\mathbb R^{n\times d}$ be such that there exists a unique $\bfB\in\mathcal S_{s,k}$ that minimizes $\norm*{\bfA - \bfB}_F^2$ over $\bfB\in\mathcal S_{s,k}$ that is supported on an $s\times s$ submatrix and has orthogonal factors. Then, any randomized sketching algorithm which outputs a matrix $\bfB\in\mathcal S_{s,k}$ such that
    \[
        \norm*{\bfA - \bfB}_F^2 \leq (1+\eps)\min_{\text{rank $k$ $\bfC$}}\norm*{\bfA - \bfC}_F^2
    \]
    with probability at least $2/3$ requires $\Omega(sk/\eps)$ measurements. 
\end{corollary}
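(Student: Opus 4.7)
The plan is to reduce from the standard Frobenius norm low rank approximation sketching lower bound of Theorem~\ref{lem:frob-lra-sketching-complexity}. Instantiated with ambient dimensions $s\times s$, that theorem already requires $\Omega(sk/\eps)$ measurements for the ordinary rank $k$ approximation problem, so it suffices to show that any sparse low rank approximation sketch on a zero-padded instance can be turned into an ordinary low rank approximation sketch on the small hard instance without increasing the number of measurements.

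First I would let $\tilde\bfA\in\mathbb R^{s\times s}$ be drawn from a hard distribution witnessing this $\Omega(sk/\eps)$ lower bound; by an infinitesimal perturbation if needed we may additionally assume the singular values of $\tilde\bfA$ are all distinct, so that $\tilde\bfA_k$ is the unique best rank $k$ approximation and, coming from the SVD, has orthogonal singular factors. I would then embed $\tilde\bfA$ in the top-left $s\times s$ block of an $n\times d$ matrix $\bfA$, setting every other entry to zero. For any $\bfB\in\mathcal S_{s,k}$, entries of $\bfB$ outside this block contribute nonnegatively to $\norm*{\bfA-\bfB}_F^2$ without helping approximate $\tilde\bfA$, so a straightforward exchange argument shows the minimizer is supported inside the top-left $s\times s$ block. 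Restricted to that block, the minimizer is the best rank $k$ approximation of $\tilde\bfA$, which by distinct singular values is uniquely $\tilde\bfA_k$ with orthogonal factors, matching the hypothesis of the corollary.

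For the reduction itself, suppose a sketching algorithm for sparse low rank approximation uses $m$ measurements via a sketching matrix $\bfS\in\mathbb R^{m\times(nd)}$. Because $\bfA$ is zero outside the top-left block, the sketch $\bfS\cdot\vc(\bfA)$ depends only on the $m\times s^2$ columns of $\bfS$ indexed by that block, and so is computable from a valid $m$-measurement sketch of $\tilde\bfA$. If the algorithm outputs $\bfB\in\mathcal S_{s,k}$, then $\bfB' := \bfB_{[s]\times[s]}$ has rank at most $k$ and satisfies
\[
  \norm*{\tilde\bfA - \bfB'}_F^2 \leq \norm*{\bfA - \bfB}_F^2 \leq (1+\eps)\min_{\text{rank $k$ $\bfC\in\mathbb R^{n\times d}$}}\norm*{\bfA-\bfC}_F^2 = (1+\eps)\norm*{\tilde\bfA - \tilde\bfA_k}_F^2,
\]
since restricting any rank $k$ matrix $\bfC\in\mathbb R^{n\times d}$ to the top-left block does not increase its Frobenius error against the zero-padded $\bfA$. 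Hence $\bfB'$ is a $(1+\eps)$-relative error rank $k$ approximation of $\tilde\bfA$, and Theorem~\ref{lem:frob-lra-sketching-complexity} forces $m=\Omega(sk/\eps)$.

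The main point requiring care is matching the slightly unusual uniqueness and orthogonality hypothesis of the corollary, but this is handled essentially for free by placing the singular values of $\tilde\bfA$ in generic position. Beyond that the reduction is immediate, since the zero padding lets any sketch on the large matrix be simulated by a sketch on the small one with the same number of measurements.
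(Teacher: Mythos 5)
Your proof is correct and matches the paper's intended approach: the paper treats this corollary as a "relatively straightforward reduction" from Theorem~\ref{lem:frob-lra-sketching-complexity} without spelling out details, and your zero-padding embedding together with the restriction argument is exactly that reduction. The additional care you take with the uniqueness and orthogonality hypothesis (via generic-position singular values on the hard instance) is a reasonable and correct way to discharge the promise in the corollary's statement.
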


Importantly, this lower bound holds even when there is a unique optimal solution that is promised to lie on an $s\times s$ submatrix, and the sparse low rank factors are orthogonal. Note that none of these properties is guaranteed for the general problem. In fact, given the promise that an optimal solution exists on an $s\times s$ submatrix, our additive error polynomial time algorithm improves to a measurement bound of $sk\cdot \poly(\log n,\eps^{-1})$, which has a tight dependence on $s$ and $k$. 

Despite this natural lower bound, tighter lower bounds can be obtained from reductions from work in the sparse recovery literature when the optimal solution need not be supported on an $s\times s$ submatrix. In the sparse recovery problem, one is given a hidden vector $\bfx\in\mathbb R^{n}$, and the goal is to output an $s$-sparse vector $\hat\bfx$ such that
\[
    \norm*{\bfx - \hat\bfx}_2^2 \leq (1+\eps)\min_{\text{$s$-sparse $\bfx'$}}\norm*{\bfx - \bfx'}_2^2,
\]
given linear measurements of the vector. The following lower bounds are known for this problem:

\begin{theorem}[\cite{DBLP:conf/focs/PriceW11}]\label{thm:sparse-recovery-sketching-complexity}
    Let $\bfx\in\mathbb R^n$. Then, any randomized sketching algorithm which outputs an $\hat\bfx$ such that
    \[
        \norm*{\bfx - \hat\bfx}_2^2 \leq (1+\eps)\min_{\text{$s$-sparse $\bfx'$}}\norm*{\bfx - \bfx'}_2^2
    \]
    with probability at least $2/3$ requires $\Omega(s(\log(n/s))/\eps)$ measurements. Furthermore, if the algorithm guarantees that $\hat\bfx$ is $s$-sparse, then $\Omega(s/\eps^2)$ measurements are required. 
\end{theorem}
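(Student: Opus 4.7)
Both lower bounds follow from information-theoretic reductions: I would exhibit a hard distribution over inputs $\bfx$ on which any sketching algorithm meeting the approximation guarantee must transmit many bits about a hidden message, and then lower bound the number of linear measurements via Yao's minimax principle together with a bound on the information a short sketch $\bfS\bfx$ can convey. The plan is to base the first part on a reduction from augmented indexing and the second part on Gaussian channel capacity, which is the standard toolkit for sparse recovery lower bounds.

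For the $\Omega(s \log(n/s)/\eps)$ bound, I would partition $[n]$ into $s$ blocks of size $n/s$, and within each block plant a geometric cascade of spikes whose locations encode a segment of a hidden bit string of total length $\Theta(s \log(n/s)/\eps)$. The levels of the cascade have magnitudes $(1+\eps)^{-j/2}$, and the entire vector is padded with a noise floor calibrated to the smallest spike. The key structural claim is that a $(1+\eps)$-approximate $\ell_2/\ell_2$ recovery must identify the largest surviving spike at each depth of the cascade, because missing a spike incurs error on the same scale as the tail from that depth onward. A direct-sum argument over the $s$ blocks, together with the known $\Omega(\log(n/s)/\eps)$ per-block augmented indexing lower bound, then gives the desired bound.

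For the $\Omega(s/\eps^2)$ bound under the additional $s$-sparse output constraint, I would consider the input distribution $\bfx = \sum_{j=1}^s \bfe_{i_j} + \sigma \bfg$ where $\{i_j\}$ are uniform in $s$ disjoint blocks, $\bfg \sim \mathcal N(0, \bfI_n)$, and $\sigma^2 = \Theta(\eps)$ is calibrated so that the tail norm is $\Theta(1)$. Any $s$-sparse $(1+\eps)$-approximation $\hat\bfx$ must recover the support $\{i_j\}$ exactly, since placing mass on a non-spike coordinate costs strictly more than $1+\eps$ times the noise-floor optimum. Transmitting $\bfS\bfx = \bfS\sum_j \bfe_{i_j} + \sigma \bfS\bfg$ is equivalent to communicating $\{i_j\}$ through an $m$-use Gaussian channel; Shannon's capacity theorem bounds the mutual information at $O(m \log(1 + 1/\sigma^2)) = O(m/\eps)$ bits in the low-SNR regime, while $\Omega(s)$ bits (taking $n/s$ a sufficiently large constant) are required to identify the support. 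Balancing yields $m = \Omega(s/\eps^2)$.

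The main obstacle in both arguments is calibrating the planted signal against the noise so that the $(1+\eps)$-approximation guarantee is exactly tight enough to force support recovery at the correct level of granularity: if the signal is too strong, the problem becomes easy and the bound weakens; if too weak, even the optimal solution does not identify the plant. One must also pass from randomized sketches against worst-case inputs to deterministic sketches against the hard distribution via Yao's principle, and verify the channel-capacity bound under the actual distribution of $\bfS\bfx$, handling dependence across rows of $\bfS$, rather than an idealized memoryless Gaussian channel.
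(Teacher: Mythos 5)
This theorem is cited from \cite{DBLP:conf/focs/PriceW11} and is not proved in the paper, so the only meaningful comparison is against the original Price--Woodruff argument; the closest analogue inside this paper is the Gaussian-channel lower bound the authors adapt from the same source in the proof of Theorem \ref{thm:gaussian-noise-est-lb}, which is a good template for what the calibration should look like.

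Your sketch of the $\Omega(s/\eps^2)$ bound has a genuine calibration error. You set $\sigma^2 = \Theta(\eps)$ and invoke a per-measurement information bound of $O(m\log(1 + 1/\sigma^2))$, which you then equate to $O(m/\eps)$. First, $\log(1 + 1/\eps) = \Theta(\log(1/\eps))$, not $\Theta(1/\eps)$, so even taking your noise level at face value the capacity bound you'd get is $O(m\log(1/\eps))$, which combined with an $\Omega(s)$-bit requirement gives only $m = \Omega(s/\log(1/\eps))$. Second, and more importantly, the sign of the exponent is wrong: the channel must be in the \emph{low}-SNR regime, i.e., the per-measurement signal-to-noise ratio should be $\rho = \Theta(\eps^2)$ so that $\log(1+\rho) = \Theta(\eps^2)$ and each measurement leaks only $\Theta(\eps^2)$ bits about the hidden message. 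Then $m\eps^2 = \Omega(s)$ gives $m = \Omega(s/\eps^2)$. This matches the pattern of the paper's own invocation (Theorem \ref{thm:gaussian-noise-est-lb}), where the noise-to-signal ratio is $\alpha = n$ and the information bound is $O(m\log(1 + 1/\alpha)) = O(m/n)$. The crux of making this work — which your last paragraph gestures at as an ``obstacle'' but does not resolve — is that a $(1+\eps)$-approximate $s$-sparse output pays an $\Omega(1)$ penalty per support error while the tail has mass $\Theta(1/\eps)$, so a successful recovery forces identification of $\Omega(s)$ bits even though each individual measurement is badly corrupted.

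For the $\Omega(s\log(n/s)/\eps)$ bound, your route via augmented indexing and a geometric cascade of spike magnitudes is not the argument Price--Woodruff give; they again run a Gaussian-channel information argument with a support code of log-size $\Theta(s\log(n/s))$ and an SNR of $\Theta(\eps)$, so that $m\eps = \Omega(s\log(n/s))$. The cascade idea is borrowed from a different family of lower bounds (e.g., $\ell_p$-estimation), and while it is plausible that a direct-sum over blocks with per-block augmented indexing could be made to work, you would need to supply the direct-sum argument and verify the claimed $\Omega(\log(n/s)/\eps)$ per-block bound; as written it is a different and unproven route, not a reconstruction of the cited proof. Finally, the dependence-across-rows issue you flag is handled in Price--Woodruff (and in this paper's Theorem \ref{thm:gaussian-noise-est-lb}) by assuming without loss of generality that $\bfS$ has orthonormal rows, which makes the noise coordinates of $\bfS\bfg$ i.i.d.\ and the channel memoryless; you should say this explicitly rather than leave it as an open worry.
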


Note that an $sk$-sparse vector can be written as the sum of $k$ many $s$-sparse vectors. This simple observation immediately gives the following:

\begin{corollary}\label{cor:sketching-lb-fslra}
    Let $\bfA\in\mathbb R^{n\times d}$. Then, any randomized sketching algorithm which outputs a $\hat\bfB$ such that
    \[
        \norm*{\bfA - \hat\bfB}_F^2 \leq (1+\eps)\min_{\bfC\in\mathcal S_{s,k}}\norm*{\bfA - \bfC}_2^2
    \]
    with probability at least $2/3$ requires $\Omega(sk(\log(n/sk))/\eps)$ measurements. Furthermore, if the algorithm guarantees that $\hat\bfB$ is in $\mathcal S_{s,k}$, then $\Omega(sk/\eps^2)$ measurements are required.
\end{corollary}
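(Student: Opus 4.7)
The plan is to reduce from the sparse recovery lower bound of Theorem \ref{thm:sparse-recovery-sketching-complexity}, following the hint given just before the corollary that an $sk$-sparse vector is the sum of $k$ many $s$-sparse vectors. The one subtlety is that this direction of the reduction (sparse vector $\to$ sparse low rank matrix, rather than the other way) is what yields a lower bound on the matrix problem.

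First I would embed an arbitrary sparse recovery instance $\bfa \in \mathbb R^n$ as the $1 \times n$ matrix $\bfA$ whose single row is $\bfa^\top$. Then I would verify the key identity
\[
    \min_{\bfC \in \mathcal S_{s,k}} \norm*{\bfA - \bfC}_F^2 = \min_{\text{$sk$-sparse } \bfc \in \mathbb R^n} \norm*{\bfa - \bfc}_2^2.
\]
The $\leq$ direction follows because any $\bfC = \sum_{i=1}^k \tau_i \bfx_i \bfy_i^\top \in \mathcal S_{s,k}$ with $\bfA$ being $1 \times n$ has $\bfx_i \in \mathbb R$, so $\bfC$ is (as a row vector) the sum of $k$ many $s$-sparse vectors, and thus $sk$-sparse. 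The $\geq$ direction follows because any $sk$-sparse vector $\bfc$ can be partitioned into (at most) $k$ disjoint $s$-sparse pieces, each of which, after normalization, is a rank $1$ term $\tau_i \bfx_i \bfy_i^\top$ with $\bfx_i = 1 \in \mathbb R$ and $\bfy_i$ an $s$-sparse unit vector. Note also that for a $1 \times n$ matrix, the Frobenius and spectral norms agree with the $\ell_2$ norm of the underlying vector, so the norm mismatch in the corollary statement is immaterial in this reduction.

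Next I would observe that a sketching algorithm for the sparse low rank approximation problem with sketch dimension $m$ immediately yields a sketching algorithm for $sk$-sparse recovery with the same sketch dimension: maintaining a linear sketch of the $1 \times n$ matrix $\bfA$ is the same as maintaining a linear sketch of the vector $\bfa$, and post-processing the sketch to output $\hat\bfB$ yields a vector $\hat\bfb \in \mathbb R^n$ satisfying
\[
    \norm*{\bfa - \hat\bfb}_2^2 = \norm*{\bfA - \hat\bfB}_F^2 \leq (1+\eps) \min_{\bfC \in \mathcal S_{s,k}} \norm*{\bfA - \bfC}_F^2 = (1+\eps) \min_{\text{$sk$-sparse } \bfc} \norm*{\bfa - \bfc}_2^2.
\]
Moreover, if $\hat\bfB \in \mathcal S_{s,k}$, then by the $\leq$ direction above, $\hat\bfb$ is $sk$-sparse.

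Finally, I would invoke the two parts of Theorem \ref{thm:sparse-recovery-sketching-complexity} with sparsity parameter $sk$: the general lower bound gives $\Omega(sk \log(n/sk)/\eps)$ measurements, and the sparse-output lower bound gives $\Omega(sk/\eps^2)$ measurements when $\hat\bfB \in \mathcal S_{s,k}$. There is no real obstacle here; the only thing to be careful about is checking that the reduction preserves the approximation guarantee and the output sparsity exactly, which the above computation does.
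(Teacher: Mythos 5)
Your reduction is correct and is exactly the route the paper intends: the one-sentence remark preceding the corollary ("an $sk$-sparse vector can be written as the sum of $k$ many $s$-sparse vectors") is the whole of the paper's proof, and you have spelled it out correctly, including the observation that for a single-row matrix the Frobenius/spectral discrepancy in the corollary's statement is immaterial. The only cosmetic point is that Definition~\ref{def:sxs-sparse-rank-k-matrix} imposes $k\leq n$, so strictly speaking one should embed $\bfa$ as the first row of an $n\times n$ matrix (padding with zero rows) rather than as a $1\times n$ matrix; this changes nothing in the argument since the optimal $\bfC\in\mathcal S_{s,k}$ still restricts to an $sk$-sparse first row.
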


Thus, in the general case, both the $\log n$ factor as well as the quadratic dependence on $\eps$ are required. Table \ref{tab:frob-norm-slra-lb} summarizes the above discussion:

\begin{table}[h]
    \centering
    \begin{tabular}{ccc}
        Theorem & Lower Bound & Notes \\
        \hline
        Corollary \ref{cor:sketching-lb-flra-lra} & $sk/\eps$ & Optimal solution supported on $s\times s$ submatrix \\
        Corollary \ref{cor:sketching-lb-fslra} & $sk(\log(n/sk))/\eps$ & Non-sparse output allowed  \\
        Corollary \ref{cor:sketching-lb-fslra} & $sk/\eps^2$ & Output required to be in $\mathcal S_{s,k}$ \\
    \end{tabular}
    \caption{Measurement lower bounds for Frobenius norm sparse low rank approximation.}
    \label{tab:frob-norm-slra-lb}
\end{table}

Our full discussion can be found in Section \ref{sec:slra}. 

\subsection{Gaussian Noise Spectral Sparse Low Rank Approximation}

As a final contribution, we study the sketching complexity of sparse low rank approximation in the common setting where the input matrix $\bfA\in\mathbb R^{n\times n}$ is the sum of a low rank signal and additive Gaussian noise. That is,
\[
    \bfA = \lambda \bfX + \bfG
\]
where $\bfG_{i,j}\sim\mathcal N(0,1)$, $\norm*{\bfX}_2 = 1$, and $\lambda$ is roughly $O(\sqrt n)$. This ``low rank signal plus Gaussian noise'' setting is ubiquitous in both theory and in practice, in areas such as matrix denoising (see, e.g., \cite{DBLP:journals/ma/ShabalinN13, DBLP:journals/tsp/CandesST13} and references therein) which has applications to image denoising. Here, the normalization is such that two parts $\bfG$ and $\lambda \bfX$ have roughly the same operator norm. Two common computational problems in this setting are 
\begin{enumerate}[label={(\arabic*)}]
    \item detection, the problem of distinguishing between the above distribution and $\bfA \sim \mathcal N(0,1)^{n\times n}$, and
    \item estimation, the problem of computing an estimate $\hat\bfX$ such that $\norm*{\hat\bfX - \bfX}_2$ is small. 
\end{enumerate}
We study these two problems in the streaming setting, where the main resource of study is the number of measurements required for the computational task. 

It is known that without the sparsity requirement for $\bfX$, $\Omega(n^2)$ measurements are required to even detect the presence of a rank $1$ $\bfX$ \cite[Corollary 3.3]{DBLP:journals/siamcomp/LiNW19}, that is, one must essentially read every entry of the matrix. 

\begin{theorem}[\cite{DBLP:journals/siamcomp/LiNW19}]\label{thm:operator-norm-lb}
    Let $\bfA\in\mathbb R^{n\times n}$ be drawn as either $\bfG\sim\mathcal N(0,1)^{n\times n}$ or $\bfG + \lambda\bfu\bfv^\top$ for $\bfu\sim\mathcal N(0,\bfI_n)$ and $\bfv\sim\mathcal N(0,\bfI_d)$ drawn independently and $\lambda = \Theta(1/\sqrt n)$. Then, any randomized sketching algorithm that distinguishes between these two cases must make at least $\Omega(n^2)$ measurements. 
\end{theorem}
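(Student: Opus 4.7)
The plan is to prove the lower bound via the standard linear-sketch reduction combined with a chi-squared divergence calculation. First, invoking the reduction of \cite{DBLP:conf/stoc/LiNW14}, any turnstile sketching algorithm using $m$ measurements can be implemented as a linear sketch, and by a change of basis at postprocessing time we may further assume the sketching matrix $\bfS \in \mathbb R^{m \times n^2}$ has orthonormal rows. The task reduces to showing that the total variation distance between the distribution $\mathcal D_0$ of $\bfS\vc(\bfG)$ and the distribution $\mathcal D_1$ of $\bfS\vc(\bfG + \lambda \bfu\bfv^\top)$ is $o(1)$ whenever $m = o(n^2)$; by Le Cam's two-point method this forces $m = \Omega(n^2)$ for any algorithm succeeding with probability at least $2/3$.

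Under $\mathcal D_0$ the sketch is $\mathcal N(0, \bfI_m)$, while conditional on $(\bfu, \bfv)$ the sketch under $\mathcal D_1$ is $\mathcal N(0, \bfI_m + \lambda^2 \bfw\bfw^\top)$, where $\bfw := \bfS\vc(\bfu\bfv^\top)$, making $\mathcal D_1$ a Gaussian mixture of rank-one perturbations. I would control TV via the chi-squared method. Letting $(\bfu', \bfv')$ be an independent copy of $(\bfu, \bfv)$ and $\bfw' := \bfS\vc(\bfu'\bfv'^\top)$, the Ingster--Suslina identity gives a closed form for $\chi^2(\mathcal D_1 \,\|\, \mathcal D_0)$ that, after simplification using the rank-one structure, reduces to bounding an expectation of the form
\[
    \mathbb E\!\left[\bigl(1 - \lambda^4 \langle \bfw, \bfw'\rangle^2\bigr)^{-1/2}\right] - 1.
\]
Writing $\bfP := \bfS^\top \bfS$ for the projection onto the row span of $\bfS$, we have $\langle \bfw, \bfw'\rangle = \vc(\bfu\bfv^\top)^\top \bfP \vc(\bfu'\bfv'^\top)$, a bilinear form that is the natural object to analyze.

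The main obstacle is bounding the moments $\mathbb E[\langle \bfw,\bfw'\rangle^{2k}]$ uniformly in $k$. The second moment is clean: using that the entries of $\vc(\bfu\bfv^\top)$ are uncorrelated with unit variance and that $\bfP^2 = \bfP$, one computes $\mathbb E[\langle \bfw,\bfw'\rangle^2] = \operatorname{tr}(\bfP^2) = m$. Since $\lambda^4 = \Theta(1/n^2)$, the leading chi-squared term scales as $\lambda^4 m = \Theta(m/n^2)$, which is $o(1)$ precisely when $m = o(n^2)$. For the higher moments, I would condition on $(\bfu', \bfv')$, reducing $\langle \bfw, \bfw'\rangle$ to a bilinear Gaussian form in $(\bfu, \bfv)$ whose Hilbert--Schmidt norm equals $\|\bfw'\|_2$; Hanson--Wright then gives sub-exponential concentration, and because $\|\bfw'\|_2^2$ itself concentrates at $m$ by another application of Hanson--Wright, truncating at the event $\{\|\bfw'\|_2^2 = O(m)\}$ and using a crude bound on its complement closes the argument. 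Pinsker's inequality then converts the resulting chi-squared bound into the desired total variation bound, completing the reduction and hence the lower bound.
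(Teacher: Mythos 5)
The paper does not reprove this theorem; it cites it from \cite{DBLP:journals/siamcomp/LiNW19}, and the closest in-paper argument is Lemma \ref{lem:gaussian-detection-lb}, which generalizes it to $s$-sparse rank-$k$ signals via the same LNW19 machinery. Your high-level plan (reduce to a linear sketch with orthonormal rows, bound the chi-squared divergence between the mixture sketch and the null, truncate on a good event to control higher moments, and observe the leading term is $\lambda^4 m = \Theta(m/n^2)$) is the correct strategy, and your clean second-moment computation $\E\langle\bfw,\bfw'\rangle^2 = \tr(\bfP^2) = m$ is exactly the right calculation.

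The concrete gap is in the likelihood model. Conditional on $(\bfu,\bfv)$, the sketch $\bfS\vc(\bfG + \lambda\bfu\bfv^\top) = \bfS\vc(\bfG) + \lambda\bfw$ is a \emph{mean shift} $\mathcal N(\lambda\bfw, \bfI_m)$, not a covariance spike $\mathcal N(0, \bfI_m + \lambda^2\bfw\bfw^\top)$; the perturbation $\lambda\bfu\bfv^\top$ is a deterministic offset once you condition, and it does not rotate or inflate the noise covariance. Consequently the Ingster--Suslina identity you should be invoking is the mean-shift form
\[
    \chi^2(\mathcal D_1\,\|\,\mathcal D_0) + 1 \;=\; \E_{(\bfu,\bfv),(\bfu',\bfv')}\exp\bigl(\lambda^2\langle\bfw,\bfw'\rangle\bigr),
\]
not $\E\bigl[(1-\lambda^4\langle\bfw,\bfw'\rangle^2)^{-1/2}\bigr]$. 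A factor of the form $\prod_i(1-\sigma_i^2(\cdot))^{-1/2}$ does appear, but only \emph{after} you condition on $(\bfu',\bfv')$ and evaluate the bilinear Gaussian MGF $\E_{\bfu,\bfv}\exp(\bfu^\top\bfQ\bfv) = \prod_i(1-\sigma_i^2(\bfQ))^{-1/2}$, where $\vc(\bfQ) = \lambda^2\bfP\vc(\bfu'\bfv'^\top)$ has $\norm{\bfQ}_F^2 = \lambda^4\norm{\bfw'}_2^2$ --- a marginal quantity, not the cross term $\langle\bfw,\bfw'\rangle$. Correspondingly, the truncation (which the paper implements as the event $\mathcal E = \{\xi < 1/2\}$) must control $\lambda^4\norm{\bfw'}_2^2 < 1$ so that this MGF is finite, rather than controlling the cross term as you propose. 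Replacing the covariance-perturbation picture with the mean-shift picture, invoking the bilinear MGF identity, and truncating on $\lambda^4\norm{\bfw'}_2^2$ (which concentrates near $\lambda^4 m$) repairs the argument and brings it in line with the actual proof; your Hanson--Wright fallback would also work here but is heavier than needed given the closed-form MGF.
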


As in the previous setting of Frobenius norm low rank approximation, we ask whether this bound can be improved when we assume that $\bfX$ is $s\times s$-sparse rank $k$. In this setting, we further restrict our attention to $\bfX$ with disjoint support, in order to obtain space-efficient algorithms.

\begin{Definition}[Disjoint $s\times s$-sparse rank $k$ matrix]
    An $s\times s$-sparse rank $k$ matrix $\bfB\in\mathcal S_{s,k}$ for
    \[
        \bfB = \sum_{i=1}^k \tau_i \bfx_i\bfy_i^\top
    \]
    is disjoint if the support of the component matrices $\bfx_i\bfy_i^\top$ are pairwise disjoint. When the ambient dimensions $n$ and $d$ are clear from the context, we write $\mathcal O_{s,k}\subseteq\mathbb R^{n\times d}$ for the set of disjoint $s\times s$-sparse rank $k$ matrices. 
\end{Definition}

It is often desirable for the components of a sparse low rank approximation to be orthogonal \cite{DBLP:conf/nips/Mackey08,DBLP:journals/tsp/BenidisSBP16}, and one could view this assumption as a strengthening of this notion.

For the problem of detecting the ``signal'' $\bfX$, we obtain a lower bound of
\[
    \tilde\Omega(n + s^2 k)
\]
measurements, where the first term comes from a $4$-norm estimation lower bound of \cite{DBLP:conf/focs/AndoniKO11}, while the second term is achieved by a generalization of the construction of Theorem \ref{thm:operator-norm-lb}. We complement these lower bounds with algorithms achieving a measurement bound of
\[
    \tilde O(n + s^2 k^{4/3})
\]
which is off by a factor of $k^{1/3}$ compared to the lower bound. The $\tilde O(n)$ term follows also from a $4$-norm estimation algorithm, while the $\tilde O(s^2 k^{4/3})$ term comes from an analysis of an algorithm which samples a submatrix and iterates over a net to detect the signal. 

For the problem of estimation of $\bfX$, we show that, $\Theta(nsk\log(n/s))$ measurements is tight for obtaining an $\bfX'$ such that $\norm*{\bfX-\bfX'}_2 \leq \frac1{10}\sqrt n$. The upper bound again is a net iteration argument, while the lower bound is a modification of sparse recovery lower bounds of \cite{DBLP:conf/focs/PriceW11}. 

\begin{table}[h]
    \centering
    \begin{tabular}{ccc}
        Problem & Upper Bound & Lower Bound \\
        \hline
        Detection ($s\leq \sqrt{n/k\log n}$) & $\tilde O(n)$ & $\tilde\Omega(n)$ \cite{DBLP:conf/icalp/AndoniNPW13} \\
        Detection ($s\geq \sqrt{n/k\log n}$) & $\tilde O(s^2k^{4/3})$ & $\Omega(s^2k)$  \\
        Estimation & $\tilde O(nsk)$ & $\tilde \Omega(nsk)$ \cite{DBLP:conf/focs/PriceW11}
    \end{tabular}
    \caption{Sketching complexity of Gaussian noise sparse low rank approximation.}
    \label{tab:sketching-complexity-sparse-pca}
\end{table}

Our full discussion can be found in Section \ref{sec:gaussian-noise}. 

\subsection{Other Related Work}\label{sec:related-work}

\paragraph{Sparse PCA.}

The closely related problem of sparse PCA also seeks sparse low rank approximations and has been studied intensely in a variety of settings. In this problem, one seeks only a sparse right factor, that is, an $s$-sparse unit vector $\bfx$ that maximizes the quantity $\bfx^\top\bfA^\top\bfA\bfx$. We refer to \cite{ning2015literature, DBLP:journals/pieee/ZouX18} for surveys on various formulations, algorithms, and hardness results on sparse PCA. Note the distinction between this problem and our problem, which seeks to minimize the reconstruction error with both sparse left and right factors. From the perspective of computational complexity, it is known that sparse PCA is NP-hard even to approximate \cite{DBLP:conf/icml/MoghaddamWA06, DBLP:journals/ipl/Magdon-Ismail17}, with improved inapproximability results under other complexity theoretic assumptions \cite{DBLP:conf/colt/ChanPR16}. Sparse PCA has also recently been studied from an average-case complexity perspective under the natural input distribution of the form ``signal plus noise'', see e.g., \cite{DBLP:conf/colt/BrennanB19}. 

\subsection{Notation}

We denote the $i$th standard basis vector $(0, 0, \dots, 0, 1, 0, \dots, 0)$ by $\bfe_i$. For a matrix $\bfA\in\mathbb R^{n\times d}$, we denote the number of nonzero entries in $\bfA$ by $\nnz(\bfA)$, the spectral norm by
\[
    \norm*{\bfA}_2 = \max_{\bfx\neq 0}\frac{\norm*{\bfA\bfx}_2}{\norm*{\bfx}_2},
\]
and the Frobenius norm by
\[
    \norm*{\bfA}_F = \sqrt{\sum_{i=1}^n\sum_{j=1}^d (\bfA_{i,j})^2}.
\] 

We will often use coordinate projection and sampling matrices:

\begin{Definition}\label{def:coordinate-projection-and-sampling-matrices}
For a set $S\subseteq[n]$, we write $\bfP_S$ for the $n\times n$ \emph{coordinate projection matrix} that zeros out all rows except for those indexed by $S$, i.e., 
\[
    (\bfP_S)_{i,j} = \begin{cases}
        1 & \text{if $i = j \in S$} \\
        0 & \text{otherwise}
    \end{cases}
\]
and we write $\bfS_S$ for the $\abs{S}\times n$ \emph{sampling matrix} for $S$, i.e.,
\[
    (\bfS_S)_{i,j} = \begin{cases}
        1 & \text{if $j \in S$ is the $i$th element of $S$} \\
        0 & \text{otherwise}
    \end{cases}
\]
\end{Definition}

For a matrix $\bfA\in\mathbb R^{n\times d}$, we write 
\[
    \bfA = \bfU\bfSigma\bfV^\top = \sum_{i=1}^{\rank(\bfA)} \sigma_i(\bfA)\bfu_i\bfv_i^\top
\]
for its singular value decomposition (SVD), where $\bfU$ and $\bfV$ have orthonormal columns, $\bfSigma$ is a diagonal matrix with $\sigma_i(\bfA)$ along its diagonal, and $\bfu_i$ and $\bfv_i$ are the $i$th columns of $\bfU$ and $\bfV$, respectively. For $k\in[\rank(\bfA)]$, we write $\bfA_k = \bfU_k\bfSigma_k\bfV_k^\top$ for the truncated SVD of $\bfA$. Here, $\bfU_k$ is an $n\times k$ matrix with the first $k$ $\bfu_i$ as its columns, $\bfV_k$ is a $d\times k$ matrix with the first $k$ $\bfv_i$ as its columns, and $\bfSigma_k$ is the $k\times k$ diagonal matrix with the first $k$ $\sigma_i(\bfA)$ along the diagonal. 

By now, extremely efficient approximation algorithms for the SVD are known \cite{DBLP:conf/stoc/ClarksonW13,DBLP:conf/nips/MuscoM15,DBLP:conf/nips/ZhuL16}. In particular, \cite{DBLP:conf/nips/MuscoM15} show the following for relative error spectral norm SVD using the \emph{randomized block Krylov iteration} algorithm:

\begin{theorem}[Randomized block Krylov iteration \cite{DBLP:conf/nips/MuscoM15}]\label{thm:randomized-block-krylov}
    Let $\bfA\in\mathbb R^{n\times d}$. Then, there is an algorithm \cite[Algorithm 2]{DBLP:conf/nips/MuscoM15} running in time
    \[
        O\parens*{\nnz(\bfA)\frac{k\log d}{\sqrt\eps}} + n\poly(k,\log d,\eps^{-1})
    \]
    \cite[Theorem 7]{DBLP:conf/nips/MuscoM15} which outputs a matrix $\bfZ\in\mathbb R^{n\times k}$ with orthogonal columns with the following guarantee \cite[Theorem 10]{DBLP:conf/nips/MuscoM15}: 
    \[
        \norm*{\bfA - \bfZ\bfZ^\top\bfA}_2 \leq (1+\eps)\norm*{\bfA - \bfA_k}_2
    \]
\end{theorem}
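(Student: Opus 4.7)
The plan is to analyze the block Krylov subspace method. The algorithm draws a Gaussian matrix $\bfG\in\mathbb R^{d\times k}$, forms the Krylov block matrix
\[
    \bfK = [\bfA\bfG,\, (\bfA\bfA^\top)\bfA\bfG,\, \ldots,\, (\bfA\bfA^\top)^{q-1}\bfA\bfG]
\]
for $q = \Theta(\log(d)/\sqrt\eps)$, orthonormalizes its columns to obtain $\bfQ\in\mathbb R^{n\times kq}$, and outputs as $\bfZ$ the top $k$ left singular vectors of $\bfQ\bfQ^\top\bfA$. For the runtime, each block power step $\bfM\mapsto(\bfA\bfA^\top)\bfM$ on an $n\times k$ matrix costs $O(\nnz(\bfA)\cdot k)$ time, so building $\bfK$ costs $O(\nnz(\bfA)\cdot k\log(d)/\sqrt\eps)$. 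Orthogonalizing the tall-thin matrix $\bfK$ and performing the downstream truncated SVD involves matrices of inner dimension $kq$ and yields the claimed $n\cdot\poly(k,\log d,\eps^{-1})$ overhead.

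For correctness, the key structural fact is that the column span of $\bfQ$ contains $p(\bfA\bfA^\top)\bfA\bfG$ for every polynomial $p$ of degree $<q$. I would exhibit a specific $p$ that sharply separates the large singular values of $\bfA$ from $\sigma_{k+1}$. Concretely, taking a translated and rescaled Chebyshev polynomial of the first kind with threshold $\alpha = \sigma_{k+1}^2$ and degree $q = \Theta(\log(d/\delta)/\sqrt\eps)$, one gets $p$ satisfying $p(x)\ge 1$ for $x\ge(1+\eps)\alpha$ and $\abs{p(x)}\le \delta/\poly(d)$ on $[0,\alpha]$. This is where the improved $1/\sqrt\eps$ (rather than $1/\eps$) dependence comes from, via the classical fact that Chebyshev polynomials amplify a $(1+\eps)$ multiplicative gap by a polynomially large factor in degree only $O(1/\sqrt\eps)$. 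The Gaussian initialization $\bfG$ ensures, by standard Gaussian anti-concentration, that each top-$k$ right singular direction $\bfv_i$ of $\bfA$ has overlap $\norm*{\bfv_i^\top\bfG}_2 = \Omega(1/\sqrt k)$ with high probability, so these amplified directions dominate inside $p(\bfA\bfA^\top)\bfA\bfG$.

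The main obstacle is upgrading this coarse subspace statement into the per-vector spectral-norm guarantee. Singular values may cluster arbitrarily close to $\sigma_{k+1}$, so one cannot naively argue that a specific top-$k$ subspace is captured. Following the Musco-Musco strategy, I would split the spectrum at the threshold $(1+\eps)\sigma_{k+1}$: components above this threshold are captured to within relative error $\delta/\poly(d)$ by the polynomial separation, while any singular values lying in $[\sigma_{k+1},(1+\eps)\sigma_{k+1}]$ are automatically within the allowed slack regardless of whether they are captured by $\bfZ$. Combining these two regimes via the Eckart--Young--Mirsky identity $\norm*{\bfA-\bfZ\bfZ^\top\bfA}_2 = \norm*{(\bfI-\bfZ\bfZ^\top)\bfA}_2$ and bounding the action of $\bfI-\bfZ\bfZ^\top$ separately on the head and tail of the spectrum yields $\norm*{\bfA-\bfZ\bfZ^\top\bfA}_2 \le (1+\eps)\sigma_{k+1}(\bfA) = (1+\eps)\norm*{\bfA-\bfA_k}_2$, as required.
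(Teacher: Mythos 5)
Note that the paper does not prove this statement at all: Theorem \ref{thm:randomized-block-krylov} is imported as a black box from \cite{DBLP:conf/nips/MuscoM15}, with the algorithm, running time, and error guarantee each attributed to specific results (Algorithm 2, Theorem 7, Theorem 10) in that reference. There is therefore no internal proof to compare your sketch against. That said, your reconstruction of the Musco--Musco argument is faithful in its broad strokes: the orthogonalized Krylov block $\bfQ$ spans $p(\bfA\bfA^\top)\bfA\bfG$ for every polynomial $p$ of degree below $q$, a shifted Chebyshev polynomial achieves the required separation between $[0,\alpha]$ and $[(1+\eps)\alpha,\infty)$ with degree only $O(\log(d/\delta)/\sqrt\eps)$, and the spectral split at threshold $(1+\eps)\sigma_{k+1}$ is exactly the device used to handle singular values that cluster arbitrarily close to $\sigma_{k+1}$. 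One imprecision worth flagging: the quantity you actually need to control from the Gaussian start block is the smallest singular value of the $k\times k$ matrix $\bfV_k^\top\bfG$ (so that all $k$ top directions are \emph{simultaneously} well represented in the block), not a per-vector overlap, and the per-vector bound $\norm*{\bfv_i^\top\bfG}_2 = \Omega(1/\sqrt k)$ you wrote is loose in the wrong direction since that norm concentrates near $\sqrt k$; the correct claim is $\sigma_{\min}(\bfV_k^\top\bfG) = \Omega(1/\sqrt k)$ with high probability. This is a local slip, though, and the overall architecture of the argument is right.
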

\section{Preliminaries}\label{sec:prelim}

\subsection{\texorpdfstring{$\eps$}{epsilon}-Nets}

We will need the notion of $\eps$-nets for our analyses. See, e.g., Section 4.2 of \cite{vershynin2018high} for a standard reference. 

\begin{Definition}[$\eps$-net]
    For a set $K$, a subset $\mathcal N\subseteq K$ is an \emph{$\eps$-net} of $K$ if for every $\bfx\in K$ there exists an $\hat\bfx$ such that
    \[
        \norm*{\bfx - \hat\bfx}_2 \leq \eps.
    \]
\end{Definition}

For $K = \mathbb S^{d-1}$ the unit sphere in $d$ dimensions, there exist small $\eps$-nets. 

\begin{Lemma}[Corollary 4.2.13, \cite{vershynin2015estimation}]\label{lem:eps-net}
    Let $\eps\in(0,1)$. There exists an $\eps$-net $\mathcal N$ of $\mathbb S^{d-1}$ of size at most
    \[
        \abs*{\mathcal N} \leq \parens*{\frac3\eps}^d.
    \]
\end{Lemma}

Using this, we can construct an $\eps$-net for $\mathcal S_{s,k} \cap \mathbb S^{n\times d - 1}$, where $\mathbb S^{n\times d - 1}$ denotes the set of all $n\times d$ matrices with Frobenius norm $1$, provided that the coefficients $\tau_i$ in Definition \ref{def:sxs-sparse-rank-k-matrix} are bounded by $\poly(n)$.

\begin{corollary}[$\eps$-net for $\mathcal S_{s,k}$]\label{cor:sxs-sparse-rank-k-eps-net}
    Let $\eps\in(0,1)$. There exists an $\eps$-net $\mathcal N$ of $\mathcal S_{s,k}^{\mathrm{bounded}} \cap \mathbb S^{n\times d - 1}$ of size at most
    \[
        \abs*{\mathcal N} \leq \poly(nk/\eps)^{sk},
    \]
    where $\mathcal S_{s,k}^{\mathrm{bounded}}$ is the set of $s\times s$-sparse rank $k$ matrices with $\tau_i$ bounded by $\poly(n)$.
\end{corollary}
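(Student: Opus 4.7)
The plan is to build the $\eps$-net componentwise by discretizing the three ingredients that specify each rank-$1$ summand $\tau_i \bfx_i \bfy_i^\top$: the pair of supports $(\supp(\bfx_i), \supp(\bfy_i)) \in \binom{[n]}{s}\times\binom{[d]}{s}$, the values of the unit vectors $\bfx_i, \bfy_i$ on those supports (each an element of $\mathbb{S}^{s-1}$), and the scalar coefficient $\tau_i \in [-\poly(n), \poly(n)]$. For the supports, I enumerate all $\binom{n}{s}\binom{d}{s} \leq (nd)^s$ possibilities exactly. For each fixed choice of supports, I apply Lemma \ref{lem:eps-net} to obtain an $\eps'$-net of $\mathbb{S}^{s-1}$ of size at most $(3/\eps')^s$ for each of $\bfx_i$ and $\bfy_i$. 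For the coefficients, I use an $\eps''$-net of the interval $[-\poly(n),\poly(n)]$ of size $\poly(n)/\eps''$. Taking products over $i \in [k]$ and summing the $k$ rank-$1$ approximations gives a candidate element of $\mathcal{S}_{s,k}$ in the net.

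To verify the approximation property, I will use the factored triangle inequality
\[
\norm*{\tau_i\bfx_i\bfy_i^\top - \hat\tau_i\hat\bfx_i\hat\bfy_i^\top}_F \leq |\tau_i - \hat\tau_i| + |\hat\tau_i|\bigl(\norm*{\bfx_i - \hat\bfx_i}_2 + \norm*{\bfy_i - \hat\bfy_i}_2\bigr),
\]
where I match the supports of $\hat\bfx_i, \hat\bfy_i$ exactly to those of $\bfx_i, \bfy_i$, choose $\hat\bfx_i, \hat\bfy_i$ from the corresponding spherical $\eps'$-nets, and choose $\hat\tau_i$ from the scalar $\eps''$-net. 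Since $|\hat\tau_i| \leq \poly(n) + \eps''$, setting $\eps'' = \eps/(2k)$ and $\eps' = \eps/\poly(nk)$ (absorbing the $\poly(n)$ bound on $|\hat\tau_i|$) makes each componentwise error at most $\eps/k$, and summing over $k$ components and applying the triangle inequality in $\norm*{\cdot}_F$ yields total error at most $\eps$.

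Finally I count: taking the product of $\binom{n}{s}\binom{d}{s}$, the two spherical nets, and the scalar net, and raising to the $k$-th power, gives
\[
|\mathcal{N}| \leq \bigl((nd)^s\bigr)^k \cdot \bigl((3/\eps')^s\bigr)^{2k} \cdot (\poly(n)/\eps'')^k \leq \poly(nk/\eps)^{sk},
\]
where I use $sk \geq k$ to absorb the scalar net factor. The only mildly subtle step is the choice of $\eps'$: because $|\hat\tau_i|$ can be as large as $\poly(n)$, the spherical net must be refined by a $\poly(n)$ factor to compensate, but this only contributes $\log\poly(n) = O(\log n)$ extra bits and is still swallowed by the $\poly(nk/\eps)^{sk}$ bound. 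I expect this bookkeeping (getting the dependence on $k$, $s$, and $\poly(n)$ to combine cleanly into a single $\poly(nk/\eps)^{sk}$ expression) to be the main thing to watch rather than any genuine obstacle.
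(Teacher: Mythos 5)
Your proposal is correct and takes essentially the same route as the paper: discretize supports, take spherical nets on each $s$-dimensional sphere for $\bfx_i$ and $\bfy_i$, take a scalar net for $\tau_i$, refine the spherical net by a $\poly(n)$ factor to offset $|\hat\tau_i| \leq \poly(n)$, and combine by the triangle inequality across the $k$ summands. The paper packages the support-plus-spherical-net step as an intermediate ``$\eps$-net over rank-$1$ $s\times s$ sparse matrices,'' but the ingredients, the precision choices, and the final counting are the same as yours.
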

\begin{proof}
    We first construct an $\eps$-net $\mathcal N$ over rank $1$ $s\times s$ sparse matrices in the Frobenius norm and bound its size. Consider an $(\eps/2)$-net for $\mathbb S^{s-1}$, of size at most $(6/\eps)^s$ (see Lemma \ref{lem:eps-net}). Then for any rank $1$ $s\times s$ sparse matrix $\sigma\bfu\bfv^\top$, we may find vectors $\bfu'$ and $\bfv'$ in the net such that
    \begin{align*}
        \norm*{\bfu\bfv^\top - \bfu'(\bfv')^\top}_F &\leq \norm*{\bfu\bfv^\top - \bfu'\bfv^\top}_F + \norm*{\bfu'\bfv^\top - \bfu'(\bfv')^\top}_F \\
        &\leq \norm*{\bfu - \bfu'}_2\norm*{\bfv}_2 + \norm*{\bfu'}_2\norm*{\bfv-\bfv'}_2 \\
        &\leq \eps.
    \end{align*}
    Then we choose a sparsity pattern for this submatrix, which requires a net of total size at most
    \[
        \binom{n}{s}^{2}\parens*{\frac{6}{\eps}}^{2s} \leq \parens*{\frac{6en}{\eps s}}^{2s}
    \]

    Now suppose $\bfX\in\mathcal S_{s,k}$ with
    \[
        \bfX = \sum_{i=1}^k \tau_i \bfx_i\bfy_i^\top.
    \]
    We now approximate each component up to $\eps/k$ Frobenius norm error. Because $\tau_i \leq \poly(n)$, for each $i\in[k]$, we can find a $\tau_i'$ among at most $\poly(n)k/\eps$ candidates such that
    \[
        \abs*{\tau_i - \tau_i'} \leq \frac{\eps}{2k}.
    \]
    Furthermore, we find $\bfx_i'$ and $\bfy_i'$ in an $\eps/\poly(n)k$-net for $s\times s$-sparse rank $1$ matrices such that
    \[
        \norm*{\bfx_i'\bfy_i'^\top - \bfx_i\bfy_i^\top}_F \leq \frac{\eps}{\poly(n)k}.
    \]
    Then,
    \begin{align*}
        \norm*{\tau_i' \bfx_i'\bfy_i'^\top - \tau_i \bfx_i\bfy_i^\top}_F &\leq \norm*{\tau_i' \bfx_i'\bfy_i'^\top - \tau_i' \bfx_i\bfy_i^\top}_F + \norm*{\tau_i' \bfx_i\bfy_i^\top - \tau_i \bfx_i\bfy_i^\top}_F \\
        &\leq \abs*{\tau_i}\norm*{\bfx_i'\bfy_i'^\top - \bfx_i\bfy_i^\top}_F + \abs*{\tau_i - \tau_i'}\norm*{\bfx_i\bfy_i^\top}_F \\
        &\leq \frac{\eps}{2k} + \frac{\eps}{2k} \\
        &\leq \frac{\eps}{k}.
    \end{align*}
    Then summing over the $k$ components, we find an $\bfX'\in\mathcal S_{s,k}$ for
    \[
        \bfX' = \sum_{i=1}^k \tau_i' \bfx_i' \bfy_i'^\top
    \]
    that is at most a distance of $\eps$ to $\bfX$, by the triangle inequality. The total net size is
    \[
        \bracks*{\frac{\poly(n)k}{\eps}\parens*{\frac{\poly(n)k}{\eps s}}^{2s}}^k = \poly(nk/\eps)^{sk}. 
    \]
\end{proof}

\subsection{\textsf{CountSketch} \texorpdfstring{\cite{DBLP:journals/tcs/CharikarCF04}}{[CCF04]}} \label{sec:countsketch}

We collect some results on \textsf{CountSketch} \cite{DBLP:journals/tcs/CharikarCF04}, a versatile primitive for the design of sketching algorithms. 

\begin{Definition}[\textsf{CountSketch} \cite{DBLP:journals/tcs/CharikarCF04}]
    A \textsf{CountSketch} matrix is a distribution over $r\times n$ matrices $\bfS$ where samples are drawn as follows. Let $h:[n]\to[r]$ and $\sigma:[n]\to\{\pm1\}$ be random hash functions. Then,
    \[
        (\bfS)_{i,j} = \begin{cases}
            \sigma_j & \text{if $i = h(j)$} \\
            0 & \text{otherwise}
        \end{cases}.
    \]
    Equivalently,
    \[
        \bfS = \bfH\bfD
    \]
    where $\bfD$ is an $n\times n$ diagonal matrix with $\sigma(j)$ in the $j$th diagonal position for $j\in[n]$, and $\bfH$ is an $r\times n$ matrix where the $j$th column is the random standard basis vector $\bfe_{h(j)}$ for $j\in[n]$. 

    That is, $\bfS$ acts on a vector by multiplying each coordinate by a random sign and hashing it to one of $r$ buckets. 
\end{Definition}

\subsubsection{Tail Error Guarantee}

The first guarantee achieved by \textsf{CountSketch}, introduced by \cite{DBLP:journals/tcs/CharikarCF04}, is the \emph{tail error guarantee}.

\begin{Lemma}[Tail error guarantee \cite{DBLP:journals/tcs/CharikarCF04}]\label{lem:cs-tail-error}
    Let $\bfx\in\mathbb R^n$. Let $B\in\mathbb N$ and $\delta\in(0,1)$. Consider $r = O\parens*{\log\frac1\delta}$ independent copies of $O(B)\times n$ \textsf{CountSketch} matrices $\bfS^{(1)}, \bfS^{(2)}, \dots, \bfS^{(r)}$. Then, given
    \[
        \bfS^{(1)}\bfx, \bfS^{(2)}\bfx, \dots, \bfS^{(r)}\bfx,
    \]
    for each $i\in[n]$, one can recover an estimate $\hat\bfx_i$ such that
    \[
        \Pr\braces*{\abs*{\hat\bfx_i - \bfx_i}^2 \leq \frac1B \norm*{\bfx_{\overline{[B]}}}_2^2} \geq 1 - \delta,
    \]
    where $\bfx_{\overline{[B]}}$ denotes the vector obtained by zeroing out the $B$ coordinates of $\bfx$ with largest absolute value. 
\end{Lemma}

\subsubsection{Row-wise Approximation}

The tail error guarantee generalizes to the following guarantee for estimating the rows of a matrix \cite[Lemma 2.2]{DBLP:conf/stoc/MahabadiRWZ20}:

\begin{Lemma}[Row-wise approximation \cite{DBLP:conf/stoc/MahabadiRWZ20}]\label{lem:cs-row-wise-approximation}
    Let $\bfA\in\mathbb R^{n\times d}$, $\eps>0$, and $\delta\in(0,1)$. Consider $r = O(\log(\frac{1}{\delta}))$ independent copies of $O(\frac1\eps)\times n$ \textsf{CountSketch} matrices $\bfS^{(1)}, \bfS^{(2)}, \dots, \bfS^{(r)}$. Then, given
    \[
        \bfS^{(1)}\bfA, \bfS^{(2)}\bfA, \dots, \bfS^{(r)}\bfA,
    \]
    one can obtain an estimate $\hat\bfA\in\mathbb R^{n\times d}$ such that for each $i\in[n]$,
    \[
        \Pr\braces*{\norm*{\bfe_i^\top\hat\bfA - \bfe_i^\top\bfA}_2^2 \leq \eps\norm*{\bfA_{\overline{[1/\eps]},*}}_F^2} \geq 1 - \delta,
    \]
    where $\bfA_{\overline{[1/\eps]},*}$ denotes the matrix with the top $1/\eps$ rows with largest $\ell_2$ norm zeroed out.
\end{Lemma}

As a corollary, we obtain the following lemma for approximating row norms with an additive-multiplicative error guarantee.

\begin{corollary}[Row norm approximation]\label{lem:cs-row-norm-approximation}
    Let $\bfA\in\mathbb R^{n\times d}$, $\eps>0$, $\alpha>0$, and $\delta\in(0,1)$. Consider $r = O(\log(\frac{1}{\delta}))$ independent copies of $O(\frac1\eps)\times n$ \textsf{CountSketch} matrices $\bfS^{(1)}, \bfS^{(2)}, \dots, \bfS^{(r)}$ and an  $O(\frac1{\alpha^2}(\log\frac{n}{\delta})) \times d$ i.i.d.\ Gaussian matrix $\bfG$. Then, given
    \[
        \bfS^{(1)}\bfA\bfG^\top, \bfS^{(2)}\bfA\bfG^\top, \dots, \bfS^{(r)}\bfA\bfG^\top,
    \]
    one can obtain an estimate $\widehat{\bfA\bfG^\top}\in\mathbb R^{n\times d}$ such that for each $i\in[n]$,
    \[
        \abs*{\norm*{\bfe_i^\top\widehat{\bfA\bfG^\top}}_2 - \norm*{\bfe_i^\top\bfA}_2} \leq \alpha\norm*{\bfe_i^\top\bfA}_2 + \sqrt{(1+\alpha)\eps}\norm*{\bfA_{\overline{[1/\eps]},*}}_F
    \]
    where $\bfA_{\overline{[1/\eps]},*}$ denotes the matrix with the top $1/\eps$ rows with largest $\ell_2$ norm zeroed out.
\end{corollary}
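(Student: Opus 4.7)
The plan is to reduce Corollary~\ref{lem:cs-row-norm-approximation} to Lemma~\ref{lem:cs-row-wise-approximation} by treating the matrix $\bfA\bfG^\top$ as the object whose rows we approximate, and then using Johnson--Lindenstrauss to control the distortion between row norms of $\bfA\bfG^\top$ and row norms of $\bfA$. The role of $\bfG$ is purely dimensionality-reductive: it reduces $d$ columns down to $O((1/\alpha^2)\log(n/\delta))$ while nearly preserving every row norm of $\bfA$ in squared $\ell_2$, so that the \textsf{CountSketch} machinery can be deployed on the compressed $n \times O((1/\alpha^2)\log(n/\delta))$ matrix.

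First, I would fix the scaling of $\bfG$ so that, by standard Gaussian JL applied to the $n$ rows of $\bfA$ and a union bound over $i\in[n]$, with probability at least $1-\delta/2$ we have $\norm*{\bfe_i^\top\bfA\bfG^\top}_2^2 \in (1\pm\alpha)\norm*{\bfe_i^\top\bfA}_2^2$ for every $i$. Next, I would invoke Lemma~\ref{lem:cs-row-wise-approximation} with the matrix $\bfM := \bfA\bfG^\top$ in place of $\bfA$, and with failure probability $\delta/(2n)$ per row so that a union bound over $i$ yields, simultaneously for all $i$, an estimate $\widehat{\bfA\bfG^\top}$ with
\[
    \norm*{\bfe_i^\top\widehat{\bfA\bfG^\top} - \bfe_i^\top\bfA\bfG^\top}_2^2 \leq \eps\norm*{(\bfA\bfG^\top)_{\overline{[1/\eps]},*}}_F^2.
\]
The estimates $\widehat{\bfA\bfG^\top}$ depend only on $\bfS^{(1)}\bfA\bfG^\top,\dots,\bfS^{(r)}\bfA\bfG^\top$, as required.

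The main small technical point, which I would handle carefully, is replacing the tail norm of $\bfA\bfG^\top$ by the tail norm of $\bfA$ itself. Let $T\subseteq[n]$ be the indices of the top $1/\eps$ rows of $\bfA$ by $\ell_2$ norm. The greedy optimality of the top-$1/\eps$ selection gives $\norm*{(\bfA\bfG^\top)_{\overline{[1/\eps]},*}}_F^2 \leq \norm*{(\bfA\bfG^\top)_{\overline{T},*}}_F^2$, and then applying the JL bound termwise on $i\notin T$ yields
\[
    \norm*{(\bfA\bfG^\top)_{\overline{[1/\eps]},*}}_F^2 \leq (1+\alpha)\sum_{i\notin T}\norm*{\bfe_i^\top\bfA}_2^2 = (1+\alpha)\norm*{\bfA_{\overline{[1/\eps]},*}}_F^2.
\]
This ``rearrangement'' step is the only place where the additive-versus-multiplicative structure could be mishandled, so I would be explicit that we use the \emph{squared} JL bound here in order to end up with $\sqrt{(1+\alpha)\eps}$ rather than $(1+\alpha)\sqrt{\eps}$.

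Finally, to pass from row-vector approximation to row-norm approximation, I would apply the reverse triangle inequality twice: once between $\norm*{\bfe_i^\top\widehat{\bfA\bfG^\top}}_2$ and $\norm*{\bfe_i^\top\bfA\bfG^\top}_2$, bounded by the \textsf{CountSketch} tail error, and once between $\norm*{\bfe_i^\top\bfA\bfG^\top}_2$ and $\norm*{\bfe_i^\top\bfA}_2$, bounded by $(\sqrt{1+\alpha}-1)\norm*{\bfe_i^\top\bfA}_2 \leq \alpha\norm*{\bfe_i^\top\bfA}_2$ from the JL guarantee. Adding the two bounds and substituting the tail-norm inequality from the previous paragraph gives exactly the target inequality. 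A final union bound over the JL event and the \textsf{CountSketch} event, each of probability $\geq 1-\delta/2$, yields the claim with probability $1-\delta$ per row, which is what the statement requires.
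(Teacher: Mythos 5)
Your proposal follows the paper's proof essentially step for step: invoke Lemma~\ref{lem:cs-row-wise-approximation} on $\bfA\bfG^\top$, note $\bfG$ is a JL transform preserving all row norms of $\bfA$, transfer the tail norm of $\bfA\bfG^\top$ to that of $\bfA$ by replacing the optimal set of $1/\eps$ deleted rows with the heaviest rows of $\bfA$ (incurring the $(1+\alpha)$ factor), and then chain two reverse-triangle-inequality steps. The one small deviation is that you additionally take a union bound over rows in the \textsf{CountSketch} step by running Lemma~\ref{lem:cs-row-wise-approximation} with per-row failure probability $\delta/(2n)$; this gives a simultaneous guarantee but inflates $r$ to $O(\log(n/\delta))$, whereas the corollary claims $r = O(\log(1/\delta))$ and (like Lemma~\ref{lem:cs-row-wise-approximation}) only asserts a per-row probability-$\geq 1-\delta$ guarantee, so the extra union bound over $i$ is unnecessary — only the JL event needs to hold for all rows simultaneously, because it is used to bound the tail norm $\norm*{(\bfA\bfG^\top)_{\overline{[1/\eps]},*}}_F$.
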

\begin{proof}
    Lemma \ref{lem:cs-row-norm-approximation} provides us with an estimate $\widehat{\bfA\bfG^\top}$ such that
    \[
        \Pr\braces*{\norm*{\bfe_i^\top\widehat{\bfA\bfG^\top} - \bfe_i^\top\bfA\bfG^\top}_2^2 \leq \eps\norm*{(\bfA\bfG^\top)_{\overline{[1/\eps]},*}}_F^2} \geq 1 - \frac{\delta}{2}.
    \]
    Note that $\bfG$ is a Johnson-Lindenstrauss transform (see, e.g., \cite{DBLP:journals/rsa/DasguptaG03}) so that
    \[
        \Pr\braces*{\forall i\in[n], \norm*{\bfe_i^\top\bfA\bfG^\top}_2^2 = (1\pm\alpha)\norm*{\bfe_i^\top\bfA}_2^2} \geq 1 - \frac\delta2.
    \]
    By a union bound, both of these happen with probability at least $1 - \delta$. We now condition on this event.
    
    We first bound $\norm*{(\bfA\bfG^\top)_{\overline{[1/\eps]},*}}_F^2$. This quantity is at most the Frobenius norm of $\bfA\bfG^\top$ after deleting \emph{any} set of $1/\eps$ rows. By letting these $1/\eps$ rows be the $1/\eps$ heaviest rows of $\bfA$, say $S\subseteq[n]$, we have the bound
    \[
        \norm*{(\bfA\bfG^\top)_{\overline{[1/\eps]},*}}_F^2 \leq \sum_{i\in[n]\setminus S} \norm*{\bfe_i\bfA\bfG^\top}_2^2 \leq \sum_{i\in[n]\setminus S} (1+\alpha)\norm*{\bfe_i\bfA}_2^2 = (1+\alpha)\norm*{\bfA_{\overline{[1/\eps]},*}}_F^2.
    \]

    Note then that
    \begin{align*}
        \norm*{\bfe_i^\top\widehat{\bfA\bfG^\top}}_2 &= \norm*{\bfe_i^\top\bfA\bfG^\top}_2 \pm \norm*{\bfe_i^\top\widehat{\bfA\bfG^\top} - \bfe_i^\top\bfA\bfG^\top}_2 \\
        &= (1\pm\alpha)\norm*{\bfe_i^\top\bfA}_2 \pm \sqrt\eps\norm*{(\bfA\bfG^\top)_{\overline{[1/\eps]},*}}_F \\
        &= (1\pm\alpha)\norm*{\bfe_i^\top\bfA}_2 \pm \sqrt{(1+\alpha)\eps}\norm*{\bfA_{\overline{[1/\eps]},*}}_F.\qedhere
    \end{align*}
\end{proof}

\subsubsection{Approximate Matrix Product}

The final guarantee for \textsf{CountSketch} is \emph{approximate matrix product}, from \cite[Lemma 32]{DBLP:conf/stoc/ClarksonW13} (see also \cite{DBLP:conf/soda/ThorupZ04} and Theorems 2.8 and 2.9 of \cite{DBLP:journals/fttcs/Woodruff14}).

\begin{Lemma}[Approximate matrix product for \textsf{CountSketch}]\label{lem:cs-apm}
    Let $\bfA\in\mathbb R^{n\times d}$, $\bfB\in\mathbb R^{m\times d}$, and $\eps>0$. Consider an $r\times d$ \textsf{CountSketch} matrix $\bfS$ for $r = \Omega(\eps^{-2})$. Then,
    \[
        \Pr\braces*{\norm*{\bfA\bfS^\top\bfS\bfB^\top - \bfA\bfB^\top}_F^2 \leq \eps^2 \norm*{\bfA}_F^2 \norm*{\bfB}_F^2} \geq \frac{99}{100}.
    \]
\end{Lemma}
\section{Sparse Singular Vectors}\label{sec:sparse-krylov}

In this section, we discuss our results on performing an approximate SVD with relative spectral norm error, when we are promised that the input matrix $\bfA\in\mathbb R^{n\times d}$ has top $k$ left and right singular vectors that are $s$-sparse. 

\subsection{Approximating Singular Components}

To carry out our plan as described in the introduction (Section \ref{sec:sparse-krylov-overview}), we first calculate the magnitude of coordinates that we need to capture in order to achieve a relative error spectral approximation. We follow \cite{DBLP:conf/nips/MuscoM15} and make use of the fact that additive Frobenius norm low rank approximation implies additive spectral norm low rank approximation, originally due to \cite{DBLP:journals/siamsc/Gu15}. 

\begin{Lemma}[Theorem 3.4 of \cite{DBLP:journals/siamsc/Gu15}]\label{lem:additive-frob-to-spectral}
    For any $\bfA\in\mathbb R^{n\times d}$, let $\bfB\in\mathbb R^{n\times d}$ be any rank $k$ matrix satisfying $\norm*{\bfA-\bfB}_F^2 \leq \norm{\bfA-\bfA_k}_F^2 + \eta$. Then,
    \[
        \norm*{\bfA-\bfB}_2^2 \leq \norm{\bfA-\bfA_k}_2^2 + \eta. 
    \]
\end{Lemma}

By the above result, it suffices to find a rank $k$ matrix $\bfB$ such that
\[
    \norm*{\bfA-\bfB}_F^2 \leq \norm{\bfA-\bfA_k}_F^2 + \eps \sigma_{k+1}^2. 
\]
Using this, we show that it suffices to find all coordinates of the top left singular vectors $\bfU\bfe_j$ such that
\[
    \abs{\bfe_i^\top \bfU\bfe_j} \geq \frac{\eps}{k\sqrt{sr}}\frac{\sigma_{k+1}}{\sigma_j}, 
\]
and similarly, all coordinates of the top right singular vectors $\bfV\bfe_j$ such that
\[
    \abs{\bfe_i^\top \bfV\bfe_j} \geq \frac{\eps}{k\sqrt{sr}}\frac{\sigma_{k+1}}{\sigma_j}.
\]

\begin{Lemma}\label{lem:approx-singular-components}
    Let $\bfA\in\mathbb R^{n\times d}$ have rank $r$ with singular value decomposition $\bfA = \bfU\bfSigma\bfV^\top$, and let $\eps\in(0,1/2)$. Let $S\subset[n]$ and $T\subset[d]$ be a set of coordinates such that
    \begin{align*}
        S &\supset \bigcup_{j\in[r]} \braces*{i\in[n] : \abs{\bfe_i^\top \bfU\bfe_j} \geq \frac{\eps}{k\sqrt{sr}}\frac{\sigma_{k+1}}{\sigma_j}} \\
        T &\supset \bigcup_{j\in[r]} \braces*{i\in[d] : \abs{\bfe_i^\top \bfV\bfe_j} \geq \frac{\eps}{k\sqrt{sr}}\frac{\sigma_{k+1}}{\sigma_j}}
    \end{align*}
    Let $\bfB$ be a rank $k$ matrix such that
    \[
        \norm*{\bfP_S\bfA\bfP_T - \bfB}_F^2 \leq \min_{\text{rank $k$ $\bfC$}} \norm*{\bfP_S\bfA\bfP_T - \bfC}_F^2 + \eta.
    \]
    Then,
    \[
        \norm*{\bfA - \bfB}_F^2 \leq \norm*{\bfA - \bfA_k}_F^2 + 8\eps\sigma_{k+1}^2 + \eta.
    \]
\end{Lemma}
\begin{proof}
    Note first that
    \[
        \norm*{\bfA - \bfA_k}_F^2 = \sum_{t=k+1}^{r}\sigma_{t}^2(\bfA) \leq \sum_{t=k+1}^{r}\sigma_{k+1}^2(\bfA)\leq \sigma_{k+1}^2(\bfA)r.
    \]
    Then,
    \begin{align*}
        \norm*{\bfA - \bfB}_F^2 &= \norm*{\bfA - \bfP_S\bfA\bfP_T}_F^2 + \norm*{\bfP_S\bfA\bfP_T - \bfB}_F^2 \\
        &\stackrel{(1)}{\leq} \norm*{\bfA - \bfP_S\bfA\bfP_T}_F^2 + \norm*{\bfP_S\bfA\bfP_T - \bfA_k}_F^2 + \eta \\
        &\stackrel{(2)}{=} \norm*{\bfA - \bfP_S\bfA\bfP_T}_F^2 + \norm*{\bfP_S\bfA\bfP_T - \bfP_S\bfA_k\bfP_T}_F^2 + \norm*{\bfA_k - \bfP_S\bfA_k\bfP_T}_F^2 + \eta \\
        &\stackrel{(3)}{=} \norm*{\bfA - \bfP_S\bfA_k\bfP_T}_F^2 + \norm*{\bfA_k - \bfP_{S}\bfA_k\bfP_T}_F^2 + \eta \\
        &\stackrel{(4)}{\leq} \parens*{\norm*{\bfA - \bfA_k}_F + \norm*{\bfA_k - \bfP_S\bfA_k\bfP_T}_F}^2 + \norm*{\bfA_k - \bfP_{S}\bfA_k\bfP_T}_F^2 + \eta \\
        &= \norm*{\bfA - \bfA_k}_F^2 + 2\norm*{\bfA - \bfA_k}_F\norm*{\bfA_k - \bfP_S\bfA_k\bfP_T}_F + 2\norm*{\bfA_k - \bfP_{S}\bfA_k\bfP_T}_F^2 + \eta \\
        &\leq \norm*{\bfA - \bfA_k}_F^2 + 2\sigma_{k+1}\sqrt r\norm*{\bfA_k - \bfP_{S}\bfA_k\bfP_T}_F + 2\norm*{\bfA_k - \bfP_{S}\bfA_k\bfP_T}_F^2 + \eta
    \end{align*}
    In the above, the inequality (1) is due to the approximate optimality of $\bfB$, the identities (2) and (3) are by the Pythagorean theorem, and inequality (4) is the triangle inequality. Finally, we calculate that
    \begin{align*}
        \norm*{\bfA_k - \bfP_{S}\bfA_k\bfP_T}_F &\leq \norm*{\bfA_k - \bfP_{S}\bfA_k}_F + \norm*{\bfP_{S}\bfA_k - \bfP_S\bfA_k\bfP_T}_F  \\
        &= \norm*{\sum_{j=1}^k \sigma_j \bfP_{\overline S}\bfU\bfe_j (\bfV\bfe_j)^\top}_F + \norm*{\sum_{j=1}^k \sigma_j \bfP_S\bfU\bfe_j (\bfV\bfe_j)^\top \bfP_{\overline T}}_F \\
        &\leq \sum_{j=1}^k \sigma_j\norm*{\bfP_{\overline S} \bfU\bfe_j}_2 \norm*{\bfV\bfe_j}_2 + \sigma_j\norm*{\bfP_{S} \bfU\bfe_j}_2 \norm*{\bfP_{\overline T}\bfV\bfe_j}_2 \\
        &\leq \sum_{j=1}^k 2\sigma_j \parens*{\frac{\eps}{k\sqrt{sr}}\frac{\sigma_{k+1}}{\sigma_j}} \sqrt s \\
        &= \frac{2\eps}{\sqrt r}\sigma_{k+1}
    \end{align*}
    so the previous bound is
    \begin{align*}
        \norm*{\bfA - \bfB}_F^2 &\leq \norm*{\bfA - \bfA_k}_F^2 + 4\sigma_{k+1}\sqrt r\norm*{\bfA_k - \bfP_{S}\bfA_k\bfP_T}_F + 2\norm*{\bfA_k - \bfP_{S}\bfA_k\bfP_T}_F^2 + \eta \\
        &\leq \norm*{\bfA - \bfA_k}_F^2 + 4\eps\sigma_{k+1}^2 + \frac{8\eps^2}{r}\sigma_{k+1}^2 + \eta \\
        &\leq \norm*{\bfA - \bfA_k}_F^2 + 8\eps\sigma_{k+1}^2 + \eta.\qedhere
    \end{align*}
\end{proof}

\subsection{Finding the Support of Singular Vectors with Large Singular Value}\label{sec:singular-vec-support}

We next show how to find all large coordinates of singular vectors whose singular values $\sigma_j$ are at least a $(1+\sqrt\eps)$ factor larger than $\sigma_{k+1}$. By the results of the previous section, we seek to find all of the large coordinates of the top sparse singular vectors, which have absolute value at least
\[
    \tau_j \coloneqq \frac{\eps}{k\sqrt{sr}} \frac{\sigma_{k+1}}{\sigma_j}
\]
for the $j$th singular vector. 

Our identification of the large coordinates of the top sparse singular vectors starts from the standard analysis of the power method (see also, e.g., the overview of \cite{DBLP:conf/nips/MuscoM15}). If we run power method starting from a random Gaussian vector $\bfg\sim \mathcal N(0,\bfI_d)$, that is, we compute $(\bfA\bfA^\top)^q \bfA\bfg$ for some $q\in\mathbb N$, then we retrieve a random Gaussian linear combination of the left singular vectors $\bfU\bfe_j$, each scaled by $\sigma_j^{2q+1}$. This is a simple consequence of the rotational invariance of the Gaussian:

\begin{Lemma}
    Let $\bfg'\sim\mathcal N(0,\bfI_d)$ and let $q\in\mathbb N$. Let $\bfA\in\mathbb R^{n\times d}$ be a rank $r$ matrix and let $\bfA  = \bfU\bfSigma \bfV^\top$ be its singular value decomposition. Then, $(\bfA\bfA^\top)^q \bfA \bfg'$ has the same distribution as 
    \[
        \bfU\bfSigma^{2q+1}\bfg = \sum_{j=1}^r \bfg_j \sigma_j^{2q+1}\bfU\bfe_j
    \]
    for $\bfg\sim \mathcal N(0, \bfI_r)$. 
\end{Lemma}

Note then that for $\sigma_j \geq (1+\sqrt\eps)\sigma_{k+1}$, the $j$th singular vector is scaled more than the $(k+1)$-st singular vector by a factor of at least $(\sigma_j / \sigma_{k+1})^{2q + 1}$. For $q$ roughly order $1/\sqrt\eps$, this separates all large coordinates of the $j$th singular vector from the coordinates of the $(k+1)$-st singular vector. 

\begin{Lemma}\label{lem:large-gaussian-coordinate}
    For 
    \[
        q = O\parens*{\frac1{\sqrt\eps}\log \frac{sk^2\sqrt{sr\log n}}{\eps}},
    \]
    the $sk$ coordinates of $(\bfA\bfA^\top)^q\bfA\bfg$ with largest absolute value are guaranteed to contain all entries $i\in[n]$ for which there exists a $j\in[k]$ with $\sigma_j \geq (1+\sqrt\eps)\sigma_{k+1}$ and
    \[
        \abs*{\bfe_i^\top \bfU\bfe_j} \geq \tau_j.
    \]
\end{Lemma}
\begin{proof}
    For 
\[
    q = O\parens*{\frac1{\sqrt\eps}\log \frac{sk^2\sqrt{sr\log n}}{\eps}},
\]
the blow up factor $(\sigma_j/\sigma_{k+1})^{2q+1}$ is at least 
\[
    \parens*{\frac{\sigma_j}{\sigma_{k+1}}}^{2q+1} \geq (1+\sqrt\eps)^{2q}\frac{\sigma_j}{\sigma_{k+1}} = \Theta\parens*{\frac{sk^2\sqrt{sr\log n}}{\eps}}\frac{\sigma_j}{\sigma_{k+1}} = \Theta\parens*{\frac{sk\sqrt{\log n}}{\tau_j}}
\]
for the $j$th singular component. The time required to compute this vector $(\bfA\bfA^\top)^q\bfA\bfg$ is
\[
    O\parens*{\frac{\nnz(\bfA)}{\sqrt\eps}\log \frac{sk^2\sqrt{sr\log n}}{\eps}} = O\parens*{\frac{\nnz(\bfA)}{\sqrt\eps}\log \frac{srk\log n}{\eps}}
\]
Now note that for each $i\in[n]$, we have that
\[
    \bfe_i^\top \bfU\bfSigma^{2q+1}\bfg \sim \mathcal N\parens*{0, \norm*{\bfe_i^\top \bfU\bfSigma^{2q+1}}_2^2}.
\]
Since the maximum absolute value among $n$ Gaussians is $O(\sqrt{\log n})$ with constant probability, we have
\[
    \abs*{\bfe_i^\top \bfU\bfSigma^{2q+1}\bfg} \leq  O(\sqrt{\log n})\norm*{\bfe_i^\top \bfU\bfSigma^{2q+1}}_2.
\]
Furthermore, if we consider all $i$ in the support of the top $k$ singular vectors, which is at most $sk$ coordinates, then the minimum absolute value among the $sk$ Gaussians is
\[
    \abs*{\bfe_i^\top \bfU\bfSigma^{2q+1}\bfg} \geq  \Omega\parens*{\frac1{sk}}\norm*{\bfe_i^\top \bfU\bfSigma^{2q+1}}_2.
\]

Now consider a coordinate $i\in[n]$ such that
\[
    \abs*{\bfe_i^\top \bfU\bfe_j} \geq \tau_j
\]
for some $j\in[k]$ such that $\sigma_j \geq (1+\sqrt\eps)\sigma_{k+1}$. Then by the previous results,
\begin{align*}
    \abs*{\bfe_i^\top \bfU\bfSigma^{2q+1}\bfg} &\geq  \Omega\parens*{\frac1{sk}}\norm*{\bfe_i^\top \bfU\bfSigma^{2q+1}}_2 \\
    &\geq \Omega\parens*{\frac1{sk}}\sigma_{j}^{2q+1} \tau_j \\
    &= \Omega\parens*{\frac1{sk}}\sigma_{k+1}^{2q+1}\parens*{\frac{\sigma_j}{\sigma_{k+1}}}^{2q+1} \tau_j \\
    &\geq \Omega\parens*{\frac1{sk}}\sigma_{k+1}^{2q+1} \Theta\parens*{\frac{sk\sqrt{\log n}}{\tau_j}} \tau_j \\
    &= \Omega\parens*{\sigma_{k+1}^{2q+1}\sqrt{\log n}}.
\end{align*}
On the other hand, for any $i\in[n]$ that is outside of the at most $sk$ coordinates of the support of the top $k$ singular vectors, then
\[
    \abs*{\bfe_i^\top \bfU\bfSigma^{2q+1}\bfg} \leq  O(\sqrt{\log n})\norm*{\bfe_i^\top \bfU\bfSigma^{2q+1}}_2 \leq O(\sigma_{k+1}^{2q+1}\sqrt{\log n}).
\]
We thus conclude as desired.
\end{proof}

In other words, we can identify a set of $sk$ coordinates that contains all large entries of left singular vectors $j$ for which $\sigma_j \geq (1+\sqrt\eps)\sigma_{k+1}$. Repeating for the right singular vectors, we may identify the sets $S$ and $T$ as required by Lemma \ref{lem:approx-singular-components}. 

\subsection{Approximating Large Singular Values}

Our next task is to compute the singular values of $\bfA$ with $\sigma_j(\bfA) \geq (1+\sqrt\eps)\sigma_{k+1}(\bfA)$, up to $(1+\eps)$ factors. We first show that approximating the singular values of $\bfP_S\bfA\bfP_T$ directly approximates the singular values of $\bfA$, when the singular values are sufficiently large. 

\begin{Lemma}\label{lem:submatrix-approx-singular-values}
    Let $m$ be the number of singular values of $\bfA$ such that $\sigma_j(\bfA) \geq (1+\sqrt\eps)\sigma_{k+1}(\bfA)$. Let $S\subset[n]$ and $T\subset[d]$ be sets satisfying the hypotheses of Lemma \ref{lem:approx-singular-components}. Then for each $l\in[m]$, 
    \[
        (1-8\eps)\sigma_l^2(\bfA) \leq \sigma_l^2(\bfP_S\bfA\bfP_T) \leq \sigma_l^2(\bfA).
    \]
\end{Lemma}
\begin{proof}
    Recall the Cauchy interlacing theorem:
    \begin{theorem}[Cauchy interlacing theorem]
        Let $\bfM$ be a symmetric matrix and let $\bfN$ be a principal submatrix of size $l\times l$. Then for all $j\in[l]$, 
        \[
            \lambda_j(\bfM) \geq \lambda_j(\bfN) \geq \lambda_{n-l+j}(\bfM).
        \]
    \end{theorem}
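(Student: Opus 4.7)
The plan is to establish the upper and lower bounds separately, since Cauchy interlacing handles only the upper bound.

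For the upper bound, I would apply the stated Cauchy interlacing theorem to the symmetric dilation
\[
    \bfM = \begin{pmatrix} \mathbf{0} & \bfA \\ \bfA^\top & \mathbf{0} \end{pmatrix},
\]
whose nonzero eigenvalues are $\pm\sigma_j(\bfA)$. The principal submatrix of $\bfM$ obtained by restricting to the rows and columns indexed by $S$ in the first block and $T$ in the second block is itself the symmetric dilation of $\bfS_S\bfA\bfS_T^\top$, whose eigenvalues are $\pm\sigma_j(\bfP_S\bfA\bfP_T)$ (these are the same nonzero singular values as $\bfP_S\bfA\bfP_T$). Cauchy interlacing then immediately yields $\sigma_l(\bfP_S\bfA\bfP_T) \leq \sigma_l(\bfA)$.

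The lower bound cannot come from Cauchy interlacing in the other direction, since the shifted index $\lambda_{(n+d)-(|S|+|T|)+l}(\bfM)$ lies deep in the negative spectrum and gives only the trivial $-\sigma_1(\bfA)$. Instead I would use the Courant--Fischer min-max characterization
\[
    \sigma_l^2(\bfP_S\bfA\bfP_T) = \max_{\dim W = l} \min_{0 \neq x \in W} \frac{\norm*{\bfP_S\bfA\bfP_T x}_2^2}{\norm*{x}_2^2}
\]
and exhibit the $l$-dimensional trial subspace $W = \bfP_T \cdot \mathrm{span}(\bfv_1,\ldots,\bfv_l)$. Since the top $k \geq l$ right singular vectors are $s$-sparse and $T$ captures all their coordinates of magnitude at least $\tau_j = (\eps/(k\sqrt{sr}))(\sigma_{k+1}/\sigma_j)$, the sparsity bound $\norm*{\bfP_{\overline T}\bfv_j}_2^2 \leq s\tau_j^2 = \eps^2\sigma_{k+1}^2/(k^2 r \sigma_j^2)$ holds, and similarly $\norm*{\bfP_{\overline S}\bfu_j}_2^2 \leq s\tau_j^2$. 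For $j \leq l \leq m$ these are $O(\eps^2/(k^2r))$, which makes the Gram matrix of $\{\bfP_T\bfv_j\}_{j=1}^l$ equal to $\bfI_l$ plus a tiny perturbation. In particular $\dim W = l$, and writing $x = \bfP_T\bfV_l\alpha$ one has $\norm*{x}_2^2 \leq \norm*{\alpha}_2^2$.

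The central estimate uses the identity $\bfP_T x = x$ for $x \in W$ to obtain the orthogonal decomposition $\norm*{\bfP_S\bfA\bfP_T x}_2^2 = \norm*{\bfA x}_2^2 - \norm*{\bfP_{\overline S}\bfA x}_2^2$. Writing $\bfA x = \bfU_l\bfSigma_l\alpha - \bfA\bfP_{\overline T}\bfV_l\alpha$, the first piece contributes at least $\sigma_l^2 \norm*{\alpha}_2^2$; the second piece, after expanding $\bfA$ in its SVD and splitting into its top-$k$ and tail contributions, is bounded by $O(\eps^2\sigma_{k+1}^2/(kr))\cdot \norm*{\alpha}_2^2$ using the $\norm*{\bfP_{\overline T}\bfv_j}_2$ estimates. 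A parallel expansion $\bfP_{\overline S}\bfA x = \bfP_{\overline S}\bfU_l\bfSigma_l\alpha - \bfP_{\overline S}\bfA\bfP_{\overline T}\bfV_l\alpha$ bounds $\norm*{\bfP_{\overline S}\bfA x}_2^2$ by the analogous small quantity using sparsity of the $\bfu_j$. Combining and invoking $\sigma_{k+1} \leq \sigma_l/(1+\sqrt\eps)$ from the hypothesis $l \leq m$ converts the additive $\sigma_{k+1}^2$-sized errors into multiplicative $\eps\sigma_l^2$-sized errors, yielding $\norm*{\bfP_S\bfA\bfP_T x}_2^2 \geq (1-8\eps)\sigma_l^2\norm*{x}_2^2$, which closes the lower bound via Courant--Fischer.

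The main obstacle will be the careful bookkeeping needed to reach the precise constant $8\eps$: the calibration $\tau_j \propto \sigma_{k+1}/\sigma_j$ is essential, since it ensures that each of the $l$ error contributions $\sigma_j\norm*{\bfP_{\overline S}\bfu_j}_2$ (and the analogous $\bfv_j$ terms) has a uniform magnitude $O(\eps\sigma_{k+1}/(k\sqrt r))$, so that the $l \leq k$ pieces can be summed and then squared to the targeted additive $O(\eps\sigma_l^2)$ error.
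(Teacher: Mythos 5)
You have proved the wrong statement. The statement at issue is the classical Cauchy interlacing theorem for a symmetric matrix $\bfM$ and an $l\times l$ principal submatrix $\bfN$, namely $\lambda_j(\bfM) \geq \lambda_j(\bfN) \geq \lambda_{n-l+j}(\bfM)$; this is a self-contained linear-algebra fact that the paper merely recalls (without proof) inside the proof of Lemma~\ref{lem:submatrix-approx-singular-values}. Your proposal, by contrast, is an attempted proof of Lemma~\ref{lem:submatrix-approx-singular-values} itself, i.e.\ of the bound $(1-8\eps)\sigma_l^2(\bfA) \leq \sigma_l^2(\bfP_S\bfA\bfP_T) \leq \sigma_l^2(\bfA)$, and you explicitly invoke ``the stated Cauchy interlacing theorem'' as a tool in your first step — so your argument assumes the very theorem it was asked to establish. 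Everything about your plan (the symmetric dilation of $\bfA$, the sets $S$ and $T$, the sparsity thresholds $\tau_j$, the parameter $\eps$) belongs to the surrounding lemma, not to the interlacing theorem.

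A proof of the interlacing theorem itself should go directly through Courant--Fischer. Let $V$ be the coordinate subspace of $\mathbb R^n$ spanned by the $l$ rows/columns defining $\bfN$, so that for $x\in V$ one has $x^\top \bfM x = (x|_V)^\top \bfN (x|_V)$. For the left inequality,
\[
\lambda_j(\bfN) = \max_{\substack{W\subseteq V\\ \dim W = j}}\ \min_{0\neq x\in W}\frac{x^\top\bfM x}{x^\top x}
\ \leq\ \max_{\dim W = j}\ \min_{0\neq x\in W}\frac{x^\top\bfM x}{x^\top x} = \lambda_j(\bfM),
\]
since the outer maximum on the right ranges over a strictly larger collection of subspaces. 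The right inequality $\lambda_j(\bfN)\geq\lambda_{n-l+j}(\bfM)$ follows by applying this same inequality to $-\bfM$ and $-\bfN$ and reindexing (using $\lambda_i(-\bfM) = -\lambda_{n-i+1}(\bfM)$), or equivalently by the min-max form of Courant--Fischer. None of the problem-specific structure in your write-up is needed, and none of it applies at this level of generality.
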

    Then applying the interlacing theorem to $\bfM = \bfA\bfA^\top$ and $\bfN = \bfP_S\bfA\bfA^\top\bfP_S^\top$, we find that the singular values of $\bfP_S\bfA$ uniformly bound the top $sk$ singular values of $\bfA$ from below, and similarly, the singular values of $\bfP_S\bfA\bfP_T$ uniformly bound the singular values of $\bfP_S\bfA$ from below. We thus have that
    \[
        \sigma_j(\bfA) \geq \sigma_j(\bfP_S\bfA) \geq \sigma_j(\bfP_S\bfA\bfP_T)
    \]
    for all $j\in[sk]$. Furthermore, we know by Lemma \ref{lem:approx-singular-components} that for each $l\in[m]$, 
    \begin{equation}\label{eq:frob-guarantee}
        \norm*{\bfA - \bfA_l}_F^2 \leq \norm*{\bfA - (\bfP_S\bfA\bfP_T)_l}_F^2 \leq \norm*{\bfA - \bfA_l}_F^2 + 8\eps \sigma_{l+1}(\bfA)^2
    \end{equation}
    where $(\bfP_S\bfA\bfP_T)_l$ is the best rank $l$ approximation $\bfP_S\bfA\bfP_T$. Now note that
    \[
        \norm*{\bfA}_F^2 - \norm*{\bfA - \bfA_l}_F^2 = \norm*{\bfA_l}_F^2
    \]
    and
    \begin{align*}
        \angle*{\bfA - (\bfP_S\bfA\bfP_T)_l, (\bfP_S\bfA\bfP_T)_l} &= \angle*{\bfA - \bfP_S\bfA\bfP_T, (\bfP_S\bfA\bfP_T)_l} \\
        &\hspace{5em} + \angle*{\bfP_S\bfA\bfP_T - (\bfP_S\bfA\bfP_T)_l, (\bfP_S\bfA\bfP_T)_l} = 0
    \end{align*}
    so 
    \[
        \norm*{\bfA}_F^2 - \norm*{\bfA - (\bfP_S\bfA\bfP_T)_l}_F^2 = \norm*{(\bfP_S\bfA\bfP_T)_l}_F^2
    \]
    by the Pythagorean theorem. Then subtracting the inequalities of Equation \ref{eq:frob-guarantee} from $\norm*{\bfA}_F^2$, we have that
    \[
        \norm*{\bfA_l}_F^2 - 8\eps \sigma_{l+1}(\bfA)^2 \leq \norm*{(\bfP_S\bfA\bfP_T)_l}_F^2 \leq \norm*{\bfA_l}_F^2.
    \]
    Then, 
    \begin{align*}
        \sigma_{l}^2(\bfP_S\bfA\bfP_T) &= \norm*{(\bfP_S\bfA\bfP_T)_l}_F^2 - \norm*{(\bfP_S\bfA\bfP_T)_{l-1}}_F^2 \\
        &\geq \norm*{\bfA_l}_F^2 - 8\eps \sigma_{l+1}(\bfA)^2 - \norm*{\bfA_{l-1}}_F^2 \\
        &= \sigma_l^2(\bfA) - 8\eps \sigma_{l+1}(\bfA)^2 \\
        &\geq (1-8\eps)\sigma_l^2(\bfA)
    \end{align*}
    as desired.
    \end{proof}

We may use the existing results of \cite{DBLP:conf/nips/MuscoM15} to find $(1+\eps)$ factor approximations to the top $k$ singular values of $\bfP_S\bfA\bfP_T$ in time
\[
    O\parens*{\frac{\nnz(\bfP_S \bfA\bfP_T)k}{\sqrt\eps}\log(sk)} = O\parens*{\frac{s^2k^3}{\sqrt\eps}\log(sk)}. 
\]
However, note that given estimates for the singular values of $\bfP_S\bfA\bfP_T$, we do not know which ones are within a $(1+\eps)$ factor of the singular values of $\bfA$, since we do not know the number $m$ of singular values $j$ with $\sigma_j(\bfA) \geq (1+\sqrt\eps)\sigma_{k+1}(\bfA)$. However, by the Cauchy interlacing theorem, the singular values of $\bfP_S\bfA\bfP_T$ are always a lower bound on the singular values of $\bfA$, so it suffices to compute an upper bound for the singular values of $\bfA$ that are at most a $(1+\eps)$ factor larger than the lower bound. We obtain such an upper bound on the singular values of $\bfA$ by approximating $\norm*{\bfA - \bfB}_2$ for a rank $l$ matrix $\bfB$. Indeed, if $\bfB$ is rank $l$, then
\[
    \norm*{\bfA - \bfB}_2^2 \geq \min_{\text{rank $l$ $\bfC$}}\norm*{\bfA - \bfC}_F^2 = \sigma_{l+1}(\bfA)^2. 
\]
This idea is executed in the following lemma. 

\begin{Lemma}\label{lem:approximate-upper-lower-bound}
Let $S\subset[n]$ and $T\subset[d]$ be sets of size $sk$ each that satisfy the hypotheses of Lemma \ref{lem:approx-singular-components}. Given such $S$ and $T$ and an index $j\in[k]$, there is a randomized algorithm that runs in time
\[
    O\parens*{\frac{\nnz(\bfA) + s^2k^3}{\sqrt\eps} \log(sk)}
\]
and outputs numbers $U$ and $L$ such that
\[
    L \leq \sigma_j^2(\bfA) \leq U
\]
with probability at least $0.99$. Furthermore, if $j\in[m]$, where $m$ is the number of singular values $j$ with $\sigma_j \geq (1+\sqrt\eps)\sigma_{k+1}$, we have that
\[
    \frac{U}{L} \leq \frac{1+10\eps}{1-9\eps} \leq 1 + 20\eps. 
\]
\end{Lemma}
\begin{proof}
    We first show how to obtain the lower bound $L$. By the Cauchy interlacing theorem (as in Lemma \ref{lem:submatrix-approx-singular-values}), we have that
    \[
        \sigma_j(\bfP_S\bfA\bfP_T)\leq \sigma_j(\bfA).
    \]
    Then by the randomized block Krylov algorithm of \cite{DBLP:conf/nips/MuscoM15} (see Theorem \ref{thm:randomized-block-krylov}), we may find an estimate $L$ to $\sigma_j(\bfP_S\bfA\bfP_T)$ such that
    \[
        (1-\eps)\sigma_j(\bfP_S\bfA\bfP_T) \leq L \leq \sigma_j(\bfP_S\bfA\bfP_T)
    \]
    in time
    \[
        O\parens*{\frac{\nnz(\bfP_S\bfA\bfP_T)k}{\sqrt\eps}\log(sk)} = O\parens*{\frac{s^2k^3}{\sqrt\eps}\log(sk)}. 
    \]
    Furthermore, if $j\in[m]$, then by Lemma \ref{lem:submatrix-approx-singular-values}, 
    \[
        L \geq (1-\eps)\sigma_j(\bfP_S\bfA\bfP_T) \geq (1-\eps)(1-8\eps)\sigma_j(\bfA) \geq (1-9\eps)\sigma_j(\bfA).
    \]
    
    For the upper bound, we use the rank $j$ approximation $\bfB$ obtained by running the randomized block Krylov algorithm of \cite{DBLP:conf/nips/MuscoM15} on $\bfP_S\bfA\bfP_T$. Note that
    \[
        \sigma_j(\bfA) = \min_{\text{rank $j$ $\bfC$}} \norm*{\bfA-\bfC}_2 \leq \norm*{\bfA-\bfB}_2
    \]
    for any rank $j-1$ matrix $\bfB$. By the results of \cite{DBLP:conf/nips/MuscoM15}, we may compute an estimate $U$ such that
    \[
        (1+\eps)\norm*{\bfA-\bfB}_2 \geq U \geq \norm*{\bfA-\bfB}_2
    \]
    in time
    \[
        O\parens*{\frac{\nnz(\bfA - \bfB)}{\sqrt\eps}} = O\parens*{\frac{\nnz(\bfA) + s^2k^2}{\sqrt\eps}}. 
    \]
    Furthermore, for $j\in[m]$, if we find a rank $j-1$ matrix $\bfB$ such that 
    \begin{align*}
        \norm*{\bfP_S\bfA\bfP_T - \bfB}_F^2 &\leq \min_{\text{rank $j-1$ $\bfC$}} \norm*{\bfP_S\bfA\bfP_T - \bfC}_F^2 + \eps \sigma_{j}(\bfP_S\bfA\bfP_T)^2 \\
        &\leq \min_{\text{rank $j-1$ $\bfC$}} \norm*{\bfP_S\bfA\bfP_T - \bfC}_F^2 + \eps \sigma_{j}(\bfA)^2,
    \end{align*}
    which we can by the results of \cite{DBLP:conf/nips/MuscoM15} as before, then by Lemma \ref{lem:approx-singular-components},
    \[
        \norm*{\bfA - \bfB}_F^2 \leq \norm*{\bfA - \bfA_{j-1}}_F^2 + 9\eps\sigma_{j}^2(\bfA).
    \]
    By Lemma \ref{lem:additive-frob-to-spectral}, this implies that
    \[
        \norm*{\bfA - \bfB}_2^2 \leq \norm*{\bfA - \bfA_{j-1}}_2^2 + 9\eps\sigma_{j}^2(\bfA)= (1+9\eps)\sigma_{j}^2(\bfA).\qedhere
    \]
\end{proof}

We now show how to use the above result to efficiently find a $(1+\sqrt\eps)$ factor approximation to $\sigma_{k+1}(\bfA)$ using binary search. 

\begin{Lemma}
    There is a randomized algorithm that runs in time
    \[
        O\parens*{\frac{\nnz(\bfA) + s^2k^3}{\sqrt\eps} \log(sk)(\log k)}
    \]
    that finds a $(1+\sqrt\eps)$ factor approximation to $\sigma_{k+1}(\bfA)$. 
\end{Lemma}
\begin{proof}
If $\sigma_{k}(\bfA) \geq (1+\sqrt\eps)\sigma_{k+1}(\bfA)$, then deflating off the top $k$ components already gives a $(1+\eps)$ factor approximation to $\sigma_{k+1}(\bfA)$. Otherwise, we proceed with binary search as follows. 

Suppose we consider $j\in[k]$. If the upper and lower bounds for $\sigma_j(\bfA)$ in Lemma \ref{lem:approximate-upper-lower-bound} are within a $(1+O(\eps))$ factor, then we know that $\sigma_{k+1}(\bfA)$ is smaller than this, up to a $(1\pm O(\eps))$ factor. On the other hand, if the upper and lower bounds for $\sigma_j(\bfA)$ are further than a $(1+O(\eps))$ factor, then $\sigma_j(\bfA) \leq (1+\sqrt\eps)\sigma_{k+1}(\bfA)$, since otherwise the upper and lower bounds for $\sigma_j(\bfA)$ would have matched up to a $(1\pm O(\eps))$ factor by the second guarantee of Lemma \ref{lem:approximate-upper-lower-bound}. Thus, we may use binary search over the at most $k$ singular values in at most $O(\log k)$ calls to the algorithm of Lemma \ref{lem:approximate-upper-lower-bound}. 
\end{proof}

\subsection{Approximating Small Singular Values}

With a $(1+\sqrt\eps)$ factor approximation to $\sigma_{k+1}(\bfA)$ in hand, we now zoom into the singular values between $\sigma_{k+1}(\bfA)$ and $(1+\sqrt\eps)\sigma_{k+1}(\bfA)$. We consider partitioning this $(1+\sqrt\eps)$ factor window into $O(1/\sqrt\eps)$ buckets that increase in powers of $(1+\eps)$, that is
\[
    L, L(1+\eps), L(1+\eps)^2, L(1+\eps)^3, \dots, L(1+\eps)^{O(1/\sqrt\eps)} = (1+\sqrt\eps)L
\]
where $L$ is a lower bound on $\sigma_{k+1}(\bfA)$, up to a $(1+\sqrt\eps)$ factor. Our idea now is to simply enumerate over these $O(1/\sqrt\eps)$ guesses to a $(1\pm\eps)$-approximation of $\sigma_{k+1}(\bfA)$, and then choose the best result. 

With only a $(1+\eps)$ factor gap in the singular values, using power method as before will require roughly (ignoring log factors) $1/\eps$ iterations, which takes time roughly $\nnz(\bfA)/\eps$ to separate out the singular components, which is above our target budget. However, using Chebyshev polynomials, it is known that a $(1+\eps)$ factor gap in the singular values can be separated with only roughly $1/\sqrt\eps$ iterations \cite{DBLP:conf/nips/MuscoM15} which takes time only $\nnz(\bfA)/\sqrt\eps$. The main lemma for this technique is the following:

\begin{Lemma}[Lemma 5, \cite{DBLP:conf/nips/MuscoM15}]\label{lem:chebyshev-poly}
    Given a specified value $\alpha > 0$, gap $\gamma\in(0,1]$, and $q\geq 1$, there exists a degree $q$ polynomial $p(x)$ such that:
    \begin{enumerate}
        \item $p((1+\gamma)\alpha) = (1+\gamma)\alpha$
        \item $p(x)\geq x$ for all $x\geq (1+\gamma)\alpha$
        \item $\abs{p(x)} \leq \frac\alpha{2^{q\sqrt{\gamma}-1}}$ for all $x\in[0,\alpha]$
    \end{enumerate}
    Furthermore, when q is odd, the polynomial only contains odd powered monomials. 
\end{Lemma}

In words, the above lemma states that there is a polynomial that ``jumps'' by a factor of $2^{q\sqrt\gamma-1}$ in a window of size $(1+\gamma)$ at a specified location $\alpha$. The difference between this lemma and our power method analysis from before is that we must specify the location of our ``jump'', $\alpha$, in order to use the above polynomial in the Krylov method, whereas in the power method, the polynomial $p(x) = x^q$ had the ``jump'' property at any location $\alpha$. Thus, in order to use the above lemma, we must \emph{first} specify our jump location $\alpha$, and then proceed with our previous techniques. 

Our procedure is thus as follows. We first compute Krylov iterates $(\bfA\bfA^\top)^i \bfA\bfg$ for $i\in[q]$, where $\bfg\sim\mathcal N(0, \bfI_d)$ and
\[
    q = O\parens*{\frac{1}{\sqrt\eps}\log\frac{sk^2\sqrt{sr\log n}}{\eps}}.
\]
We then proceed with our enumeration procedure. We guess a bucket $\alpha = L(1+\eps)^t$ for some $t\in[O(1/\sqrt\eps)]$, and then consider the degree $q$ polynomial $p_\alpha(x)$ that jumps by a $2^{q\sqrt\eps-1}$ factor at $\alpha$ by Lemma \ref{lem:chebyshev-poly}. Then, we may compute the vector $\bfU p_\alpha(\bfSigma)\bfV^\top \bfg$ as a linear combination of the Krylov iterates
\[
    (\bfA\bfA^\top)^i \bfA\bfg = \bfU\bfSigma^{2i+1}\bfV^\top\bfg
\]
where the coefficients of the linear combination are the coefficients of the polynomial $p_\alpha$. Next, we take the top $sk$ entries of $\bfU p_\alpha(\bfSigma)\bfV^\top \bfg$ as sets $S_\alpha$ and $T_\alpha$, combine them with the $sk$ entries $S$ and $T$ obtained earlier by the power method, and then take our new subset of entries to be 
\begin{align*}
    S' \coloneqq S \cup \bigcup_\alpha S_\alpha \\
    T' \coloneqq T \cup \bigcup_\alpha T_\alpha
\end{align*}
Finally we compute a rank $k$ matrix $\bfB$ such that
\[
    \norm*{\bfP_{S'}\bfA\bfP_{T'} - \bfB}_F^2 \leq \min_{\text{rank $k$ $\bfC$}} \norm*{\bfP_{S'}\bfA\bfP_{T'} - \bfC}_F^2 + \eps\sigma_{k+1}^2(\bfP_{S'}\bfA\bfP_{T'})
\]
using the results of \cite{DBLP:conf/nips/MuscoM15}.

Note that if the $\alpha$ we choose satisfies $\alpha\in [\sigma_{k+1}(\bfA), (1+\eps)\sigma_{k+1}(\bfA)]$, then all singular values $j$ that are at least a $(1+\eps)$ factor larger than $\alpha$ and at most $\Theta(1)\sigma_{k+1}(\bfA)$ are scaled by at least a factor of
\[
    2^{q\sqrt\eps - 1} = \Theta\parens*{\frac{sk^2\sqrt{sr\log n}}{\eps}} = \Theta\parens*{\frac{sk^2\sqrt{sr\log n}}{\eps}}\frac{\sigma_j}{\sigma_{k+1}} = \Theta\parens*{\frac{sk\sqrt{\log n}}{\tau_j}},
\]
which means we may recover all coordinates of the $j$th singular vectors that are at least $\tau_j$ for these singular values, as done in the analyses in Section \ref{sec:singular-vec-support}. Thus, we have that
\begin{align*}
    \norm*{\bfA - \bfB}_2^2 &\leq \norm*{\bfA - \bfA_l}_2^2 + 8\eps\sigma_{l+1}^2(\bfA) + \eps\sigma_{l+1}^2(\bfP_{S'}\bfA\bfP_{T'}) \\
    &\leq \norm*{\bfA - \bfA_l}_2^2 + 9\eps\sigma_{l+1}^2(\bfA) \\
    &= (1+9\eps)\sigma_{l+1}^2(\bfA) \\
    &\leq (1+9\eps)(1+\eps)\sigma_{k+1}^2(\bfA) \\
    &\leq (1+11\eps)\sigma_{k+1}^2(\bfA)
\end{align*}
by Lemma \ref{lem:approx-singular-components}, where $l\in[k]$ is such that $\sigma_{k+1}^2(\bfA) \leq \sigma_{l+1}^2(\bfA) \leq (1+\eps)\sigma_{k+1}^2(\bfA)$. 

The initial computation of the Krylov iterates takes time
\[
    O(\nnz(\bfA)q) = O\parens*{\frac{\nnz(\bfA)}{\sqrt\eps}\log\frac{srk\log n}{\eps}}
\]
and a single guess of $\alpha$ takes time
\[
    O\parens*{nq} = O\parens*{\frac{n}{\sqrt\eps}\log\frac{srk\log n}{\eps}}
\]
which we repeat $O(1/\sqrt\eps)$ times, so the total running time in this section is
\[
    O\parens*{\parens*{\frac{\nnz(\bfA)}{\sqrt\eps} + \frac{n}{\eps}}\log\frac{srk\log n}{\eps}}. 
\]
We then additionally run an approximate SVD using Theorem \ref{thm:randomized-block-krylov} on the $O(sk/\sqrt\eps)\times O(sk/\sqrt\eps)$ matrix, which adds an $s^2 k^3(\log (sk))/\eps^{3/2}$ term, for a running time of
\[
    O\parens*{\parens*{\frac{\nnz(\bfA)}{\sqrt\eps} + \frac{n}{\eps}}\log\frac{srk\log n}{\eps} + \frac{s^2k^3}{\eps^{3/2}}\log(sk)}. 
\]
This dominates the running times of the previous steps and thus is the running time of our entire algorithm. 
\section{Frobenius Sparse Low Rank Approximation}\label{sec:slra}

We now switch to discussing sparse low rank approximation with no assumptions on the input matrix, in the Frobenius norm, in the streaming setting. 

\subsection{Algorithms}

\subsubsection{Exponential Time Algorithms for Sparse Output}

Our first result is a na\"ive exponential time algorithm which iterates over an $\eps$-net to find a good $s\times s$ sparse rank $k$ approximation, taking time roughly $\exp(O(sk\log n))$ and only $O(sk(\log n)/\eps^2)$ sketching dimensions. In order to make our net argument, we will assume for this section that the $\tau_i$ in Definition \ref{def:sxs-sparse-rank-k-matrix} are bounded by $\tau_i\leq\poly(n)$, to support Corollary \ref{cor:sxs-sparse-rank-k-eps-net}. The exponential dependence on $s$ and $k$ is necessary, as we show later in Section \ref{sec:computational-complexity}. The sketch is just a Gaussian matrix, and the analysis of the algorithm follows from Gordon's theorem, which is a dimensionality reduction result which shows the preservation of norms of sets under random projections, when the target dimension scales with the \emph{Gaussian width} of the set.  

\begin{Definition}[Gaussian width]
    The \emph{Gaussian width} of a subset $S\subset\mathbb R^n$ is defined as
    \[
        w(S) \coloneqq \E_{\bfg\sim\mathcal N(0,\bfI_n)}\bracks*{\sup_{\bfx\in S}\angle*{\bfg, \bfx}}.
    \]
\end{Definition}

Gordon's theorem essentially states that a random projection to approximately
\[
    \frac1{\eps^2}\parens*{w(S)^2 + \log\frac1\delta}
\]
dimensions suffices to preserve the norms of all points in $S$, with probability at least $1-\delta$.

\begin{theorem}[Gordon's Theorem (Theorem 9.11, \cite{bandeira2020mathematics})]\label{thm:gordon}
Let $\bfG\sim\mathcal N(0,1)^{m\times d}$ and let $S\subseteq \mathbb S^{d-1}$ be a closed subset of the $d$-dimensional unit sphere. Let
\[
    a_m \coloneqq \E_{\bfg\sim\mathcal N(0,\bfI_{m})}\norm{\bfg}_2 = \Theta(\sqrt{m}).
\]
Then for $\eps > \sqrt{w(S)^2/a_m^2}$, 
\[
    \Pr\braces*{(1-\eps)\norm{\bfx}_2 \leq \frac1{a_m}\norm{\bfG\bfx}_2 \leq (1+\eps)\norm{\bfx}_2, \forall \bfx\in S} \geq 1 - 2\exp\parens*{-\parens*{\eps - \frac{w(S)}{a_m}}^2 m}.
\]
\end{theorem}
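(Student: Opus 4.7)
My plan is to reduce the statement to a standard comparison between two Gaussian processes, extract expectation bounds via Gordon's min-max and Sudakov-Fernique inequalities, and then promote these to a high-probability bound via Gaussian Lipschitz concentration.

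First, I would rewrite the quantities of interest in a bilinear form:
\[
    \sup_{\bfx\in S}\norm{\bfG\bfx}_2 = \sup_{\bfx\in S}\sup_{\bfu\in\mathbb S^{m-1}} \bfu^\top\bfG\bfx, \qquad \inf_{\bfx\in S}\norm{\bfG\bfx}_2 = \inf_{\bfx\in S}\sup_{\bfu\in\mathbb S^{m-1}} \bfu^\top\bfG\bfx.
\]
This exposes the indexed Gaussian process $X_{\bfx,\bfu} \coloneqq \bfu^\top\bfG\bfx$ with covariance $\E[X_{\bfx,\bfu}X_{\bfx',\bfu'}] = \angle{\bfx,\bfx'}\angle{\bfu,\bfu'}$. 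Alongside it, introduce the decoupled process $Y_{\bfx,\bfu} \coloneqq \bfg^\top\bfx + \bfh^\top\bfu$ with $\bfg\sim\mathcal N(0,\bfI_d)$ and $\bfh\sim\mathcal N(0,\bfI_m)$ independent.

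Second, I would verify the increment inequalities needed for the comparison theorems. A direct computation on $\mathbb S^{d-1}\times\mathbb S^{m-1}$ gives $\E[(X_{\bfx,\bfu}-X_{\bfx,\bfu'})^2] = \norm{\bfu-\bfu'}_2^2 = \E[(Y_{\bfx,\bfu}-Y_{\bfx,\bfu'})^2]$, and the cross-increment comparison
\[
    \E[(Y_{\bfx,\bfu}-Y_{\bfx',\bfu'})^2] - \E[(X_{\bfx,\bfu}-X_{\bfx',\bfu'})^2] = 2(1-\angle{\bfx,\bfx'})(1-\angle{\bfu,\bfu'}) \geq 0,
\]
which is the crux of the argument and the one routine step that actually requires care. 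Sudakov-Fernique then immediately yields $\E[\sup_{\bfx,\bfu}X_{\bfx,\bfu}] \leq \E[\sup_{\bfx,\bfu}Y_{\bfx,\bfu}] = w(S) + a_m$, while Gordon's min-max inequality applied to the same increments (with $\bfx$ as the min index and $\bfu$ as the max index) gives $\E[\inf_{\bfx}\sup_{\bfu}X_{\bfx,\bfu}] \geq \E[\inf_{\bfx}\sup_{\bfu}Y_{\bfx,\bfu}]$. The right-hand side evaluates explicitly: the inner sup over $\bfu$ yields $\norm{\bfh}_2$, and by symmetry of $\bfg$, $\E[\inf_{\bfx\in S}\bfg^\top\bfx] = -w(S)$, so we obtain $\E[\inf_{\bfx}\norm{\bfG\bfx}_2] \geq a_m - w(S)$.

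Third, I would invoke the Gaussian concentration inequality for Lipschitz functionals. Both $\bfG\mapsto\sup_{\bfx\in S}\norm{\bfG\bfx}_2$ and $\bfG\mapsto\inf_{\bfx\in S}\norm{\bfG\bfx}_2$ are $1$-Lipschitz in the Frobenius norm (since $|\norm{\bfG\bfx}_2 - \norm{\bfG'\bfx}_2|\leq\norm{\bfG-\bfG'}_F\norm{\bfx}_2$ and $\norm{\bfx}_2=1$), so Borell-TIS gives
\[
    \Pr\braces*{|f(\bfG) - \E f(\bfG)| > t} \leq 2\exp(-t^2/2)
\]
for each of these functionals. Combining the two tail bounds via a union bound, with $t = (\eps a_m - w(S))$, yields the two-sided statement
\[
    (1-\eps)\norm{\bfx}_2 \leq \tfrac1{a_m}\norm{\bfG\bfx}_2 \leq (1+\eps)\norm{\bfx}_2 \quad\text{for all }\bfx\in S
\]
with the claimed failure probability $2\exp(-(\eps-w(S)/a_m)^2 m)$, provided $\eps > w(S)/a_m$ so that $t$ is positive.

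The only step with any real content is verifying the covariance increment comparison $(1-\angle{\bfx,\bfx'})(1-\angle{\bfu,\bfu'})\geq 0$ and correctly matching it to the min-max hypothesis of Gordon's theorem; everything else is a direct application of standard Gaussian process machinery.
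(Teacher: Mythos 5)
The paper does not prove this statement — it is cited verbatim as Theorem 9.11 of Bandeira's lecture notes — so there is no in-paper proof to compare against. Your argument is the canonical proof of Gordon's theorem and it is correct: you lift $\norm{\bfG\bfx}_2$ to the bilinear process $X_{\bfx,\bfu}=\bfu^\top\bfG\bfx$, compare against the decoupled process $Y_{\bfx,\bfu}=\bfg^\top\bfx+\bfh^\top\bfu$ via the increment computation $\E[(Y-Y')^2]-\E[(X-X')^2]=2(1-\langle\bfx,\bfx'\rangle)(1-\langle\bfu,\bfu'\rangle)\geq 0$, extract $\E\sup_\bfx\norm{\bfG\bfx}_2\leq a_m+w(S)$ via Sudakov--Fernique and $\E\inf_\bfx\norm{\bfG\bfx}_2\geq a_m-w(S)$ via Gordon's min-max comparison, and then promote both to two-sided high-probability bounds by Borell--TIS applied to the $1$-Lipschitz maps $\bfG\mapsto\sup_\bfx\norm{\bfG\bfx}_2$ and $\bfG\mapsto\inf_\bfx\norm{\bfG\bfx}_2$ with threshold $t=\eps a_m-w(S)$. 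The one step that does warrant care, and which you flag, is the direction of the Gordon comparison: with $\bfx$ as the min index, $X$ has equal increments at fixed $\bfx$ and \emph{smaller} increments across different $\bfx$ than $Y$ does, which is exactly the pattern that yields $\E\inf_\bfx\sup_\bfu X\geq\E\inf_\bfx\sup_\bfu Y$. Two minor cosmetic remarks: (i) the bare increment form of Gordon's comparison used here quietly dispenses with the equal-variance hypothesis (here $\E X^2=1\neq 2=\E Y^2$); this is fine and is standard since one may add an independent $\mathcal N(0,1)$ to $X$ to equalize variances without changing increments or the expected min-max, but it is worth a sentence. (ii) Your concentration step gives exponent $(\eps a_m-w(S))^2/2=a_m^2(\eps-w(S)/a_m)^2/2$, and since $a_m^2\leq m$ this is nominally weaker by roughly a factor of $2$ than the stated $m(\eps-w(S)/a_m)^2$; this is a constant-factor slack inherited from how one states Borell--TIS and does not represent a gap.
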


The use of Gordon's theorem for low rank matrix recovery is standard, see e.g.\ \cite{vershynin2015estimation}. We show that this technique can be used for sparse low rank matrix approximation as well, by setting the failure rate $\delta$ to $\binom{n}{s}^{-2}$ and then iterating over all $s\times s$ submatrices of $\bfA$. 

\begin{theorem}\label{thm:net-iterate-fslra}
    There is a randomized sketching algorithm which solves the Frobenius sparse low rank approximation with
    \[
        O\parens*{\frac{sk}{\eps^2}\log\frac{n}{s}}
    \]
    measurements and takes time
    \[
        (\poly(n)k/\eps)^{sk}
    \]
    to output a $\bfD\in\mathcal S_{s,k}$ such that
    \[
        \norm{\bfA - \bfD}_F^2 \leq (1+\eps)\min_{\bfC\in\mathcal S_{s,k}}\norm*{\bfA - \bfC}_F^2.
    \]
\end{theorem}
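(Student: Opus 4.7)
The plan is a brute-force algorithm: maintain a Gaussian linear sketch of $\vc(\bfA)$, and at query time iterate over an $\eps$-net of $s\times s$-sparse rank $k$ matrices, returning the net point that minimizes the sketched error.

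First I would set $m = O(sk\log(n/s)/\eps^2)$ and sketch $\vc(\bfA)$ using an $m \times nd$ rescaled i.i.d.\ Gaussian matrix $\bfS$. By rescaling $\bfA$ we may assume $\|\bfA\|_F \leq \poly(n)$, so the optimal $\bfB^* \in \mathcal{S}_{s,k}$ satisfies $\|\bfB^*\|_F \leq 2\|\bfA\|_F \leq \poly(n)$ (since the zero matrix is feasible), placing it in $\mathcal{S}_{s,k}^{\mathrm{bounded}}$. At query time I would take the $\eps'$-net $\mathcal{N}$ of $\mathcal{S}_{s,k}^{\mathrm{bounded}} \cap \mathbb{S}^{n\times d - 1}$ guaranteed by Corollary~\ref{cor:sxs-sparse-rank-k-eps-net} for $\eps' = \eps/\poly(n)$, scale each net element along a $\poly(n,1/\eps)$-grid of magnitudes in $[0,\,2\|\bfA\|_F]$, and output the scaled net element $\hat\bfD$ that minimizes $\|\bfS\vc(\bfA - \hat\bfD)\|_2^2$.

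The analysis is Johnson--Lindenstrauss over the net. Applying the distributional JL property of Gaussians to each fixed vector $\vc(\bfA) - \vc(\bfB)$ with $\bfB$ ranging over the scaled net, and taking a union bound using $\log|\mathcal{N}_{\mathrm{scaled}}| = O(sk \log(nk/\eps)) = O(sk \log(n/s))$ (the last equality holds up to lower-order factors in $k$ and $1/\eps$ absorbed into constants and into the $\poly(\eps^{-1})$ already present), the sketch $\bfS$ preserves $\|\bfA - \bfB\|_F^2$ up to $(1\pm\eps)$ for every $\bfB$ in the scaled net simultaneously with high probability. By the net guarantee there is a scaled net point $\bfD^\star$ with $\|\bfA - \bfD^\star\|_F^2 \leq (1+\eps)\|\bfA - \bfB^*\|_F^2$. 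Since $\hat\bfD$ minimizes the sketched cost, $\|\bfS\vc(\bfA - \hat\bfD)\|_2 \leq \|\bfS\vc(\bfA - \bfD^\star)\|_2$, and the simultaneous JL guarantee transfers this inequality back to the Frobenius norm, yielding $\|\bfA - \hat\bfD\|_F^2 \leq (1+O(\eps))\|\bfA - \bfB^*\|_F^2$, which gives the claim after rescaling $\eps$ by a constant factor.

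The running time is dominated by enumerating the $(\poly(n)k/\eps)^{sk}$ scaled net points and evaluating the sketched error at each. The main subtlety to be careful with is tracking the value of $\eps'$ needed in Corollary~\ref{cor:sxs-sparse-rank-k-eps-net} so that the net error, after rescaling by the $\poly(n)$ magnitude of $\bfA$, does not dominate the desired $(1+\eps)$ relative error budget; this only costs an additional $\log(1/\eps') = O(\log(n/\eps))$ factor, which is absorbed into the bound on $\log|\mathcal{N}_{\mathrm{scaled}}|$. One also needs to verify that the $\tau_i$-boundedness assumption of Corollary~\ref{cor:sxs-sparse-rank-k-eps-net} is automatically satisfied after the initial rescaling, which follows from $\|\bfB^*\|_F \leq \poly(n)$ and a standard argument bounding individual components of an SVD-like decomposition by the total Frobenius norm.
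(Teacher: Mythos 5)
Your high-level plan---sketch $\vc(\bfA)$ with a Gaussian matrix and iterate over a net of $\mathcal S_{s,k}$---is the same as the paper's, but the way you account for the measurement bound has a genuine gap. Applying distributional Johnson--Lindenstrauss pointwise to $\vc(\bfA)-\vc(\bfB)$ and union bounding over the scaled net forces $m = \Omega(\log|\mathcal N_{\mathrm{scaled}}|/\eps^2)$, and since $|\mathcal N_{\mathrm{scaled}}| = (\poly(n)k/\eps)^{sk}$, this gives $m = \Omega(sk\log(\poly(n)k/\eps)/\eps^2) = \Omega(sk(\log n)/\eps^2)$. Your parenthetical claim that $O(sk\log(nk/\eps)) = O(sk\log(n/s))$ is not correct: the ratio $\log n / \log(n/s)$ is $\Theta(\log n)$ whenever $s$ is a constant fraction of $n$ (take $s=n/2$), and the $\log(1/\eps)$ summand likewise cannot be absorbed into a fixed power $1/\eps^2$. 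So as written your argument loses a $\log n$ over the stated bound, and the ``absorbed into lower-order factors'' step is exactly where it breaks.

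The paper closes this gap by invoking Gordon's theorem instead of pointwise JL plus a union bound over the fine net. For a \emph{fixed} support $S\times T$, the set $D' = \{\bfA - \bfX : \bfX \text{ supported on } S\times T,\ \rank\bfX \le k,\ \norm{\bfX}_F=1\}$ has Gaussian width $O(\sqrt{sk})$, so Gordon's theorem preserves the norms of the \emph{entire continuum} of such $\bfA-\bfX$ simultaneously with $m = O((sk + \log(1/\delta))/\eps^2)$ measurements. The only union bound required is over the $\binom{n}{s}^2$ choices of support, giving $\log(1/\delta) = O(s\log(n/s))$ and hence $m = O((sk + s\log(n/s))/\eps^2) = O(sk\log(n/s)/\eps^2)$. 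The fine $\eps$-net of $\mathcal S_{s,k}^{\mathrm{bounded}}$ is used only \emph{algorithmically} to search the already-preserved feasible set; it never enters the failure-probability calculation, so its $\poly(n)$-granularity costs running time but not sketch dimension. To repair your proof you would need to replace the per-net-point JL step with Gordon's theorem (or an equivalent chaining argument that exploits the $O(sk)$-dimensional structure of the feasible set on each fixed support).
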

\begin{proof}
    Note that we may write the objective as 
    \[
        \min_{\text{$s\times s$ sparse rank $k$ $\bfC$}} \norm{\bfA - \bfC}_F^2 = \min_{S, T\in \binom{[n]}{s}\times \binom{[n]}{s}} \min_{\text{rank $k$ $\bfC$}}\norm{\bfA\vert_{S\times T} - \bfC}_F^2
    \]
    where $\bfA\vert_{S\times T}$ denotes the restriction of $\bfA$ to the submatrix indexed by coordinates $S \times T$. 
    
    By \cite[Proposition 10.4]{vershynin2015estimation}, the Gaussian mean width of the set
    \[
        D = \braces*{\bfX\in\mathbb R^{s\times s} : \norm{\bfX}_F^2 = 1, \rank(\bfX) \leq k}
    \]
    is at most
    \[
        w(D) \leq 4\sqrt{sk}. 
    \]
    We now fix $S,T\in\binom{[n]}{s}\times\binom{[n]}{s}$. Then by the translational invariance of Gaussian width \cite[Proposition 3.5]{vershynin2015estimation} as well as the fact that padding zeros does not change the Gaussian width, we also have that $w(D')\leq 4\sqrt{sk}$ for the set
    \[
        D' = \braces*{\bfA - \bfX \in\mathbb R^{n\times d} : \norm{\bfX}_F^2 = 1, \rank(\bfX) \leq k, \text{$\bfX$ supported on $S\times T$}}.
    \]
    Now let
    \[
        m = \Omega\parens*{\frac1{\eps^2}\parens*{w(D') + \log\binom{n}{s}}} = \Omega\parens*{\frac{sk}{\eps^2}\log \frac{n}{s}}.
    \]
    Then by Gordon's theorem (Theorem \ref{thm:gordon}), an $m\times s^2$ Gaussian matrix $G\sim\mathcal N(0,1)^{m\times s^2}$ will satisfy
    \[
        \norm*{\frac1{a_m} \bfG\vc(\bfA - \bfX)}_2^2 = (1\pm\eps) \norm{\bfA - \bfX}_F^2
    \]
    for all $\bfX$ of rank at most $k$ supported on $S\times T$, with probability at least $1 - \binom{n}{s}^{-2}/200$. Thus, solving the problem
    \[
        \min_{\bfX\in D}\norm*{\frac1{a_m} \bfG\vc(\bfA - \bfX)}_2^2
    \]
    provides a $(1+\eps)$ multiplicative error solution for
    \[
        \min_{\bfX\in D}\norm{\bfA - \bfX}_F^2
    \]
    with probability at least $1 - \binom{n}{s}^{-2}/200$. By a union bound, this holds for all $S,T\in\binom{[n]}{s}\times\binom{[n]}{s}$ with probability at least $1 - 1/200$. 
    
    To solve the approximate minimization problem supported on $S\times T$, we simply iterate over an $\eps$-net using Corollary \ref{cor:sxs-sparse-rank-k-eps-net}, which has size at most
    \[
        (\poly(n)k/\eps)^{sk}.\qedhere
    \]
\end{proof}

\subsubsection{Polynomial Time Bicriteria Algorithm with Relative Error}

Although the above algorithm runs in exponential time, by our results in Section \ref{sec:computational-complexity}, we cannot hope for polynomial time algorithms for the sparse low rank approximation problem. In order to find polynomial time algorithms, we relax our requirement of outputting an  $s\times s$-sparse rank $k$ matrix and instead allow for larger matrices. 

As a warm-up to our more technically involved additive error bicriteria algorithm of Section \ref{sec:frob-slra-polytime-add-err} and Theorem \ref{thm:frob-slra-polytime-add-err} achieving roughly $\tilde O(sk^2)$ measurements, we prove a relative error algorithm achieving roughly $\tilde O(s^2 k^2)$ measurements. More specifically, we show the following:

\begin{theorem}\label{thm:frob-slra-polytime-rel-err}
    Let $\bfA\in\mathbb R^{n\times d}$. There is a randomized sketching algorithm which makes
    \[
        O\parens*{\frac{s^2k^2}{\eps^4}\log(nd) + \frac{sk}{\eps}\log^2(nd)}
    \]
    measurements to $\bfA$ and outputs a rank $k$ matrix $\bfD$ which is supported on an $O(sk/\eps)\times O(sk/\eps)$ submatrix such that
    \[
        \norm*{\bfA-\bfD}_F^2 \leq (1+\eps)\min_{\bfB\in\mathcal S_{s,k}}\norm*{\bfA-\bfB}_F^2
    \]
    in polynomial time.
\end{theorem}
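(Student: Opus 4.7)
The plan is to implement the three-step outline sketched just after Table~\ref{tab:frob-norm-slra}. In Step 1, I identify sets $R\subseteq[n]$ and $C\subseteq[d]$ of size $O(sk/\eps)$ containing the support of some near-optimal $s\times s$-sparse rank-$k$ matrix. In Step 2, I use a high-capacity \textsf{CountSketch} to recover every entry of $\bfA|_{R\times C}$ up to small additive error. In Step 3, I take $\hat\bfD$ to be the exact rank-$k$ truncated SVD of the recovered submatrix, which is supported on $R\times C$.

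For Step 1 I maintain $O(\log^2(nd))$ independent \textsf{CountSketch} row-norm estimators as in Corollary~\ref{lem:cs-row-norm-approximation} with tail parameter $\Theta(\eps/(sk))$ and constant multiplicative slack $\alpha$, and symmetrically for columns, for $O((sk/\eps)\log^2(nd))$ measurements in total. I let $R$ and $C$ be the top $O(sk/\eps)$ rows and columns by estimated $\ell_2$ norm. The structural goal is
\[
\min_{\substack{\bfB\in\mathcal{S}_{s,k}\\\mathrm{supp}(\bfB)\subseteq R\times C}}\|\bfA-\bfB\|_F^2 \;\le\; (1+O(\eps))\,\mathrm{OPT},
\]
with witness $\bfB' := \bfB^*|_{R\times C}$ (still $s\times s$-sparse rank $\le k$). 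The identity $\|\bfA-\bfB'\|_F^2 - \mathrm{OPT} = 2\langle \bfA|_{\overline{R\times C}},\bfB^*|_{\overline{R\times C}}\rangle - \|\bfB^*|_{\overline{R\times C}}\|_F^2$, combined with Cauchy--Schwarz and AM--GM, reduces matters to proving $\|\bfB^*|_{\overline{R\times C}}\|_F^2 = O(\eps^2)\,\mathrm{OPT}$. This splits into row/column tails of $\bfB^*$; I control the row tail by partitioning the $\le sk$ rows of $\mathrm{supp}(\bfB^*)$ against a threshold so that below-threshold rows contribute at most $sk$ times the threshold, while above-threshold rows that $R$ fails to capture must be ``hidden'' by residual cancellation and are charged to $\mathrm{OPT}$ by counting.

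For Step 2 I use $O(\log(nd))$ independent copies of an $O(s^2k^2/\eps^4)$-bucket \textsf{CountSketch} (Lemma~\ref{lem:cs-tail-error}), costing $O((s^2k^2/\eps^4)\log(nd))$ measurements. A union bound over the $O((sk/\eps)^2)$ entries in $R\times C$ recovers $\tilde\bfA$ with $|\tilde\bfA_{i,j}-\bfA_{i,j}|^2 \le (\eps^4/(s^2k^2))\|\bfA_{\overline{[s^2k^2/\eps^4]}}\|_F^2$ for all $(i,j)\in R\times C$, so
\[
\|\tilde\bfA|_{R\times C}-\bfA|_{R\times C}\|_F^2 \;\le\; O(\eps^2)\|\bfA_{\overline{[s^2k^2/\eps^4]}}\|_F^2 \;\le\; O(\eps^2)\,\mathrm{OPT},
\]
using $|\mathrm{supp}(\bfB^*)|\le s^2k \le s^2k^2/\eps^4$ together with the fact that any matrix with $\le B$ nonzeros has residual Frobenius squared at least $\|\bfA_{\overline{[B]}}\|_F^2$. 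In Step 3 the truncated SVD $\hat\bfD$ satisfies $\|\tilde\bfA|_{R\times C}-\hat\bfD\|_F^2 \le \|\tilde\bfA|_{R\times C}-\bfB'|_{R\times C}\|_F^2$ since $\bfB'|_{R\times C}$ is rank $\le k$. Two triangle inequalities with AM--GM at parameter $\Theta(\eps)$ then give $\|\bfA|_{R\times C}-\hat\bfD\|_F^2 \le (1+O(\eps))\|\bfA|_{R\times C}-\bfB'|_{R\times C}\|_F^2 + O(\eps^{-1})\|\tilde\bfA-\bfA|_{R\times C}\|_F^2$; adding $\|\bfA|_{\overline{R\times C}}\|_F^2$ to both sides and chaining yields $\|\bfA-\hat\bfD\|_F^2 \le (1+O(\eps))\|\bfA-\bfB'\|_F^2 + O(\eps)\,\mathrm{OPT} \le (1+O(\eps))\,\mathrm{OPT}$, and rescaling $\eps$ finishes.

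The main obstacle is the Step 1 structural claim: that $O(sk/\eps)$ heaviest rows and columns of $\bfA$ always host a $(1+\eps)$-approximate sparse rank-$k$ solution. The delicate case is ``hidden'' rows in $\mathrm{supp}(\bfB^*)$ where the residual $(\bfA-\bfB^*)e_i$ nearly cancels $\bfB^*e_i$, making the row invisible to any row-norm-based estimator; controlling their total missed mass requires carefully balancing the detection threshold against the $\mathrm{OPT}$ budget via a counting argument and then translating that balance into the \textsf{CountSketch} tail parameter, without knowing $\mathrm{OPT}$ up front (for example, by iterating over geometric guesses for $\mathrm{OPT}$ and returning the best final output).
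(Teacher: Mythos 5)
Your high-level algorithm (identify $O(sk/\eps)$ heavy rows and columns, \textsf{CountSketch}-recover the submatrix entrywise, then take an exact rank-$k$ SVD) matches the paper's. Steps~2 and~3 are fine. The gap is in Step~1, specifically in the reduction you chose.

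You expand $\|\bfA-\bfB'\|_F^2 - \mathrm{OPT} = 2\langle \bfA|_{\overline{R\times C}},\bfB^*|_{\overline{R\times C}}\rangle - \|\bfB^*|_{\overline{R\times C}}\|_F^2$, then say that Cauchy--Schwarz and AM--GM ``reduce matters to proving $\|\bfB^*|_{\overline{R\times C}}\|_F^2 = O(\eps^2)\,\mathrm{OPT}$.'' This is the wrong target, and I do not believe it is achievable. Each row $i\in\mathrm{supp}(\bfB^*)$ that $R$ misses is light in $\bfA$, so $\|\bfe_i^\top\bfB^*\|_2^2 \le 2\tau + 2\|\bfe_i^\top(\bfA-\bfB^*)\|_2^2$, and summing over the $\le sk$ such rows gives only $\|\bfB^*|_{\overline{R\times C}}\|_F^2 \le 2sk\tau + 2\,\mathrm{OPT} = O(\mathrm{OPT})$, not $O(\eps^2)\,\mathrm{OPT}$. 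The residual term is the obstruction: the optimizer's mass on a light row can be almost entirely ``hidden'' by the residual, and nothing forces the residual on $\mathrm{supp}(\bfB^*)$ to be tiny. Your proposed counting argument (``above-threshold rows that $R$ fails to capture\ldots are charged to $\mathrm{OPT}$ by counting'') does not fix this: the relevant quantity is the Frobenius mass of those rows of $\bfB^*$, and an entire $\Theta(\mathrm{OPT})$-sized residual can sit on them. Forcing $\|\bfB^*|_{\overline{R\times C}}\|_F^2 = O(\eps^2)\,\mathrm{OPT}$ by shrinking $\tau$ would blow up $|R|$ and $|C|$ past $O(sk/\eps)$, breaking the claimed sketch size and support guarantee.

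What you actually need is much easier and bypasses Cauchy--Schwarz entirely. Completing the square entrywise gives the exact identity
\[
\|\bfA-\bfB'\|_F^2 - \mathrm{OPT} \;=\; \sum_{(i,j)\in\mathrm{supp}(\bfB^*)\cap\overline{R\times C}}\Bigl[(\bfA_{ij})^2 - (\bfA_{ij}-\bfB^*_{ij})^2\Bigr] \;\le\; \bigl\|\bfA|_{\mathrm{supp}(\bfB^*)\cap\overline{R\times C}}\bigr\|_F^2,
\]
and the right-hand side is trivially $\le 2sk\tau = O(\eps)\,\mathrm{OPT}$ because it sits on at most $sk$ light rows and at most $sk$ light columns, each of squared norm $<\tau$. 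That is all Step~1 requires, and it is essentially the content of the paper's Lemma~\ref{lem:heavy-rows-cols}, which proves a slightly stronger version (comparing to the truncated SVD of $\bfA$ restricted to the heavy intersection rather than to the restriction of $\bfB^*$) via a Pythagorean-theorem chain. Once you replace your reduction with this identity, the rest of your sketch matches the paper's Lemmas~\ref{lem:find-large-norm-row} and~\ref{lem:hatA-lra-to-A-lra} and goes through. One smaller point: there is no need to iterate over geometric guesses for $\mathrm{OPT}$; taking the top $O(sk/\eps)$ rows (and columns) by estimated \textsf{CountSketch} norm is already a superset of those with squared norm $\ge \tau$, as in the paper's Lemma~\ref{lem:find-large-norm-row}, so the data-dependent threshold never needs to be known explicitly.
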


Throughout this section, let $\bfB$ denote any $s\times s$ sparse rank $k$ matrix. For our relative error polynomial time algorithm, our first observation is that
\[
    \norm*{\bfA - \bfB}_F^2 \geq \norm*{\bfA_{\overline{[s^2k]}}}_F^2
\]
where $\bfA_{\overline{[s^2k]}}$ is the matrix obtained by zeroing out the $s^2 k$ largest entries in absolute value from $\bfA$. We can similarly note that 
\[
    \norm*{\bfA - \bfB}_F^2 \geq \norm*{\bfA_{\overline{[sk]},*}}_F^2
\]
where $\bfA_{\overline{[sk]},*}$ denotes the matrix obtained by zeroing out the $sk$ heaviest rows of $\bfA$, since $\bfB$ is supported on at most $sk$ rows. These tail guarantees are compatible with guarantees achievable efficiently in a stream using \textsf{CountSketch} (see Section \ref{sec:countsketch}). 

Our next observation is to note that a relative error solution is supported on the intersection of the rows and columns of $\bfA$ with squared norm at least $\tau$, for
\[
    \tau = \frac{\eps}{sk}\norm*{\bfA_{\overline{[sk/\eps]},*}}_F^2 \leq \frac{\eps}{sk}\norm*{\bfA - \bfB}_F^2,
\]
since we miss at most $sk$ rows of the optimal solution, and each of these will have squared norm at most $\tau$. This is formalized in the following lemma.

\begin{Lemma}\label{lem:heavy-rows-cols}
    Consider an $s\times s$-sparse rank $k$ matrix
    \[
        \bfB = \sum_{i=1}^k \tau_i \bfx_i \bfy_i^\top
    \]
    that minimizes $\norm*{\bfA - \bfB}_F^2$. Let $\tau>0$ be a threshold parameter. Then, there exists a rank $k$ matrix $\bfD$ supported on the rows and columns of $\bfA$ with norm at least $\tau$ such that
    \[
        \norm*{\bfA - \bfD}_F^2 \leq \norm*{\bfA - \bfB}_F^2 + sk\tau.
    \]
\end{Lemma}
\begin{proof}
Let $W$ be the support of an $sk\times sk$ submatrix that contains $\bfB$, let $X\subseteq W$ be the part of $W$ contained in columns of $\bfA$ with norm at most $\tau$, and let $W\setminus X$ be the part of $W$ contained in columns of $\bfA$ with norm at least $\tau$ (see Figure \ref{fig:matrix-supports}). 

\begin{prooffig}
    \centering
    \captionsetup{type=figure}
    \begin{tikzpicture}
        \begin{scope}

            \draw[step=0.5] (0,0) grid (6,6);
            \coordinate (input);

            \draw[line width=1mm,dashed] (2,2) -- (5,2);
            \draw[line width=1mm,dashed] (2,2) -- (2,5);
            \draw[line width=1mm,dashed] (2,5) -- (5,5);
            \draw[line width=1mm,dashed] (5,2) -- (5,5);

            \draw [fill=gray, draw=black, ultra thick] (2,2) rectangle (3,3);
            \draw [fill=gray, draw=black, ultra thick] (3,2) rectangle (4,3);
            \draw [fill=gray, draw=black, ultra thick] (4,4) rectangle (5,5);
            \draw [fill=gray, draw=black, ultra thick] (3,3.5) rectangle (4,4.5);
            \draw [fill=gray, draw=black, ultra thick] (3.5,3) rectangle (4.5,4);

            \node at (4,3.5) {$\bfB$};

        \end{scope}

        \begin{scope}[xshift=8cm]
            \draw[step=0.5] (0,0) grid (6,6);
            \coordinate (input);

            \draw [fill=gray, draw=none, ultra thick] (2,2) rectangle (5,5);

            \draw[line width=1.5mm,line cap=round] (2,2) -- (5,2);
            \draw[line width=1.5mm,line cap=round] (2,2) -- (2,5);
            \draw[line width=1.5mm,line cap=round] (2,5) -- (5,5);
            \draw[line width=1.5mm,line cap=round] (5,2) -- (5,5);

            \draw[line width=1.5mm,line cap=round] (3,2) -- (3,5);

            \node at (2.5,3.5) {$X$};
            \node at (4,3.5) {$W\setminus X$};
        \end{scope}
    \end{tikzpicture}
    
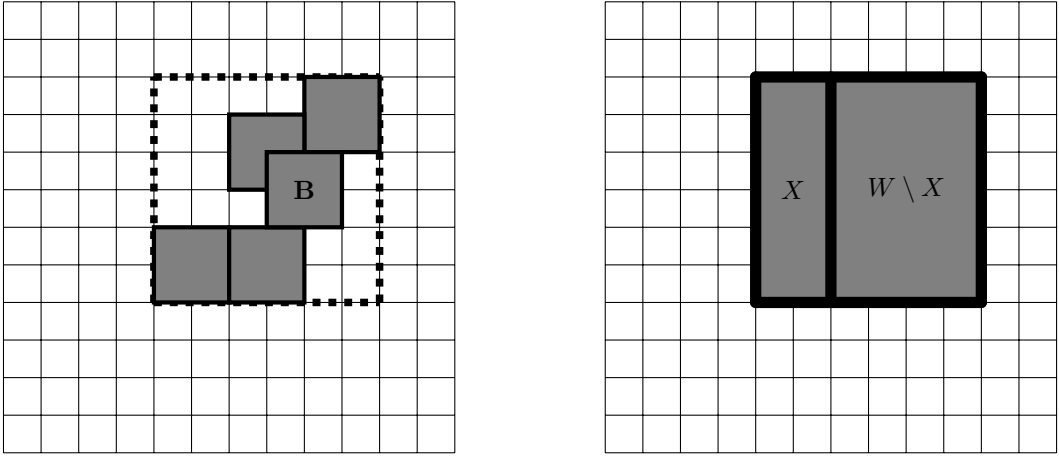
\captionof{figure}{The supports $W, X, W\setminus X$ to be used in the proof.}
    \label{fig:matrix-supports}
\end{prooffig}

Let $\bfA\mid_W$ denote the matrix $\bfA$ restricted to $W$, and similarly define $\bfA\mid_X$ and $\bfA\mid_{W\setminus X}$. Then, by repeatedly using the Pythagorean theorem, we have that
\begin{align*}
    \norm*{\bfA - (\bfA\mid_W)_k}_F^2 &= \norm*{\bfA - \bfA\mid_W}_F^2 + \norm*{\bfA\mid_W - (\bfA\mid_W)_k}_F^2 \\
    &= \norm*{\bfA - \bfA\mid_{W\setminus X}}_F^2 - \norm*{\bfA\mid_X}_F^2 + \norm*{\bfA\mid_W - (\bfA\mid_W)_k}_F^2 \\
    &= \norm*{\bfA - \bfA\mid_{W\setminus X}}_F^2 - \norm*{\bfA\mid_X}_F^2 + \norm*{\bfA\mid_W - (\bfA\mid_W)_k}_F^2 \\
    &\hspace{3em}+\norm*{\bfA\mid_{W\setminus X} - (\bfA\mid_{W\setminus X})_k}_F^2 - \norm*{\bfA\mid_{W\setminus X} - (\bfA\mid_{W\setminus X})_k}_F^2 \\
    &= \norm*{\bfA - (\bfA\mid_{W\setminus X})_k}_F^2 - \norm*{\bfA\mid_X}_F^2 + \norm*{\bfA\mid_W - (\bfA\mid_W)_k}_F^2 \\
    &\hspace{3em}- \norm*{\bfA\mid_{W\setminus X} - (\bfA\mid_{W\setminus X})_k}_F^2.
\end{align*}
Note that if we restrict $(\bfA\mid_W)_k$ to $W\setminus X$, then this is a rank $k$ matrix supported on $W\setminus X$ and thus
\[
    \norm*{\bfA\mid_{W\setminus X} - (\bfA\mid_{W\setminus X})_k}_F^2 \leq \norm*{\bfA\mid_{W\setminus X} - ((\bfA\mid_{W})_k)\mid_{W\setminus X}}_F^2 = \norm*{(\bfA_W - (\bfA\mid_W)_k)\mid_{W\setminus X}}_F^2.
\]
Furthermore, we can remove the restriction to $W\setminus X$ so that the above is bounded above by $\norm*{\bfA_W - (\bfA\mid_W)_k}_F^2$, so
\[
    \norm*{\bfA\mid_W - (\bfA\mid_W)_k}_F^2 - \norm*{\bfA\mid_{W\setminus X} - (\bfA\mid_{W\setminus X})_k}_F^2 \geq 0. 
\]
We thus have that
\begin{align*}
    \norm*{\bfA - (\bfA\mid_W)_k}_F^2 \geq \norm*{\bfA - (\bfA\mid_{W\setminus X})_k}_F^2 - \norm*{\bfA\mid_X}_F^2. 
\end{align*}
Rearranging and noting that
\[
    \norm*{\bfA_X}_F^2 \leq sk\cdot \tau,
\]
we have that
\[
    \norm*{\bfA - (\bfA\mid_{W\setminus X})_k}_F^2 \leq \norm*{\bfA - (\bfA\mid_W)_k}_F^2 + sk\tau.
\]
Finally, since $\bfB$ is a rank $k$ matrix supported on $W$, we have that
\[
    \norm*{\bfA - (\bfA\mid_{W\setminus X})_k}_F^2 \leq \norm*{\bfA - \bfB}_F^2 + sk\tau.\qedhere
\]
\end{proof}

We set the parameter $\tau$ in the above lemma to $\tau = \frac{\eps}{sk}\norm*{\bfA_{\overline{[sk/\eps]},*}}_F^2$, so that there exists a rank $k$ solution $\bfD$ such that
\begin{align*}
    \norm*{\bfA - \bfD}_F^2 &\leq \norm*{\bfA - \bfB}_F^2 + sk\cdot\frac{\eps}{sk}\norm*{\bfA_{\overline{[sk/\eps]},*}}_F^2 \\
    &\leq  \norm*{\bfA - \bfB}_F^2 + \eps\norm*{\bfA_{\overline{[sk/\eps]},*}}_F^2 \\
    &\leq  \norm*{\bfA - \bfB}_F^2 + \eps\norm*{\bfA - \bfB}_F^2 \\
    &= (1+\eps)\norm*{\bfA - \bfB}_F^2
\end{align*}
We can efficiently find such rows and columns using Lemma \ref{lem:cs-row-norm-approximation}:

\begin{Lemma}\label{lem:find-large-norm-row}
    With
    \[
        O\parens*{\frac{sk}{\eps}\log^2(nd)}
    \]
    measurements, one can find a set $S\subseteq[n]$ of rows of size $O(sk/\eps)$ containing all $i\in[n]$ such that
    \[
        \norm*{\bfe_i^\top\bfA}_2^2 \geq \frac{\eps}{sk}\norm*{\bfA_{\overline{[sk/\eps]},*}}_F^2.
    \]
\end{Lemma}
\begin{proof}
We use Lemma \ref{lem:cs-row-norm-approximation} with $\eps$ in the lemma set to $\frac{\eps}{100sk}$ and $\alpha$ set to $1/4$. Suppose row $i\in[n]$ has norm at least
\[
    \norm*{\bfe_i^\top\bfA}_2 \geq \sqrt{\frac{\eps}{sk}}\norm*{\bfA_{\overline{[sk/\eps]},*}}_F.
\]
This requires only
\[
    O\parens*{\frac{sk}{\eps}\log^2(nd)}
\]
measurements to identify. Then,
\begin{align*}
    \norm*{\bfe_i^\top\widehat{\bfA\bfG^\top}}_2 &\geq \norm*{\bfe_i^\top\bfA}_2 - \frac14\norm*{\bfe_i^\top\bfA}_2 - \sqrt{\frac54} \sqrt{\frac{\eps}{100sk}}\norm*{\bfA_{\overline{[sk/\eps]},*}}_F \\
    &\geq \frac5{8}\sqrt{\frac{\eps}{sk}}\norm*{\bfA_{\overline{[sk/\eps]},*}}_F.
\end{align*}
On the other hand, if $\norm*{\bfe_i^\top\widehat{\bfA\bfG^\top}}_2 \geq \frac5{8}\sqrt{\frac{\eps}{sk}}\norm*{\bfA_{\overline{[sk/\eps]},*}}_F$, then
\begin{align*}
    \norm*{\bfe_i^\top\bfA}_2 &\geq \frac5{8}\sqrt{\frac{\eps}{sk}}\norm*{\bfA_{\overline{[sk/\eps]},*}}_F - \frac14\norm*{\bfe_i^\top\bfA}_2 - \sqrt{\frac54} \sqrt{\frac{\eps}{100sk}}\norm*{\bfA_{\overline{[sk/\eps]},*}}_F \\
    \implies \norm*{\bfe_i^\top\bfA}_2&\geq \frac25\sqrt{\frac{\eps}{sk}}\norm*{\bfA_{\overline{[sk/\eps]},*}}_F.
\end{align*}
Thus, by selecting all rows with 
\[
    \norm*{\bfe_i^\top\widehat{\bfA\bfG^\top}}_2 \geq \frac5{8}\sqrt{\frac{\eps}{sk}}\norm*{\bfA_{\overline{[sk/\eps]},*}}_F,
\]
we select all rows with 
\[
    \norm*{\bfe_i^\top\bfA}_2 \geq \sqrt{\frac{\eps}{sk}}\norm*{\bfA_{\overline{[sk/\eps]},*}}_F.
\]
Furthermore, there are at most $\frac{25}{4}\frac{sk}{\eps}$ such rows belonging to $\bfA_{\overline{[sk/\eps]},*}$ and thus we select at most
\[
    \frac{25}{4}\frac{sk}{\eps} + \frac{sk}{\eps} = O\parens*{\frac{sk}{\eps}}
\]
rows of $\bfA$. 
\end{proof}

Finally, we let $q = O(s^2k^2/\eps^4)$ and let $r = \log(nd)$. We then treat $\bfA$ as an $nd$-dimensional vector and use the tail error guarantee of \textsf{CountSketch} (Lemma \ref{lem:cs-tail-error}) in order to reconstruct a matrix $\hat\bfA$ such that
\[
    \norm*{\hat\bfA - \bfA}_\infty^2 \leq \frac1q \norm*{\bfA_{\overline{[q]}}}_F^2 \leq \frac{\eps^4}{s^2 k^2} \norm*{\bfA - \bfB}_F^2.
\]
When restricted to the $O(sk/\eps)\times O(sk/\eps)$ submatrix $S\times T$ of the heavy rows and columns identified before, we have that
\[
    \norm*{\hat\bfA\mid_{S\times T} - \bfA\mid_{S\times T}}_F^2 \leq O\parens*{\frac{sk}{\eps}}^2 \norm*{\hat\bfA - \bfA}_\infty^2 \leq \eps^2 \norm*{\bfA - \bfB}_F^2. 
\]
This requires only
\[
    O\parens*{\frac{s^2k^2}{\eps^4}\log(nd)}
\]
measurements. We can then show that the optimal low rank approximation to $\hat\bfA\mid_{S\times T}$ will give us a relative error sparse low rank approximation to $\bfA$.

\begin{Lemma}\label{lem:hatA-lra-to-A-lra}
    Let $S\subseteq[n]$ and $T\subseteq[n]$ be supersets of the rows and columns of $\bfA$ that have squared norm at least
    \[
        \tau = \frac{\eps}{sk}\norm*{\bfA-\bfB}_F^2
    \]
    and let $\hat\bfA$ be a matrix such that
    \[
        \norm*{\hat\bfA\mid_{S\times T} - \bfA\mid_{S\times T}}_F^2 \leq \eps^2\norm*{\bfA - \bfB}_F^2
    \]
    where for a matrix $\bfM$, $\bfM_{S\times T}$ denotes the restriction of $\bfM$ to the submatrix indexed by rows $S$ and columns $T$. Then,
    \[
        \norm*{\bfA - (\hat\bfA\mid_{S\times T})_k}_F^2 \leq (1+O(\eps))\norm*{\bfA-\bfB}_F^2.
    \]
\end{Lemma}
\begin{proof}
We have that
\begin{align*}
    \norm*{\bfA\mid_{S\times T} - (\hat\bfA\mid_{S\times T})_k}_F &\leq \norm*{\hat\bfA\mid_{S\times T} - (\hat\bfA\mid_{S\times T})_k}_F + \norm*{\hat\bfA\mid_{S\times T} - \bfA\mid_{S\times T}}_F \\
    &\leq \norm*{\hat\bfA\mid_{S\times T} - (\bfA\mid_{S\times T})_k}_F + \norm*{\hat\bfA\mid_{S\times T} - \bfA\mid_{S\times T}}_F \\
    &\leq \norm*{\bfA\mid_{S\times T} - (\bfA\mid_{S\times T})_k}_F + 2\norm*{\hat\bfA\mid_{S\times T} - \bfA\mid_{S\times T}}_F \\
    &\leq \norm*{\bfA\mid_{S\times T} - (\bfA\mid_{S\times T})_k}_F + 2\eps\norm*{\bfA-\bfB}_F
\end{align*}
so by squaring both sides, we get
\begin{align*}
    \norm*{\bfA\mid_{S\times T} - (\hat\bfA\mid_{S\times T})_k}_F^2 &\leq \parens*{\norm*{\bfA\mid_{S\times T} - (\bfA\mid_{S\times T})_k}_F + 2\eps\norm*{\bfA-\bfB}_F}^2 \\
    &= \norm*{\bfA\mid_{S\times T} - (\bfA\mid_{S\times T})_k}_F^2 + 4\eps\norm*{\bfA\mid_{S\times T} - (\bfA\mid_{S\times T})_k}_F\norm*{\bfA-\bfB}_F + 4\eps^2\norm*{\bfA-\bfB}_F^2 \\
    &= \norm*{\bfA\mid_{S\times T} - (\bfA\mid_{S\times T})_k}_F^2 + 4\eps(1+\eps)\norm*{\bfA-\bfB}_F\norm*{\bfA-\bfB}_F + 4\eps^2\norm*{\bfA-\bfB}_F^2 \\
    &\leq \norm*{\bfA\mid_{S\times T} - (\bfA\mid_{S\times T})_k}_F^2 + (4\eps(1+\eps) + 4\eps^2)\norm*{\bfA-\bfB}_F^2 \\
    &= \norm*{\bfA\mid_{S\times T} - (\bfA\mid_{S\times T})_k}_F^2 + (4\eps(1+\eps) + 4\eps^2)\norm*{\bfA-\bfB}_F^2.
\end{align*}
Adding $\norm*{\bfA-\bfA\mid_{S\times T}}_F^2$ on both sides and applying the Pythagorean theorem, we obtain
\[
    \norm*{\bfA - (\hat\bfA\mid_{S\times T})_k}_F^2 \leq \norm*{\bfA - (\bfA\mid_{S\times T})_k}_F^2 + O(\eps)\norm*{\bfA-\bfB}_F^2.
\]
Since $(\bfA\mid_{S\times T})_k$ is the optimal matrix supported on the rows and column at least $\tau$, we have by Lemma \ref{lem:heavy-rows-cols} that
\begin{align*}
    \norm*{\bfA - (\hat\bfA\mid_{S\times T})_k}_F^2 &\leq \norm*{\bfA - \bfB}_F^2 + O(\eps)\norm*{\bfA-\bfB}_F^2 \\
    &= (1+O(\eps))\norm*{\bfA-\bfB}_F^2
\end{align*}
as desired.
\end{proof}

\subsubsection{Polynomial Time Bicriteria Algorithm with Additive Error}\label{sec:frob-slra-polytime-add-err}

The previous polynomial time relative error algorithm has a quadratic dependence on both $s$ and $k$, whereas a linear dependence is possible if we allow for exponential time algorithms. A natural question is whether these can be improved to linear or not. The quadratic dependencies can be attributed to the tail error guarantee we seek from every entry on our $O(sk/\eps)\times O(sk/\eps)$ submatrix that contains an approximately optimal solution, which was used to compute an SVD. If we change our approximate SVD approach to be based on sampling rows of this submatrix, then we would require a tail error guarantee on fewer entries, and thus should lead to better upper bounds. This leads to the idea of implementing the \cite{DBLP:journals/jacm/FriezeKV04} \emph{$\ell_2$ sampling} approach to approximate SVD in this setting. This approximate SVD subroutine gives additive error guarantees rather than the stronger relative error guarantees in the previous section. However, with this relaxation, we are able to improve the quadratic dependence on $s$ to linear. In this section, we prove the following theorem:

\begin{theorem}\label{thm:frob-slra-polytime-add-err}
    Let $\bfA\in\mathbb R^{n\times d}$. There is a randomized sketching algorithm which makes
    \[
        O\parens*{\frac{sk^2}{\eps^6}\log^2 n}
    \]
    measurements to $\bfA$ and outputs a rank $k$ matrix $\bfD$ supported on an $O(sk/\eps)\times O(sk/\eps)$ submatrix such that
    \[
        \norm*{\bfA-\bfD}_F^2 \leq \min_{\bfB\in\mathcal S_{s,k}}\norm*{\bfA-\bfB}_F^2 + \eps\norm*{\bfA}_F^2
    \]
    in polynomial time.
\end{theorem}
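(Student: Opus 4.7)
The plan is to combine the heavy-hitter identification of rows and columns from the relative-error algorithm with an $\ell_2$-sampling based approximate SVD, in the spirit of Frieze-Kannan-Vempala \cite{DBLP:journals/jacm/FriezeKV04} and Drineas-Kannan-Mahoney \cite{DBLP:journals/siamcomp/DrineasKM06a}. Since we now only aim for additive error $\eps\|\bfA\|_F^2$, I would apply Lemma \ref{lem:heavy-rows-cols} with the relaxed threshold $\tau = \eps\|\bfA\|_F^2/(sk)$, so that the best rank-$k$ approximation $(\bfA\mid_{S^*\times T^*})_k$ on the submatrix indexed by the rows $S^*$ and columns $T^*$ of squared norm at least $\tau$ already satisfies $\|\bfA-(\bfA\mid_{S^*\times T^*})_k\|_F^2 \le \min_{\bfB\in\mathcal S_{s,k}}\|\bfA-\bfB\|_F^2+\eps\|\bfA\|_F^2$. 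By Lemma \ref{lem:find-large-norm-row} and its transpose-symmetric column analog applied with this larger threshold, we can identify supersets $S\supseteq S^*$, $T\supseteq T^*$ of size $O(sk/\eps)$ using only $O((sk/\eps)\log^2 n)$ measurements; this is the same heavy-hitters machinery as in Theorem \ref{thm:frob-slra-polytime-rel-err}, just with a larger slack parameter.

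The key change for removing the extra factor of $s$ is to avoid recovering the full $O(sk/\eps)\times O(sk/\eps)$ submatrix entrywise and instead reduce dimension via adaptive sampling. I would sample $c = O(k \cdot \poly(\eps^{-1}))$ columns of $\bfM \coloneqq \bfA\mid_{S\times T}$ with probability proportional to $\|\bfM_{*,j}\|_2^2$, which can be done in the streaming setting by instantiating $c$ independent $\ell_2$-samplers built from \textsf{CountSketch}-type primitives, at $\polylog(n)$ cost per sample for both the index and the (scaled) norm estimate. For each sampled column index $j$, I would recover the length-$|S|$ vector $\bfP_S\bfA\bfe_j$ using the column-wise analog of Lemma \ref{lem:cs-row-wise-approximation}. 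An analogous row-sampling round on the $|S|\times c$ result yields a CUR-style or FKV-style sampled matrix $\bfW$ of size only $r \times c$ with $r,c = O(k\cdot\poly(\eps^{-1}))$. Because we only need to actually recover the sampled columns (each of length $|S| = O(sk/\eps)$) and rows (each of length $|T| = O(sk/\eps)$), the total number of scalar recoveries is $O((r+c)\cdot sk/\eps) = O(sk^2/\poly(\eps))$, which is the source of the $sk^2$ factor in place of $s^2k^2$.

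Offline, compute the FKV/DKM rank-$k$ approximation $\hat\bfD$ supported on $S\times T$ from $\bfC$, $\bfR$, and their intersection. The FKV/DKM analysis gives $\|\bfM - \hat\bfD\|_F^2 \le \|\bfM-\bfM_k\|_F^2 + \eps'\|\bfM\|_F^2 \le \|\bfM-\bfM_k\|_F^2 + \eps'\|\bfA\|_F^2$; combining with the Pythagorean decomposition $\|\bfA-\hat\bfD\|_F^2 = \|\bfA-\bfM\|_F^2 + \|\bfM-\hat\bfD\|_F^2$ and the conclusion of the heavy-rows-and-columns step, this gives the additive error bound $\min_{\bfB\in\mathcal S_{s,k}}\|\bfA-\bfB\|_F^2 + O(\eps)\|\bfA\|_F^2$. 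Rescaling $\eps$ by a constant yields the theorem.

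The main obstacle will be precisely tracking how the \textsf{CountSketch} tail errors on the recovered columns and rows propagate through the sampling-based SVD analysis: perturbations in the entries of $\bfC$, $\bfR$, and $\bfW$ produce perturbations in the sampled singular subspaces that must be controlled against the FKV/DKM additive error $\eps'\|\bfM\|_F^2$. I would choose the tail parameter of each \textsf{CountSketch} to be $\eps^O(1)/(sk)$ so that the aggregate squared error across $O(sk^2/\poly(\eps))$ recovered entries remains $O(\eps^2)\|\bfA\|_F^2$, which is compatible with the FKV analysis when $c,r$ are chosen appropriately. A secondary subtlety is that streaming $\ell_2$ samplers give samples only approximately proportional to $\|\bfM_{*,j}\|_2^2$, but FKV-style analyses are known to be robust to constant-factor approximations of the sampling distribution, so this costs only constant factors in $c$ and $r$.
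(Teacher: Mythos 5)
Your high-level plan matches the paper's: identify heavy rows and columns via $\textsf{CountSketch}$-based heavy hitters with threshold $\tau=\eps\norm{\bfA}_F^2/(sk)$ (Lemma \ref{lem:heavy-rows-cols} and Lemma \ref{lem:find-large-norm-row}), then replace full entrywise recovery of the $O(sk/\eps)\times O(sk/\eps)$ block with an $\ell_2$-sampling-based approximate SVD in the style of Frieze--Kannan--Vempala and Drineas--Kannan--Mahoney. The paper also proves the robustness lemmas you anticipate needing for the noisy-recovery propagation (Lemmas \ref{lem:approx-proj}, \ref{lem:approx-low-rank}, \ref{lem:approx-row-wise}). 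However, there are two substantive differences and one genuine gap.

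First, your route uses a CUR-style scheme (sample columns and rows of $\bfM=\bfA\mid_{S\times T}$, intersect to form $\bfW$), whereas the paper samples only rows of $\bfM$ to build the right projection $\hat\bfV$ from \textsc{LinearTimeSVD}, and then adds a separate ``Outputting a Good Factorization'' step that materializes $\bfS_S\bfA\bfS_T^\top\hat\bfV\hat\bfV^\top$ via approximate matrix product (Lemma \ref{lem:cs-apm}) and row-wise approximation (Lemma \ref{lem:cs-row-wise-approximation}). Your CUR route would plausibly sidestep the paper's factorization step, which is a reasonable alternative, but the additive-error CUR guarantee and its interaction with approximate entry recovery would need to be verified as carefully as the paper verifies its version.

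Second, and this is the genuine gap: you assert that sampling $c$ columns (and later rows) of $\bfM=\bfA\mid_{S\times T}$ proportional to $\norm{\bfM_{*,j}}_2^2$ ``can be done in the streaming setting by instantiating $c$ independent $\ell_2$-samplers.'' This does not work as stated, because $\bfM$ is defined by the heavy-row/column sets $S$ and $T$, which are only identified at the \emph{end} of the stream. Standard streaming $\ell_2$-samplers sample indices proportionally to the squared norms of the \emph{full} vector or matrix that appears in the stream, and there is no way to condition them post hoc on the restriction to the unknown rows $S$. A column that is light in $\bfA$ can be heavy in $\bfA\mid_{S\times T}$ (and vice versa), so the distributions genuinely differ. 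The paper's way around this is its ``sample first, then filter'' trick: pre-commit, at every geometric probability level $\alpha=1,1/2,1/4,\dots$, to a uniform subsample of the rows, maintain a sufficiently fine $\textsf{CountSketch}$ of each subsample, and only after the stream determine which rows of $\bfM$ should have been sampled at level $\alpha$ (using the post-stream row-norm estimates), keeping exactly those. Because the $\ell_2$-sampling definition (Definition \ref{def:l2-sample}) only requires the sampling probability to be at least some threshold rather than exactly proportional, this one-pass oversampling suffices. Your proposal needs this idea, or something equivalent, to be a correct streaming algorithm; without it, the approximate-SVD step is not implementable in one pass.

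A smaller remark: you suggest resampling rows from the recovered $|S|\times c$ matrix $\bfC$. For the DKM CUR guarantee you actually need to sample rows of $\bfM$ proportional to their $\ell_2$ norms (and then recover the corresponding length-$|T|$ rows of $\bfM$), not rows of $\bfC$. Sampling rows of $\bfC$ is offline and free, but it is sampling from the wrong object; this has to be a second streaming recovery, which is consistent with your measurement accounting but should be stated explicitly.
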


\begin{Remark}
    If we optimize over only rank $k$ matrices that are supported on an $s\times s$ submatrix rather than sums of $k$ $s\times s$ matrices, then by making small adjustments to our argument, we can show that we use only
    \[
        O\parens*{\frac{sk}{\eps^6}\log^2 n} = sk\poly(\log n,\eps^{-1})
    \]
    measurements, i.e., linear in both $s$ and $k$, which is the correct dependence on $s$ and $k$. 
\end{Remark}

Our algorithm roughly proceeds in the same way as the previous one: by identifying $O(sk/\eps)$ heavy rows and columns that contain an additive error rank $k$ approximation to $\bfA$, and then performing an approximate SVD on this $O(sk/\eps)\times O(sk/\eps)$ submatrix. The difference is that the approximate SVD will be replaced by the \cite{DBLP:journals/jacm/FriezeKV04} routine which samples rows proportional to their $\ell_2$ norms. 

Note that n\"aively, the \cite{DBLP:journals/jacm/FriezeKV04} algorithm is implemented in two passes over a stream \cite{DBLP:journals/siamcomp/DrineasKM06a}: one pass to compute the $\ell_2$ norms of the rows, and one pass to obtain the sample of the rows. We now show the intuition for conducting this sampling in one pass. The work of \cite{DBLP:conf/stoc/MahabadiRWZ20} recently implemented \emph{subspace sampling}, a seemingly even more sequential sampling scheme, in one pass over a stream, and includes $\ell_2$ sampling as a special case. We show an alternate one pass implementation for $\ell_2$ sampling. 

Our idea is to switch the $\ell_2$ norm computation step and then sampling step. That is, we first obtain a sample of rows each sampled with probability $p$, and then later restrict to the rows of the sample with $\ell_2$ norm $\Theta(p\norm*{\bfA}_F^2)$. By running this procedure in parallel for geometrically decreasing values of $p$, say $p = 1, 1/2, 1/4, \dots, 1/n$, we obtain an \cite{DBLP:journals/jacm/FriezeKV04} sample. The advantage of this approach is that by sampling first, we need to approximate fewer values of the matrix, which improves our upper bounds. 

\paragraph{Approximate \texorpdfstring{$\ell_2$}{l2} Sampling of Rows.}

 By setting 
 \[
     \tau = \frac{\eps}{sk}\norm*{\bfA}_F^2
 \]
 in Lemma \ref{lem:heavy-rows-cols}, we see that an additive error solution is supported in the $O(sk/\eps)\times O(sk/\eps)$ submatrix at the intersection of the rows and columns with $\ell_2$ norm at least $\tau$. Let these rows and columns be indexed by $S\subseteq[n]$ and $T\subseteq[d]$, respectively. Then, Lemma \ref{lem:cs-row-norm-approximation} allows us to find the subset of coordinates $S$ and $T$. Next, we discuss how to approximately find a rank $k$ projection $\hat\bfV\hat\bfV^\top$ that is supported on these coordinates $S$ and $T$. Our strategy is to obtain approximations to the rows of $\bfA$ that are sampled proportionally to their $\ell_2$ norm, formalized in the following definition.

\begin{Definition}[$\ell_2$ sampling \cite{DBLP:journals/jacm/FriezeKV04}]\label{def:l2-sample}
    Let $\bfA\in\mathbb R^{n\times d}$. Then, $P$ is a \emph{$c$-approximate $\ell_2$ sampling distribution} if 
    \[
        P_i \geq \min\braces*{c \frac{\norm*{\bfe_i^\top\bfA}_2^2}{\norm*{\bfA}_F^2}, 1}
    \]
    for each $i\in[n]$. 
\end{Definition}
It is known that a $\Theta(k/\eps^2)$-approximate $\ell_2$ sample of the rows of $\bfA$ yields additive low rank approximations, which we seek \cite[Theorem 2]{DBLP:journals/siamcomp/DrineasKM06a}. 

\begin{theorem}[\textsc{LinearTimeSVD} \cite{DBLP:journals/siamcomp/DrineasKM06a}]
    Let $\bfA\in\mathbb R^{n\times d}$. Then, there is an algorithm that samples rows from a $\Theta(k/\eps^2)$-approximate $\ell_2$ distribution and outputs a matrix $\bfZ\in\mathbb R^{d\times k}$ with orthogonal columns such that
    \[
        \norm*{\bfA - \bfA\bfZ\bfZ^\top}_F^2 \leq \norm*{\bfA - \bfA_k}_F^2 + \eps\norm*{\bfA}_F^2.
    \]
\end{theorem}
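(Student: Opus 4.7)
The plan is to follow the standard Drineas--Kannan--Mahoney analysis of \textsc{LinearTimeSVD}. Let $p_1,\dots,p_n$ satisfy $p_i \geq (c_0 k/\eps^2)\cdot \norm*{\bfe_i^\top\bfA}_2^2/\norm*{\bfA}_F^2$ as in Definition \ref{def:l2-sample} (with $c_0$ a constant to be fixed). Sample $c = \Theta(k/\eps^2)$ row indices $i_1,\dots,i_c$ independently with probabilities $p_{i_t}$, and form the $c\times d$ matrix $\bfC$ whose $t$th row is $\bfe_{i_t}^\top\bfA/\sqrt{c\,p_{i_t}}$. Let $\bfZ\in\mathbb R^{d\times k}$ be any matrix whose columns are the top $k$ right singular vectors of $\bfC$; this is the output. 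The computation of $\bfZ$ is done by diagonalizing the $c\times c$ matrix $\bfC\bfC^\top$, so the algorithm indeed runs in time linear in $n,d$ (after sampling), which justifies the name.

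The first key step is an approximate matrix product bound: by an elementary variance calculation, for each pair $(j,j')$, the estimator $(\bfC^\top\bfC)_{j,j'} = \frac1c\sum_{t=1}^c \frac{\bfA_{i_t,j}\bfA_{i_t,j'}}{p_{i_t}}$ is unbiased for $(\bfA^\top\bfA)_{j,j'}$, and
\[
    \E\norm*{\bfC^\top\bfC - \bfA^\top\bfA}_F^2 \leq \frac{1}{c}\sum_{i=1}^n \frac{\norm*{\bfe_i^\top\bfA}_2^4}{p_i} \leq \frac{\eps^2}{c_0 k}\norm*{\bfA}_F^4,
\]
where the second inequality plugs in the lower bound on $p_i$. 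So for a sufficiently large constant $c_0$, Markov's inequality gives $\norm*{\bfC^\top\bfC - \bfA^\top\bfA}_F \leq (\eps/\sqrt{k})\norm*{\bfA}_F^2$ with constant probability.

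The second key step converts this matrix product bound into a low rank approximation guarantee. Since the columns of $\bfZ$ are orthonormal,
\[
    \norm*{\bfA - \bfA\bfZ\bfZ^\top}_F^2 = \norm*{\bfA}_F^2 - \norm*{\bfA\bfZ}_F^2 = \mathrm{tr}(\bfA^\top\bfA) - \mathrm{tr}(\bfZ^\top\bfA^\top\bfA\bfZ).
\]
Because $\bfZ$ diagonalizes $\bfC^\top\bfC$ on its top $k$ eigenspace, $\mathrm{tr}(\bfZ^\top\bfC^\top\bfC\bfZ) = \sum_{i=1}^k \sigma_i^2(\bfC) \geq \sum_{i=1}^k\sigma_i^2(\bfA_k)$ minus a perturbation that I will control by Weyl's inequality. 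Writing $\bfE = \bfA^\top\bfA - \bfC^\top\bfC$, I have
\[
    \mathrm{tr}(\bfZ^\top\bfA^\top\bfA\bfZ) \geq \mathrm{tr}(\bfZ^\top\bfC^\top\bfC\bfZ) - \abs*{\mathrm{tr}(\bfZ^\top\bfE\bfZ)} \geq \sum_{i=1}^k \sigma_i^2(\bfC) - \sqrt{k}\norm*{\bfE}_F,
\]
using $\abs*{\mathrm{tr}(\bfZ^\top\bfE\bfZ)} \leq \norm*{\bfZ}_F\norm*{\bfE\bfZ}_F \leq \sqrt{k}\norm*{\bfE}_F$. Weyl's inequality for squared singular values gives $\sum_{i=1}^k \sigma_i^2(\bfC) \geq \sum_{i=1}^k\sigma_i^2(\bfA) - \sqrt{k}\norm*{\bfE}_F$ by Cauchy--Schwarz applied to the vector of differences. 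Combining these,
\[
    \norm*{\bfA - \bfA\bfZ\bfZ^\top}_F^2 \leq \norm*{\bfA}_F^2 - \sum_{i=1}^k \sigma_i^2(\bfA) + 2\sqrt{k}\norm*{\bfE}_F = \norm*{\bfA - \bfA_k}_F^2 + 2\sqrt{k}\norm*{\bfE}_F.
\]
Plugging in the matrix product bound $\norm*{\bfE}_F \leq (\eps/\sqrt{k})\norm*{\bfA}_F^2$ from the first step yields the desired additive error $\eps\norm*{\bfA}_F^2$ (after adjusting constants in $c_0$).

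The main obstacle is carefully bookkeeping the two applications of Weyl/Hoffman--Wielandt to pass from a Frobenius norm perturbation of $\bfA^\top\bfA$ to a loss in the trace of $\bfZ^\top\bfA^\top\bfA\bfZ$; the factor of $\sqrt{k}$ (rather than $k$) there is precisely what forces the sample complexity to be $\Theta(k/\eps^2)$ rather than $\Theta(k^2/\eps^2)$, and is the reason we use the Frobenius variance bound rather than merely a spectral norm bound on $\bfE$. Standard boosting (independent repetitions and picking the best) upgrades the constant failure probability to anything desired.
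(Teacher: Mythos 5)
The paper does not prove this statement; it imports it verbatim as \cite[Theorem 2]{DBLP:journals/siamcomp/DrineasKM06a}, so there is no ``paper's own proof'' to compare against. Your blind proof reconstructs the standard Drineas--Kannan--Mahoney argument, and the two-step plan (approximate matrix product in Frobenius norm, then conversion to a low-rank trace bound via Hoffman--Wielandt plus Cauchy--Schwarz) is exactly the right one and yields the stated $\norm{\bfA-\bfA_k}_F^2 + \eps\norm{\bfA}_F^2$ guarantee with $\Theta(k/\eps^2)$ samples. Two small things to clean up. First, the sampling model is internally inconsistent: you take $p_i \geq (c_0 k/\eps^2)\cdot\norm{\bfe_i^\top\bfA}_2^2/\norm{\bfA}_F^2$ \emph{and} then draw $c=\Theta(k/\eps^2)$ i.i.d.\ samples from ``the distribution'' $(p_i)$, but the $p_i$ as written cannot be a probability distribution since they would sum to at least $c_0 k/\eps^2 > 1$. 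You have double-counted the $k/\eps^2$ budget. Either (a) treat $(p_i)$ as a genuine categorical distribution with $p_i \geq \beta\,\norm{\bfe_i^\top\bfA}_2^2/\norm{\bfA}_F^2$ for a constant $\beta$, and let the $k/\eps^2$ live only in $c$ (the original DKM setup; your $1/c$ variance factor then does the work), or (b) match the paper's Definition~\ref{def:l2-sample}, which is independent Bernoulli inclusion of each row $i$ with probability $P_i \geq \min\{(\Theta(k/\eps^2))\,\norm{\bfe_i^\top\bfA}_2^2/\norm{\bfA}_F^2,\,1\}$; in that model there is no $1/c$ factor and the variance bound $\E\norm{\bfE}_F^2 \leq \sum_i \norm{\bfe_i^\top\bfA}_2^4/P_i \leq (\eps^2/k)\norm{\bfA}_F^4$ holds directly. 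Second, the eigenvalue perturbation step you invoke is the Hoffman--Wielandt inequality applied to $\bfC^\top\bfC$ versus $\bfA^\top\bfA$ (giving $\sum_i(\sigma_i^2(\bfC)-\sigma_i^2(\bfA))^2 \leq \norm{\bfE}_F^2$), not Weyl's inequality; Weyl together with Cauchy--Schwarz would only give a $k\norm{\bfE}_2$ bound, costing an extra $\sqrt k$. You do later name Hoffman--Wielandt and the displayed inequality is the correct one, so this is a naming slip rather than a substantive error, but the distinction is exactly the ``obstacle'' you flag at the end, so it is worth getting the attribution right.
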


Before we proceed with the sampling result, we show several lemmas that allow us to reduce the task of obtaining an additive rank $k$ approximation to $\bfA$ to obtaining an additive rank $k$ approximation to a surrogate matrix $\hat\bfA$ that is close to $\bfA$ in Frobenius norm. 

Our first lemma shows that if $\hat\bfA$ is close to $\bfA$ in Frobenius norm and $\hat\bfA\hat\bfV\bfV^\top$ is a good rank $k$ approximation for $\bfA$, then $\bfA\hat\bfV\bfV^\top$ is also a good rank $k$ approximation for $\bfA$. 

\begin{Lemma}\label{lem:approx-proj}
    Let $\hat\bfV\bfV^\top$ be a rank $k$ projection and let $\hat\bfA$ be such that
    \[
        \norm*{\hat\bfA - \bfA}_F^2 \leq \delta.
    \]
    Then,
    \[
        \norm*{\bfA - \bfA\hat\bfV\hat\bfV^\top}_F^2 \leq \norm*{\bfA - \hat\bfA\hat\bfV\hat\bfV^\top}_F^2 + \delta + 2\sqrt\delta\norm*{\bfA - \hat\bfA\hat\bfV\hat\bfV^\top}_F.
    \]
\end{Lemma}
\begin{proof}
    \begin{align*}
        \norm*{\bfA - \bfA\hat\bfV\hat\bfV^\top}_F^2 &\leq \parens*{\norm*{\bfA - \hat\bfA\hat\bfV\hat\bfV^\top}_F + \norm*{\hat\bfA\hat\bfV\hat\bfV^\top - \bfA\hat\bfV\hat\bfV^\top}_F}^2 \\
        &= \parens*{\norm*{\bfA - \hat\bfA\hat\bfV\hat\bfV^\top}_F + \norm*{(\hat\bfA - \bfA)\hat\bfV\hat\bfV^\top}_F}^2 \\
        &\leq \parens*{\norm*{\bfA - \hat\bfA\hat\bfV\hat\bfV^\top}_F + \norm*{\hat\bfA - \bfA}_F}^2 \\
        &= \norm*{\bfA - \hat\bfA\hat\bfV\hat\bfV^\top}_F^2 + \norm*{\hat\bfA - \bfA}_F^2 + 2\norm*{\bfA - \hat\bfA\hat\bfV\hat\bfV^\top}_F\norm*{\hat\bfA - \bfA}_F \\
        &\leq \norm*{\bfA - \hat\bfA\hat\bfV\hat\bfV^\top}_F^2 + \delta + 2\sqrt\delta\norm*{\bfA - \hat\bfA\hat\bfV\hat\bfV^\top}_F
    \end{align*}
\end{proof}

Our second lemma shows that if $\hat\bfA$ and $\bfA$ are close in Frobenius norm, then a good additive error low rank approximation of $\hat\bfA$ is a good additive error low rank approximation of $\bfA$. 

\begin{Lemma}\label{lem:approx-low-rank}
Let $\bfA\in\mathbb R^{n\times d}$ and let $\hat\bfA$ be such that
\[
    \norm*{\hat\bfA - \bfA}_F^2 \leq \delta.
\]
Let $\bfD$ be a rank $k$ matrix such that
\[
    \norm*{\hat\bfA - \bfD}_F^2 \leq \norm*{\hat\bfA - \hat\bfA_k}_F^2 + \eta.
\]
Then, 
\[
    \norm*{\bfA - \bfD}_F^2 \leq \norm*{\bfA - \bfA_k}_F^2 + 2\sqrt\delta\norm*{\bfA}_F + 2\delta + \eta + 2\sqrt{\delta\parens*{2\delta + 2\norm*{\bfA}_F^2 + \eta}}.
\]
\end{Lemma}
\begin{proof}
    We first estimate
    \begin{align*}
        \norm*{\hat\bfA - \bfA_k}_F^2 &\leq \parens*{\norm*{\hat\bfA - \bfA}_F + \norm*{\bfA - \bfA_k}_F}^2 && \text{by triangle inequality} \\
        &\leq \parens*{\sqrt\delta + \norm*{\bfA - \bfA_k}_F}^2
    \end{align*}
    Then, we can bound
    \begin{align*}
        \norm*{\bfA - \bfD}_F^2 &\leq \parens*{\norm*{\bfA - \hat\bfA}_F + \norm*{\hat\bfA - \bfD}_F}^2 \\
        &\leq \parens*{\norm*{\bfA - \hat\bfA}_F + \sqrt{\norm*{\hat\bfA - \bfA_k}_F^2 + \eta}}^2 \\
        &\leq \parens*{\sqrt\delta + \sqrt{\norm*{\hat\bfA - \bfA_k}_F^2 + \eta}}^2 \\
        &\leq \parens*{\sqrt\delta + \sqrt{\parens*{\sqrt\delta + \norm*{\bfA - \bfA_k}_F}^2 + \eta}}^2 \\
        &= \delta + \parens*{\sqrt\delta + \norm*{\bfA - \bfA_k}_F}^2 + \eta + 2\sqrt{\delta\parens*{\parens*{\sqrt\delta + \norm*{\bfA - \bfA_k}_F}^2 + \eta}} \\
        &\leq \delta + \parens*{\sqrt\delta + \norm*{\bfA - \bfA_k}_F}^2 + \eta + 2\sqrt{\delta\parens*{2\delta + 2\norm*{\bfA - \bfA_k}_F^2 + \eta}} \\
        &\leq \delta + \parens*{\sqrt\delta + \norm*{\bfA - \bfA_k}_F}^2 + \eta + 2\sqrt{\delta\parens*{2\delta + 2\norm*{\bfA}_F^2 + \eta}} \\
        &\leq \norm*{\bfA - \bfA_k}_F^2 + 2\sqrt\delta\norm*{\bfA}_F + 2\delta + \eta + 2\sqrt{\delta\parens*{2\delta + 2\norm*{\bfA}_F^2 + \eta}}. \qedhere
    \end{align*}
\end{proof}

Finally, we show that an $\ell_2$ norm approximation of the rows of $\bfA$ implies a Frobenius norm approximation of $\bfA$. 

\begin{Lemma}\label{lem:approx-row-wise}
    Let $\bfA\in\mathbb R^{n\times d}$ and let $\hat\bfA$ be such that for every $i\in[n]$,
    \[
        \norm*{\bfe_i^\top\hat\bfA - \bfe_i^\top\bfA}_2^2 \leq \eps \norm*{\bfe_i^\top\bfA}_F^2. 
    \]
    Then
    \[
        \norm*{\hat\bfA - \bfA}_F^2 \leq \eps\norm*{\bfA}_F^2
    \]
\end{Lemma}
\begin{proof}
    \[
        \norm*{\hat\bfA - \bfA}_F^2 = \sum_{i=1}^n \norm*{\bfe_i^\top\hat\bfA - \bfe_i^\top\bfA}_2^2 \leq \sum_{i=1}^n\eps \norm*{\bfe_i^\top\bfA}_F^2 \leq \eps \norm*{\bfA}_F^2 \qedhere
    \]
\end{proof}

\paragraph{Outputting a Good Sparse Projection.}

By setting $\tau$ in Lemma \ref{lem:heavy-rows-cols} to $(\eps/sk)\norm*{\bfA}_F^2$, we may identify an  $O(sk/\eps) \times O(sk/\eps)$ submatrix that contains a good additive error $s\times s$-sparse rank $k$ low rank approximation to $\bfA$ using
\[
    O\parens*{\frac{sk}{\eps}\log n}
\]
measurements. Let $S$ and $T$ be the indices of the $O(sk/\eps)$ heavy rows and columns, and let $\bfA_{S\times T}$ denote the $O(sk/\eps)\times O(sk/\eps)$ submatrix indexed by $S$ and $T$. We now wish to perform an SVD on this submatrix. We do this approximately by using the $\ell_2$ sampling machinery developed in \cite{DBLP:journals/jacm/FriezeKV04, DBLP:journals/siamcomp/DrineasKM06a}. 

Note that if $\norm*{\bfA_{S\times T}}_F^2 < \eps \norm*{\bfA}_F^2$, then
\begin{align*}
    \norm*{\bfA}_F^2 &= \norm*{\bfA - \bfA_{S\times T}}_F^2 + \norm*{\bfA_{S\times T}}_F^2 \leq \norm*{\bfA - \bfA_{S\times T}}_F^2 + \eps\norm*{\bfA}_F^2
\end{align*}
so $0$ is already a good additive error low rank approximation. We thus WLOG assume that $\norm*{\bfA_{S\times T}}_F^2 \geq \eps \norm*{\bfA}_F^2$. 

If we conduct the $\ell_2$ sampling process (Definition \ref{def:l2-sample}) with $c = \Theta(k/\eps^2)$, note in particular that we must sample all rows $i\in S$ such that
\[
    \norm*{\bfe_i^\top \bfA_{S\times T}}_2^2 \geq \Omega\parens*{\frac{\eps^2}{k}}\norm*{\bfA_{S\times T}}_F^2
\]
with probability $1$. Let $i\in S$ be such a row. Note then that
\[
    \norm*{\bfe_i^\top \bfA_{S\times T}}_2^2 \geq \Omega\parens*{\frac{\eps^3}{k}}\norm*{\bfA}_F^2
\]
since $\norm*{\bfA_{S\times T}}_F^2 \geq \eps \norm*{\bfA}_F^2$. 

Suppose that we view $\bfA$ as a vector and attempt to reconstruct $\hat\bfA$ using \textsf{CountSketch}. With $O(sk^2/\eps^6)$ \textsf{CountSketch} buckets, we may recover a matrix $\hat\bfA$ such that
\[
    \norm*{\hat\bfA - \bfA}_\infty^2 \leq \frac{\eps^6}{sk^2}\norm*{\bfA}_F^2. 
\]
For row $i$, the error over the $O(sk/\eps)$ coordinates of the row is at most
\[
    \norm*{\bfe_i^\top\hat\bfA_{S\times T} - \bfe_i^\top\bfA_{S\times T}}_2^2 \leq O\parens*{\frac{sk}{\eps}}\cdot \frac{\eps^6}{sk^2} \norm*{\bfA}_F^2 \leq O\parens*{\frac{\eps^5}{k}} \norm*{\bfA}_F^2 \leq \eps^2 \norm*{\bfe_i^\top\bfA_{S\times T}}_2^2.
\]
Similarly, suppose that a row $i\in S$ has squared norm at least
\[
    \norm*{\bfe_i^\top\bfA_{S\times T}}_2^2 \geq \Omega\parens*{\alpha\frac{\eps^2}{k}}\norm*{\bfA_{S\times T}}_F^2 \geq \Omega\parens*{\alpha\frac{\eps^3}{k}}\norm*{\bfA}_F^2,
\]
then we must sample the row with probability at least $\alpha$. Consider sampling each row of $\bfA$ with probability $\alpha$ and let $W$ be this sample and let $\bfP_W$ be the associated sampling matrix. Then the resulting matrix $\bfP_W\bfA$ has an expected squared Frobenius mass of $\E\norm*{\bfP_W\bfA}_F^2 = \alpha\norm*{\bfA}_F^2$, and at most
\[
    \norm*{\bfP_W\bfA}_F^2 \leq O\parens*{\alpha \log\frac{n}{\eps}}\norm*{\bfA}_F^2
\]
with probability at least $1 - O(1/\log(n/\eps))$. Then again treating this subsampled matrix as a vector, we may reconstruct this submatrix using \textsf{CountSketch} with $O(sk^2\log(n/\eps)/\eps^6)$ buckets to recover an approximation $\widehat{\bfP_W\bfA}$ such that
\[
    \norm*{\widehat{\bfP_W\bfA} - \bfP_W\bfA}_\infty^2 \leq \frac{\eps^6}{sk^2\log(n/\eps)} \norm*{\bfP_W\bfA}_F^2 = O(\alpha) \frac{\eps^6}{sk^2} \norm*{\bfA}_F^2. 
\]
Then for this row $i$, if sampled, the error over the $O(sk/\eps)$ coordinates of the row is at most
\[
    \norm*{\bfe_i^\top\widehat{\bfP_W\bfA_{S\times T}} - \bfe_i^\top\bfP_W\bfA_{S\times T}}_2^2 \leq O\parens*{\frac{sk}{\eps}} \cdot O(\alpha)\frac{\eps^6}{sk^2} \norm*{\bfA}_F^2 \leq O\parens*{\alpha\frac{\eps^5}{k}} \leq \eps^2 \norm*{\bfe_i^\top\bfA_{S\times T}}_2^2
\]
with probability at least $1 - O(1/\log(n/\eps))$. By a union bound over all $O(\log(n/\eps))$ sampling levels for $\alpha = 1, 1/2, 1/4, 1/8, \dots, \eps^2/n^2$, this is true for all of these sampling levels with constant probability. 

Thus, we may obtain a $\Theta(k/\eps^2)$-approximate $\ell_2$ sample of some matrix $\hat\bfA_{S\times T}$ that satisfies
\[
    \norm*{\bfe_i^\top \hat\bfA_{S\times T} - \bfe_i^\top \bfA_{S\times T}}_2^2 \leq \eps^2 \norm*{\bfe_i^\top\bfA_{S\times T}}_2^2,
\]
where by Lemma \ref{lem:approx-row-wise}, $\norm*{\hat\bfA_{S\times T} - \bfA_{S\times T}}_F^2 \leq \eps^2\norm*{\bfA_{S\times T}}_F^2$. Now let $\hat\bfC_{S\times T}$ be the rows sampled from $\hat\bfA_{S\times T}$ and let $\hat\bfV$ be the $\Theta(sk/\eps)\times k$ matrix of the top $k$ right singular vectors of $\hat\bfC_{S\times T}$. Then by \cite[Theorem 2]{DBLP:journals/siamcomp/DrineasKM06a}, we have that
\[
    \norm*{\hat\bfA_{S\times T} - \hat\bfA_{S\times T}\hat\bfV\hat\bfV^\top}_F^2 \leq \norm*{\hat\bfA_{S\times T} - (\hat\bfA_{S\times T})_k}_F^2 + \eps\norm*{\bfA_{S\times T}}_F^2
\]
Then by Lemma \ref{lem:approx-low-rank}, 
\begin{align*}
    \norm*{\bfA_{S\times T} - \hat\bfA_{S\times T}\hat\bfV\hat\bfV^\top}_F^2 &\leq \norm*{\bfA_{S\times T} - (\bfA_{S\times T})_k}_F^2 + (3\eps+\eps^2+\eps\sqrt{2+\eps+2\eps^2})\norm*{\bfA_{S\times T}}_F^2 \\
    &\leq \norm*{\bfA_{S\times T} - (\bfA_{S\times T})_k}_F^2 + O(\eps)\norm*{\bfA_{S\times T}}_F^2
\end{align*}
and thus by Lemma \ref{lem:approx-proj},
\[
    \norm*{\bfA_{S\times T} - \bfA_{S\times T}\hat\bfV\hat\bfV^\top}_F^2 \leq \norm*{\bfA_{S\times T} - (\bfA_{S\times T})_k}_F^2 + O(\eps)\norm*{\bfA_{S\times T}}_F^2
\]
as well. 

We now add back the squared Frobenius mass of the entries outside of the $S\times T$ submatrix, i.e., $\norm*{\bfA - \bfS_S^\top \bfA_{S\times T}\bfS_T}_F^2$. Adding $\norm*{\bfA - \bfS_S^\top\bfA_{S\times T}\bfS_T}_F^2$ on both sides, we have that
\[
    \norm*{\bfA - \bfS_S^\top\bfA_{S\times T}\bfS_T}_F^2 + \norm*{\bfA_{S\times T} - \bfA_{S\times T}\hat\bfV\hat\bfV^\top}_F^2 = \norm*{\bfA - \bfS_S^\top\bfA_{S\times T}\hat\bfV\hat\bfV^\top\bfS_T}_F^2
\]
and
\[
    \norm*{\bfA - \bfS_S^\top\bfA_{S\times T}\bfS_T}_F^2 + \norm*{\bfA_{S\times T} - (\bfA_{S\times T})_k}_F^2 = \norm*{\bfA - \bfS_S^\top(\bfA_{S\times T})_k\bfS_T}_F^2 
\]
which shows that
\[
    \norm*{\bfA - \bfS_S^\top\bfA\hat\bfV\hat\bfV^\top\bfS_T}_F^2 \leq \norm*{\bfA - \bfS_S^\top(\bfA_{S\times T})_k\bfS_T}_F^2 + O(\eps)\norm*{\bfA_{S\times T}}_F^2.
\]
Note then that
\[
    \norm*{\bfA - \bfS_S^\top(\bfA_{S\times T})_k\bfS_T}_F^2 \leq \norm*{\bfA - \hat\bfB}_F^2
\]
where $\hat\bfB$ keeps only the columns of an optimal $s\times s$-sparse rank $k$ $\bfB$ with $\ell_2$ norm at least $(\eps^2/sk)\norm*{\bfA}_F^2$, and
\[
    \norm*{\bfA_{S\times T}}_F^2 \leq \norm*{\bfA}_F^2,
\]
so by combining the above bounds,
\begin{align*}
    \norm*{\bfA - \bfS_S^\top\bfA_{S\times T}\hat\bfV\hat\bfV^\top\bfS_T}_F^2  &\leq \norm*{\bfA - \bfS_S^\top(\bfA_{S\times T})_k\bfS_T}_F^2 + O(\eps)\norm*{\bfA_{S\times T}}_F^2 \\
    &\leq \norm*{\bfA - \bfS_S^\top(\bfA_{S\times T})_k\bfS_T}_F^2 + O(\eps)\norm*{\bfA}_F^2 \\
    &\leq \norm*{\bfA - \hat\bfB}_F^2 + O(\eps)\norm*{\bfA}_F^2 \\
    &\leq \min_{\text{$s\times s$-sparse rank $k$ $\bfC$}}\norm*{\bfA - \bfC}_F^2 + O(\eps)\norm*{\bfA}_F^2.
\end{align*}
By rescaling $\eps$ by a constant factor, we conclude that
\[
    \norm*{\bfA - \bfS_S^\top\bfA_{S\times T}\hat\bfV\hat\bfV^\top\bfS_T}_F^2 \leq \min_{\text{$s\times s$-sparse rank $k$ $\bfC$}}\norm*{\bfA - \bfC}_F^2 + \eps\norm*{\bfA}_F^2
\]
as desired.

\paragraph{Outputting a Good Factorization.}

Given the above result, we have coordinate projections down to a submatrix supported on $S\times T$, with $S$ and $T$ each of size $O(sk/\eps)$, and a rank $k$ right projection for this submatrix. Our final task is to obtain an actual $O(sk/\eps)\times O(sk/\eps)$ matrix, rather than just a right factor. 

Recall the current form of our sparse low rank approximation:
\[
    \norm*{\bfA - \bfS_S^\top\bfA_{S\times T}\hat\bfV\hat\bfV^\top\bfS_T}_F^2 = \norm*{\bfA - \bfS_S^\top\bfS_S\bfA\bfP_T^\top\hat\bfV\hat\bfV^\top\bfS_T}_F^2. 
\]
Note that here, we have $S, T$, and $\hat\bfV$, but we don't have all of the entries of $\bfA_{S\times T}$. We first use approximate matrix product (Lemma \ref{lem:cs-apm}) to approximate $\bfA\bfS_T^\top\hat\bfV\hat\bfV^\top$ by $\bfA\bfR^\top\bfR\bfS_T^\top\hat\bfV\hat\bfV^\top$, where $\bfR$ is an $O(k/\eps^2)\times n$ \textsf{CountSketch} matrix. By applying Lemma \ref{lem:cs-apm} with $\eps$ in the lemma set to $\eps/\sqrt k$, we have that
\[
    \norm*{\bfA\bfS_T^\top\hat\bfV\hat\bfV^\top - \bfA\bfR^\top\bfR\bfS_T^\top\hat\bfV\hat\bfV^\top}_F^2 \leq \frac{\eps^2}{k} \norm*{\bfA}_F^2 \norm*{\bfS_T^\top\hat\bfV\hat\bfV^\top}_F^2 = \frac{\eps^2}{k} \cdot \norm*{\bfA}_F^2 \cdot k = \eps^2\norm*{\bfA}_F^2.
\]
Next, suppose that $\bfT^{(1)}, \bfT^{(2)}, \dots, \bfT^{(r)}$ are $(sk/\eps^3)\times n$ \textsf{CountSketch} matrices for $r = O(\log(nsk/\eps))$. Then, by applying row-wise approximation (Lemma \ref{lem:cs-row-wise-approximation}) with $\eps$ in the lemma set to $\eps^3/sk$ and $d$ to $O(sk/\eps)$, if we maintain $\bfT^{(j)}\bfA\bfR^\top$ for $j\in[r]$, from which we can compute
\[
    \bfT^{(j)}\bfA\bfR^\top\bfR\bfS_T^\top\hat\bfV\hat\bfV^\top,
\]
for every $i\in[n]$, we can compute an approximate row $\bfw^{(i)}$ such that
\[
    \norm*{\bfe_i^\top\bfA\bfR^\top\bfR\bfS_T^\top\hat\bfV\hat\bfV^\top - \bfw^{(i)}}_2^2 \leq \frac{\eps^3}{sk}\norm*{\bfA\bfR^\top\bfR\bfS_T^\top\hat\bfV\hat\bfV^\top}_F^2.
\]
Note that
\begin{align*}
    \norm*{\bfA\bfS_T^\top\bfR^\top\bfR\hat\bfV\hat\bfV^\top}_F &\leq \norm*{\bfA\bfS_T^\top\hat\bfV\hat\bfV^\top}_F + \norm*{\bfA\bfS_T^\top\hat\bfV\hat\bfV^\top - \bfA\bfS_T^\top\bfR^\top\bfR\hat\bfV\hat\bfV^\top}_F \\
    &\leq \norm*{\bfA}_F + \eps\norm*{\bfA}_F \\
    &= (1+\eps)\norm*{\bfA}_F
\end{align*}
so
\[
    \frac{\eps^3}{sk}\norm*{\bfA\bfS_T^\top\bfR^\top\bfR\hat\bfV\hat\bfV^\top}_F^2 \leq O\parens*{\frac{\eps^3}{sk}}\norm*{\bfA}_F^2.
\]
Thus, summing over the $O(sk/\eps)$ rows in $S$, and letting $\bfW$ be the matrix corresponding to the approximate rows $\bfw^{(i)}$ for $i\in S$,
\[
    \norm*{\bfS_S\bfA\bfS_T^\top\bfR^\top\bfR\hat\bfV\hat\bfV^\top - \bfW}_F^2 \leq O\parens*{\frac{sk}{\eps}} \frac{\eps^3}{sk}\norm*{\bfA}_F^2 = \eps^2\norm*{\bfA}_F^2. 
\]
Then by a couple of applications of the triangle inequality,
\begin{align*}
    \norm*{\bfA - \bfS_S^\top\bfW\bfS_T^\top}_F &\leq \norm*{\bfA - \bfS_S^\top\bfS_S\bfA\bfS_T^\top\bfR^\top\bfR\hat\bfV\hat\bfV^\top\bfS_T}_F + \eps\norm*{\bfA}_F \\
    &\leq \norm*{\bfA - \bfS_S^\top\bfS_S\bfA\bfS_T^\top\hat\bfV\hat\bfV^\top\bfS_T}_F + 2\eps\norm*{\bfA}_F
\end{align*}
so
\[
    \norm*{\bfA - \bfS_S^\top\bfW\bfS_T}_F^2 \leq \min_{\text{$s\times s$-sparse rank $k$ $\bfC$}}\norm*{\bfA - \bfC}_F^2 + O(\eps)\norm*{\bfA}_F^2
\]
as desired.

To implement the above, we must maintain $\bfT^{(j)}\bfA\bfR^\top$ for $j\in[r]$, which requires
\[
    O\parens*{\frac{sk}{\eps^3}}\cdot O\parens*{\frac{k}{\eps^2}}\cdot \log\frac{nsk}{\eps} = O\parens*{\frac{sk^2}{\eps^5}\log\frac{nsk}{\eps}}
\]
which is within our measurement budget. 

\subsection{Computational Complexity}\label{sec:computational-complexity}

In this section, we show that under a randomized version of the Exponential Time Hypothesis, the $(1+\eps)$-approximate Frobenius sparse low rank approximation problem has no algorithm running in time
\[
    f(s)\parens*{\frac1{\sqrt\eps}}^{o(\sqrt s)}.
\]
Related works have shown NP-completeness and computational hardness results for related low rank approximation variants, including sparse PCA \cite{DBLP:conf/icml/MoghaddamWA06, DBLP:journals/ipl/Magdon-Ismail17, DBLP:conf/colt/ChanPR16}, low rank approximation with weights or missing data \cite{DBLP:journals/siammax/GillisG11,DBLP:conf/stoc/RazenshteynSW16} and $\ell_p$ low rank approximation for $p\in(1,2)$ \cite{DBLP:conf/soda/BanBBKLW19}. The NP-completeness result of \cite{DBLP:journals/siammax/GillisG11} employs a reduction from the maximum-edge biclique problem to show hardness for the rank $1$ weighted low rank approximation problem, which is similar to our reduction from the $s$-biclique problem to show exponential running time in $s$ under a randomized variant of the Exponential Time Hypothesis. Related hardness results were shown by \cite[Theorem 1.4]{DBLP:conf/stoc/RazenshteynSW16}, also by a reduction to the maximum biclique problem. 

\subsubsection{Dependence on \texorpdfstring{$s$}{s} and \texorpdfstring{$\eps$}{eps}}

We first note that the sparse low rank approximation problem is NP-complete for $\eps = 1/n^2$. This is obtained through a reduction from the $s$-\textsc{Biclique} problem, which asks whether a given bipartite graph contains an  $s$-biclique $K_{s,s}$ or not:

\begin{Lemma}\label{lem:exact-slra-np-complete}
    Let $G = (L,R,E)$ be a bipartite graph with $\abs{L} = \abs{R} = n$, and let $\bfA_G$ denote the $n\times n$ biadjacency matrix for $G$, that is, $(\bfA_G)_{i,j}$ is $1$ if vertex $i\in L$ and $j\in R$ are adjacent and $0$ otherwise. Then for $\eps = 1/n^2$, it is NP-hard to compute the cost $\norm{\bfA - \bfA^*}_F^2$ of the optimal solution $\bfA^*$ for the $s\times s$-sparse low rank approximation problem within a $(1+\eps)$ factor.
\end{Lemma}
\begin{proof}
If $G$ contains a $K_{s,s}$, then $\bfA_G$ contains an $s\times s$ all ones matrix, so the optimal cost is $\norm{\bfA - \bfA^*}_F^2 = \norm{\bfA}_F^2 - s^2$. On the other hand, if $G$ contains no $K_{s,s}$, then every $s\times s$ submatrix of $\bfA_G$ must have at most $s^2 - 1$ ones. Thus, the optimal cost is at least $\norm{\bfA - \bfA^*}_F^2 \geq \norm{\bfA}_F^2 - s^2 + 1$. In any case, we have that $\norm{\bfA - \bfA^*}_F^2 \leq \norm{\bfA}_F^2 \leq n^2$ so the costs in these two cases are separated by at least a $1+\eps$ factor for $\eps = 1/n^2$. 
\end{proof}

This is similar to the reduction of sparse PCA to the $s$-\textsc{Clique} problem \cite{DBLP:journals/ipl/Magdon-Ismail17}. The NP-completeness of $s$-biclique $K_{s,s}$ is stated without proof in \cite{DBLP:books/fm/GareyJ79} and proven in e.g.,  \cite{DBLP:journals/jal/AlonDLRY94}.

In fact, it is also known that $s$-\textsc{Biclique} does not admit fixed parameter tractable (FPT) algorithms, assuming a randomized version of the Exponential Time Hypothesis. 
\begin{theorem}[Corollary 1.8, \cite{DBLP:journals/jacm/Lin18}]
Under the randomized ETH, there is no $f(s)\cdot n^{o(\sqrt s)}$-time algorithm to decide whether a given graph contains a subgraph isomorphic to $K_{s,s}$. 
\end{theorem}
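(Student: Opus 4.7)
The plan is a parameter-preserving reduction from $k$-Clique. Under (ordinary) ETH, the Chen--Huang--Kanj--Xia theorem rules out $f(k)\cdot n^{o(k)}$-time algorithms for $k$-Clique on $n$-vertex graphs. My goal is therefore to construct a (randomized) polynomial-time reduction turning an instance $(G,k)$ into a bipartite graph $H$ on $N = n^{O(1)}$ vertices with biclique parameter $s = \Theta(k^2)$, such that $G$ contains a $k$-clique iff $H$ contains $K_{s,s}$. Plugging this into a hypothetical $f(s)\cdot N^{o(\sqrt{s})}$ biclique algorithm then yields $f'(k)\cdot n^{o(k)}$ time for $k$-Clique, contradicting ETH (the ``randomized'' qualifier on ETH absorbs the randomness of the reduction).

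The first step is to rule out the direct encoding, in which each side of $H$ has one vertex per candidate clique vertex and adjacencies mirror $G$: this gives $s = k$ and only an $n^{o(\sqrt k)}$ lower bound, exponentially weaker than desired. To amplify the parameter, I would instead use an edge-indexed encoding: each side of $H$ has one vertex per ordered pair $(i,j)\in[k]^2$ labeled by a candidate edge of the would-be clique, and the edges of $H$ enforce pairwise consistency (the candidate edges must fit together into a single $k$-vertex clique). A genuine $k$-clique then certifies a $K_{\Theta(k^2),\Theta(k^2)}$ in $H$, achieving $s = \Theta(k^2)$ and hence $\sqrt{s} = \Theta(k)$.

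The main obstacle is the soundness direction: the consistency gadgets must kill every $s\times s$ biclique that does \emph{not} come from a genuine clique. Naive deterministic gadgets tend to be either too dense, creating spurious bicliques, or too sparse, destroying honest ones. The resolution in \cite{DBLP:journals/jacm/Lin18} is to sample the gadget bipartite graphs from a carefully tuned distribution and then show by a probabilistic argument (essentially a union bound over the candidate sides of a potential $K_{s,s}$) that with positive probability the sampled graph simultaneously preserves every intended biclique and contains no other $K_{s,s}$. It is this probabilistic step that forces the hypothesis to be the \emph{randomized} ETH, and given such a construction the claimed time lower bound follows immediately from the $k$-Clique lower bound above.
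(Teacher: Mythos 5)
Note first that the paper merely \emph{cites} this result from \cite{DBLP:journals/jacm/Lin18} and does not reprove it, so there is no in-paper argument to compare against; the comparison can only be to Lin's original proof. Your high-level outline does capture its shape: start from the Chen--Huang--Kanj--Xia $f(k)\cdot n^{o(k)}$ hardness of $k$-Clique under ETH, give a randomized polynomial-size reduction to $s$-Biclique with $s = \Theta(k^2)$ so that $\sqrt s = \Theta(k)$, and observe that the probabilistic construction is what forces the hypothesis to be \emph{randomized} ETH. However, the specific ``edge-indexed'' gadget you propose is a guess and does not match Lin's construction, whose technical heart is a probabilistic existence proof for a bipartite \emph{threshold} graph with Zarankiewicz-type extremal properties (every sufficiently large subset of one side has a small common neighborhood, while designated smaller subsets have a much larger one); your sketch acknowledges that a probabilistic soundness argument is needed but elides that central object entirely.

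There is also a concrete logical error in your dismissal of the direct encoding. You claim that a reduction producing $s = k$ would give ``only an $n^{o(\sqrt k)}$ lower bound, exponentially weaker than desired.'' The direction is reversed. If a sound polynomial-size reduction from $k$-Clique to $s$-Biclique with $s = \Theta(k)$ existed, then an $f(s)\cdot N^{t(s)}$-time Biclique algorithm would yield an $f(k)\cdot n^{O(t(k))}$-time Clique algorithm, and the $n^{o(k)}$ Clique hardness would rule out $t(s) = o(s)$ --- i.e., you would obtain the \emph{stronger} $n^{o(s)}$ lower bound, which remains open. The direct encoding fails not because it is too weak but because it is unsound: a complete bipartite subgraph between two $k$-subsets of $V(G)$ (on two copies of $V(G)$) does not enforce within-side edges and so does not witness a $k$-clique. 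Lin's quadratic parameter inflation is a \emph{cost} paid to obtain soundness, and that cost is precisely why the exponent in the Biclique lower bound is $\sqrt s$ rather than $s$.
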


Then by a similar reduction as before, under the randomized ETH, the $(1+n^{-2})$-approximate $s\times s$-sparse low rank approximation problem admits no $f(s)\cdot n^{o(\sqrt s)}$-time algorithm as well. Stated in terms of the accuracy parameter $\eps$ and $s$, we have the following:
\begin{corollary}
Under the randomized ETH, there is no $f(s)\cdot (1/\sqrt\eps)^{o(\sqrt s)}$-time algorithm to compute a $(1+\eps)$-approximate $s\times s$-sparse low rank approximation. 
\end{corollary}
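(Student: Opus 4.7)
The plan is to derive this corollary by chaining together Lemma \ref{lem:exact-slra-np-complete} with Lin's randomized ETH lower bound for $K_{s,s}$-detection. The reduction in Lemma \ref{lem:exact-slra-np-complete} is already the right one; we just need to observe that its accuracy parameter $\eps = 1/n^2$ is precisely the regime where the FPT-style lower bound translates cleanly. Concretely, if an algorithm solved $(1+\eps)$-approximate $s\times s$-sparse low rank approximation in time $f(s) \cdot (1/\sqrt\eps)^{o(\sqrt s)}$, I would instantiate it with $\eps = 1/n^2$, at which point $(1/\sqrt\eps)^{o(\sqrt s)} = n^{o(\sqrt s)}$, matching exactly the forbidden regime of Lin's theorem.

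The proof would go as follows. Suppose for contradiction that an algorithm $\mathcal{A}$ computes a $(1+\eps)$-approximation to the cost $\norm{\bfA-\bfA^*}_F^2$ of the optimal $s\times s$-sparse rank $k$ approximation of any input $\bfA$, in time $f(s) \cdot (1/\sqrt\eps)^{o(\sqrt s)}$. Given a bipartite graph $G = (L,R,E)$ with $|L| = |R| = n$ as an instance of $s$-\textsc{Biclique}, form the $n\times n$ biadjacency matrix $\bfA_G$ as in Lemma \ref{lem:exact-slra-np-complete}. Run $\mathcal{A}$ on $\bfA_G$ with parameter $\eps = 1/n^2$ and rank $k = 1$ (so that the output $\bfA^*$ is forced to be rank one and $s\times s$-sparse, i.e.\ an $s\times s$ all-ones block after rescaling). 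By the proof of Lemma \ref{lem:exact-slra-np-complete}, the optimal costs in the $K_{s,s}$-containing and $K_{s,s}$-free cases differ by at least $1$, while the cost itself is at most $n^2$, so a $(1+1/n^2)$-approximation to the cost separates the two cases and decides whether $G$ contains $K_{s,s}$.

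The running time of this reduction is $f(s) \cdot (1/\sqrt\eps)^{o(\sqrt s)} = f(s) \cdot n^{o(\sqrt s)}$, plus the polynomial time to build $\bfA_G$, which is subsumed into the first term. But Corollary 1.8 of \cite{DBLP:journals/jacm/Lin18} rules out exactly this running time for deciding $K_{s,s}$-subgraph containment under the randomized ETH, giving the desired contradiction.

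The only subtle step is making sure the reduction is actually of the right shape: we need $\mathcal{A}$ to work for arbitrary rank $k$ (in particular $k=1$), and we need the additive separation-of-costs argument from Lemma \ref{lem:exact-slra-np-complete} to imply a multiplicative separation, which it does because $\norm{\bfA_G}_F^2 \leq n^2$ forces the ratio of costs in the two cases to exceed $1 + 1/n^2$. Beyond that, there is no real obstacle: the reduction is a direct transfer of hardness via the identification $1/\sqrt\eps = n$ at $\eps = 1/n^2$, and all the heavy lifting has already been done by Lemma \ref{lem:exact-slra-np-complete} and by \cite{DBLP:journals/jacm/Lin18}.
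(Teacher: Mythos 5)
Your proposal is correct and follows exactly the same route as the paper: apply the reduction of Lemma~\ref{lem:exact-slra-np-complete} to transfer Lin's $f(s)\cdot n^{o(\sqrt s)}$ lower bound for $s$-\textsc{Biclique} under randomized ETH, and then reparametrize via $\eps = 1/n^2$ so that $n = 1/\sqrt\eps$. The only cosmetic inaccuracy is the aside that the output $\bfA^*$ is ``forced to be an $s\times s$ all-ones block''—it need not be, but your actual argument correctly only compares the optimal \emph{costs} in the two cases, which is all that is needed.
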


\begin{Remark}
Recently, the running time lower bound for the $s$-\textsc{Biclique} problem has been shown to admit no algorithms running in time $f(s) n^{o(s)}$ under a variant of the Planted Clique conjecture, termed the \emph{Strongish Planted Clique Hypothesis} \cite{DBLP:conf/innovations/ManurangsiRS21}. Then using this, one concludes that there is no algorithm running in time $f(s)(1/\sqrt\eps)^{o(s)}$ for the $(1+\eps)$-approximate Frobenius sparse low rank approximation problem as well, under the Strongish Planted Clique Conjecture. 
\end{Remark}
\section{Gaussian Noise Spectral Sparse Low Rank Approximation}\label{sec:gaussian-noise}
We finally consider the setting where we are given a matrix 
\[
    \bfA = \lambda \bfX + \bfG
\]
for $\bfG\sim \mathcal N(0,1)^{n\times n}$, $\lambda = O(\sqrt n)$, and $\bfX\in\mathcal O_{s,k}$ has operator norm $1$. Note that we consider disjoint $s\times s$-sparse rank $k$ matrices $\mathcal O_{s,k}$ rather than $\mathcal S_{s,k}$ in order to obtain tight bounds for this problem. Furthermore, throughout this section, we will assume that the coefficients $\tau_i$ in the Definition \ref{def:sxs-sparse-rank-k-matrix} are bounded by $\tau_i\leq\poly(n)$, for convenience for certain net arguments.

In this setting, we consider both the problem of detection of the low rank signal $\bfX$ as well as the estimation of the low rank signal. 

\subsection{Lemmas for Analysis}

We first introduce simple lemmas which will be useful for our analysis. The first argues that orthogonal sparse components can only add Frobenius mass to each other.

\begin{Lemma}\label{lem:heavy-submatrix}
    Let $\bfu_1,\bfv_1$ and $\bfu_2,\bfv_2$ be two $s$-sparse left and right singular vector pairs. Then, the Frobenius norm of the restriction of $\bfu_1 \bfv_1^\top + \bfu_2 \bfv_2^\top$ to the $s\times s$ support of $\bfu_1 \bfv_1^\top$ is at least $\norm{\bfu_1 \bfv_1^\top}_F$. 
\end{Lemma}
\begin{proof}
    Let $S_\bfu$ be intersection of the supports of $\bfu_1$ and $\bfu_2$. Note then that $\angle*{\bfu_1\vert_{S_u}, \bfu_2\vert_{S_u}} = \angle*{\bfu_1,\bfu_2} = 0$. Similarly, $\angle*{\bfv_1\vert_{S_\bfv}, \bfv_2\vert_{S_\bfv}} = 0$ where $S_\bfv$ is the intersection of the supports of $\bfv_1$ and $\bfv_2$. Then by the Pythagorean theorem, the Frobenius norm of the sum $\bfu_1\bfv_1^\top + \bfu_2\bfv_2^\top$ restricted to $S_\bfu\times S_\bfv$ is
    \begin{align*}
        \norm*{\bfu_1\vert_{S_\bfu}\bfv_1\vert_{S_\bfv}^\top + \bfu_2\vert_{S_\bfu}\bfv_2\vert_{S_\bfv}^\top}_F^2 &= \norm*{\bfu_1\vert_{S_\bfu}\bfv_1\vert_{S_\bfv}^\top}_F^2 + \norm*{\bfu_2\vert_{S_\bfu}\bfv_2\vert_{S_\bfv}^\top}_F^2 + 2\angle*{\bfu_1\vert_{S_\bfu}\bfv_1\vert_{S_\bfv}^\top, \bfu_2\vert_{S_\bfu}\bfv_2\vert_{S_\bfv}^\top} \\
        &= \norm*{\bfu_1\vert_{S_\bfu}\bfv_1\vert_{S_\bfv}^\top}_F^2 + \norm*{\bfu_2\vert_{S_\bfu}\bfv_2\vert_{S_\bfv}^\top}_F^2 + 2\angle*{\bfu_1\vert_{S_\bfu}, \bfu_2\vert_{S_\bfu}}\angle*{\bfv_1\vert_{S_\bfv}, \bfv_2\vert_{S_\bfv}} \\
        &= \norm*{\bfu_1\vert_{S_\bfu}\bfv_1\vert_{S_\bfv}^\top}_F^2 + \norm*{\bfu_2\vert_{S_\bfu}\bfv_2\vert_{S_\bfv}^\top}_F^2 \geq \norm*{\bfu_1\vert_{S_\bfu}\bfv_1\vert_{S_\bfv}^\top}_F^2
    \end{align*}
    By adding in the rest of the entries of $\bfu_1\bfv_1^\top$ outside of $S_\bfu\times S_\bfv$, we conclude as desired. 
\end{proof}

Our second lemma shows that any unit vector is essentially an $s$-sparse vector with entries of squared value roughly $1/s$, for some $s$. 

\begin{Lemma}\label{lem:flat-sparsity}
    Let $\bfv\in\mathbb R^m$. Then, there exists an $s\in[m]$ such that there are at least $s/2$ entries of $\bfA$ with squared value at least $\norm*{\bfv}_2^2(s(1+\log_2 m))^{-1}$. 
\end{Lemma}
\begin{proof}
    By scaling, assume WLOG that $\norm*{\bfv}_2^2 = 1$. If $\bfv$ has an entry with squared value at least $(1+\log_2 m)^{-1}$, then we are already done with $s = 1$. Thus, let $\abs{\bfv_i} < (1+\log_2 m)^{-1}$ for all $i\in[m]$. 

    Suppose for contradiction that for every $\ell\in[\log_2 m]$, there are fewer than $2^\ell/2$ entries of $\bfv$ with absolute value at least $(2^\ell\log_2 m)^{-1}$. We then partition the entries of $\bfA$ into the multiset $S_0$ of entries with squared value at most $1/m(1+\log_2 m)$ and the multisets $S_\ell$ with squared value in
    \[
        \left[ \frac1{2^\ell(1+\log_2 m)}, \frac2{2^\ell(1+\log_2 m)} \right)
    \]
    for each $\ell\in[\log_2 m]$. Note then that 
    \begin{align*}
        \norm{\bfv}_2^2 &< \abs{S_0}\cdot\frac1{m(1+\log_2 m)} + \sum_{\ell=1}^{\log_2 m}\abs{S_\ell}\cdot \frac{2}{2^\ell(1+\log_2 m)} \\
        &\leq \frac{m}{m(1+\log_2 m)} + \sum_{\ell = 1}^{\log_2 m} \frac{2^\ell}{2} \frac{2}{2^\ell(1+\log_2 m)} \\
        &\leq \frac1{1+\log_2 m} + \sum_{\ell = 1}^{\log_2 m} \frac1{1+\log_2 m} \\
        &= \frac{1+\log_2 m}{1+\log_2 m} = 1,
    \end{align*}
    a contradiction.
\end{proof}

\subsection{Detection}

We first consider the problem of detection. 

\subsubsection{Lower Bounds}

As previously noted, \cite{DBLP:journals/siamcomp/LiNW19} give sketching lower bounds for the detection problem when $\bfX$ is a random $n\times n$ rank $1$ matrix. We first adapt this lower bound to the setting where $\bfX\in\mathcal O_{s,k}$. For our sparse low rank signal lower bound, we draw our low rank signal matrix as a random sparse vector. 

\begin{Definition}[Random sparse vector]\label{def:random-sparse-vector}
We define the distribution $\nu_{s,n}$ which samples a vector $\bfu\in\mathbb R^n$ by sampling a subset $S\subseteq[n]$ of $\abs{S} = s$ random coordinates, drawing the values for $\bfu$ on the coordinates of $S$ as $\bfu\vert_S \sim \mathcal N(0,\bfI_s)$, and $0$s everywhere else. 
\end{Definition}


Now consider the problem of distinguishing between two distributions where $\mathcal D_1 = \mathcal N(0,1)^{n\times n}$ is an $n\times n$ matrix with independent standard Gaussian entries, and $\mathcal D_2$ is the distribution drawn as $\sum_{j=1}^k \bfw_j \bfu^j (\bfv^j)^\top$ where $\bfu^j\sim\nu_{n,s_1}$ and $\bfv^j\sim\nu_{d,s_2}$, are all drawn independently (recall Definition \ref{def:random-sparse-vector}). We take $m$ linear measurements and denote the corresponding measurements as $\bfL^1, \bfL^2, \dots, \bfL^m$. We WLOG assume that $\norm*{\bfL^i}_F^2 = 1$ and $\tr(\bfL^i (\bfL^j)^\top) = 0$ for $i\neq j$. Let $\mathcal L_1$ and $\mathcal L_2$ denote the distribution 
of the $m$ dimensional linear sketches of $\mathcal D_1$ and $\mathcal D_2$, respectively. 

\paragraph{Useful Computations.}

We will need the following simple lemma. 
    
\begin{Lemma}\label{lem:frobenius-random-submatrix}
Let $\bfA\in\mathbb R^{n\times d}$. Then,
\[
    \E_{S\sim\binom{[n]}{s}}\norm{\bfS_S \bfA}_F^2 = \sum_{S\in\binom{[n]}{s}} \frac1{\binom{n}{s}}\sum_{i\in S}\norm{\bfe_i^\top \bfA}_2^2 = \sum_{i\in[n]}\frac{\binom{n-1}{s-1}}{\binom{n}{s}}\norm{\bfe_i^\top \bfA}_2^2 = \frac{s}{n}\norm{\bfA}_F^2
\]
\end{Lemma}

We also need the following computation for later use. This slightly modifies \cite[Lemma 3.1]{DBLP:journals/siamcomp/LiNW19}. 

\begin{Lemma}
    Let $\bfA\in\mathbb R^{n\times d}$ with $\norm*{\bfA}_F < 1$. Let $\bfu'\sim\nu_{s_1,n}$ and $\bfv'\sim\nu_{s_2,d}$ be drawn independently. Then,
    \[
        \E_{\bfu',\bfv'}\exp(\bfu'^\top\bfA\bfv') \leq 1 + \frac{s_1 s_2}{nd}\norm*{\bfA}_F^2.
    \]
\end{Lemma}
\begin{proof}
    It is shown in the proof of \cite[Lemma 3.1]{DBLP:journals/siamcomp/LiNW19} that for a matrix $\bfM\in\mathbb R^{s_1 \times s_2}$ and independent Gaussians $\bfx\sim\mathcal N(0,\bfI_{s_1})$ and $\bfy\sim\mathcal N(0,\bfI_{s_2})$ that
    \[
        \E_{\bfx,\bfy} \exp(\bfx^\top\bfM\bfy^\top) = \prod_i \frac1{\sqrt{1-\sigma_i^2(\bfM)}}.
    \]
    Applying this identity, we obtain

    \begin{align*}
        \E_{\substack{\bfu'\sim\nu_{s_1,n} \\ \bfv'\sim\nu_{s_2,d}}} \exp\parens*{\bfu'^\top \bfA\bfv'} &= \E_{S\sim\binom{[n]}{s_1},T\sim\binom{[d]}{s_2}} \E_{\bfx\sim \mathcal N(0,\bfI_{s_1}), \bfy\sim\mathcal N(0,\bfI_{s_2})}\exp\parens*{\bfx^\top (\bfS_S \bfA \bfS_T^\top) \bfy} \\
        &= \E_{S\sim\binom{[n]}{s_1},T\sim\binom{[d]}{s_2}} \prod_{i} \frac1{\sqrt{1-\sigma_i^2(\bfS_S \bfA \bfS_T^\top)}}
    \end{align*}
    We then continue to bound as
    \begin{align*}
        \E_{S\sim\binom{[n]}{s_1},T\sim\binom{[d]}{s_2}} \prod_{i} \frac1{\sqrt{1-\sigma_i^2(\bfS_S \bfA \bfS_T^\top)}} &\leq \E_{S\sim\binom{[n]}{s_1},T\sim\binom{[d]}{s_2}} \frac1{\sqrt{1-\sum_{i} \sigma_i^2(\bfS_S \bfA \bfS_T^\top)}} \\
        &= \E_{S\sim\binom{[n]}{s_1},T\sim\binom{[d]}{s_2}} \frac1{\sqrt{1-\norm{\bfS_S \bfA \bfS_T^\top}_F^2}} \\
        &\leq \E_{S\sim\binom{[n]}{s_1},T\sim\binom{[d]}{s_2}} 1 + \norm{\bfS_S \bfA \bfS_T^\top}_F^2 && \text{$1/\sqrt{1-t}\leq 1+t$ on $[0,1/2]$}\\
        &= 1 + \E_{S\sim\binom{[n]}{s_1}}\E_{T\sim\binom{[d]}{s_2}}\norm{\bfS_S \bfA \bfS_T^\top}_F^2 \\
        &= 1 + \frac{s_1 s_2}{nd}\norm{\bfA}_F^2 && \text{Lemma \ref{lem:frobenius-random-submatrix}}
    \end{align*}
    which is the desired conclusion.
\end{proof}

\paragraph{Main Lower Bound.}

\begin{Lemma}\label{lem:gaussian-detection-lb}
We consider sketching dimension $m$ with normalization vector $\bfw\in\mathbb R^k$. Let $s_1$ and $s_2$ be sparsity parameters and let $n,d$ be the dimensions of the Gaussian matrix. Further suppose
\[
    m\frac{s_1 s_2}{n d}\norm{\bfw}_2^4\leq c
\]
for a small enough constant $c$. Then,
\[
    \norm{\mathcal L_1 - \mathcal L_2}_\TV\leq \frac1{10}.
\]
\end{Lemma}

\begin{proof}
Note that $\mathcal L_1 = \mathcal N(0,\bfI_m)$ and $\mathcal L_2 = \mathcal N(0,\bfI_m) * \mu$ where $\mu$ is the distribution of
\[
    \begin{pmatrix}
        \sum_{j=1}^k \bfw_j (\bfu^j)^\top \bfL^1 \bfv^j \\
        \sum_{j=1}^k \bfw_j (\bfu^j)^\top \bfL^2 \bfv^j \\
        \vdots \\
        \sum_{j=1}^k \bfw_j (\bfu^j)^\top \bfL^m \bfv^j \\
    \end{pmatrix} \in \mathbb R^m
\]
Define
\[
    \xi \coloneqq \sum_{i=1}^m \parens*{\sum_{j=1}^k  \bfw_j(\bfu^j)^\top\bfL^i \bfv^j}^2
\]
Note that for a single $j\in[k]$, $\E (\bfu^j)^\top\bfL^i\bfv^j = 0$ so 
\begin{align*}
    \E\xi &= \sum_{i=1}^m \E\parens*{\sum_{j=1}^k  \bfw_j(\bfu^j)^\top\bfL^i \bfv^j}^2 \\
    &= \sum_{i=1}^m \sum_{j=1}^k \E\parens*{  \bfw_j(\bfu^j)^\top\bfL^i \bfv^j}^2 \\
    &= \sum_{i=1}^m \sum_{j=1}^k \bfw_j^2  \sum_{a=1}^n \sum_{b=1}^d (\bfL^i)_{a,b}^2 \E(\bfu^j)_a^2 \E(\bfv^j)_b^2 \\
    &= \frac{ms_1s_2}{nd}\norm{\bfw}_2^2
\end{align*}

We now define the event 
\[
    \mathcal E \coloneqq \braces*{\norm{\bfw}_2^2\xi < 1/2}.
\]
The expected value of $\norm{\bfw}_2^2\xi$ is
\[
    \frac{ms_1s_2}{nd}\norm{\bfw}_2^4 \leq c
\]
so by Markov's inequality, $\Pr(\mathcal E) \geq 1 - 2c$. Restrict $\mu$ to this event and let $\tilde\mu$ denote the resulting distribution. Let $\tilde{\mathcal L_2} = \mathcal N(0,\bfI_m) * \tilde\mu$. Then using Propositions 2.1 and 2.2 of \cite{DBLP:journals/siamcomp/LiNW19},
\begin{align*}
    \norm*{\mathcal L_1 - \mathcal L_2}_\TV &\leq \norm*{\mathcal L_1 - \tilde{\mathcal L_2}}_\TV + \norm*{\tilde{\mathcal L_2}- \mathcal L_2}_\TV \\
    &\leq \sqrt{\E_{\bfz_1, \bfz_2\sim\tilde\mu}\exp(\angle*{\bfz_1,\bfz_2}) - 1} + \norm*{\tilde\mu - \mu}_\TV \\
    &\leq \sqrt{\frac1{\Pr(\mathcal E)}\E_{\bfz_1\sim\tilde\mu, \bfz_2\sim\mu}\exp(\angle*{\bfz_1,\bfz_2}) - 1} + \Pr(\mathcal E^c).
\end{align*}
Then,
\begin{align*}
    \E_{\bfz_1\sim\tilde\mu, \bfz_2\sim\mu}\exp(\angle*{\bfz_1,\bfz_2}) &= \E\exp\parens*{\sum_{i=1}^m \sum_{j=1}^k\sum_{j'=1}^k  \bfw_j(\bfu^j)^\top\bfL^i \bfv^j \cdot \bfw_{j'}(\bfu^{j'})^\top\bfL^i \bfv^{j'}} \\
    &= \E\exp\parens*{\sum_{j'=1}^k(\bfu^{j'})^\top\bracks*{\sum_{i=1}^m \sum_{j=1}^k  \bfw_j(\bfu^j)^\top\bfL^i \bfv^j \cdot \bfw_{j'}\bfL^i} \bfv^{j'}} \\
    &= \E_{\bfu^1,\dots,\bfu^r,\bfv^1,\dots,\bfv^r\mid \mathcal E}\prod_{j'=1}^k \E_{\substack{\bfu^{j'}\sim \nu_{k,m} \\ \bfv^{j'}\sim \nu_{k,n}}}\exp\parens*{(\bfu^{j'})^\top \bfQ^{j'}\bfv^{j'}}
\end{align*}
where $\bfQ^{j'}$ is an $n\times d$ matrix defined by
\[
    \bfQ^{j'} \coloneqq \sum_{i=1}^m \sum_{j=1}^k  \bfw_j(\bfu^j)^\top\bfL^i \bfv^j \cdot \bfw_{j'}\bfL^i
\]
We have that
\begin{align*}
    \norm{\bfQ^{j'}}_F^2 &= \norm*{\sum_{i=1}^m \sum_{j=1}^k  \bfw_j(\bfu^j)^\top\bfL^i \bfv^j \cdot \bfw_{j'}\bfL^i}_F^2 \\
    &= \sum_{i=1}^m \norm*{\sum_{j=1}^k  \bfw_j(\bfu^j)^\top\bfL^i \bfv^j \cdot \bfw_{j'}\bfL^i}_F^2 && \text{since the $\bfL^i$ are orthogonal} \\
    &= \sum_{i=1}^m \parens*{\sum_{j=1}^k  \bfw_j(\bfu^j)^\top\bfL^i \bfv^j \cdot \bfw_{j'}}^2 \norm*{\bfL^i}_F^2 \\
    &= \bfw_{j'}^2 \sum_{i=1}^m \parens*{\sum_{j=1}^k  \bfw_j(\bfu^j)^\top\bfL^i \bfv^j}^2 \\
    &= \bfw_{j'}^2 \xi < 1
\end{align*}
since we have conditioned on $\mathcal E$. Thus,
\begin{align*}
    \E_{\bfu^1,\dots,\bfu^k,\bfv^1,\dots,\bfv^k\mid \mathcal E}\prod_{j'=1}^k \E_{\bfu^{j'},\bfv^{j'}}\exp\parens*{(\bfu^{j'})^\top \bfQ^{j'}\bfv^{j'}} &\leq \E_{\bfu^1,\dots,\bfu^k,\bfv^1,\dots,\bfv^k\mid \mathcal E}\prod_{j'=1}^k\parens*{1 + \frac{s_1s_2}{nd}\bfw_{j'}^2\xi} \\
    &\leq \E_{\bfu^1,\dots,\bfu^k,\bfv^1,\dots,\bfv^k\mid \mathcal E}\exp\parens*{\sum_{j'=1}^k \frac{s_1s_2}{nd}\bfw_{j'}^2\xi} \\
    &= \E_{\bfu^1,\dots,\bfu^k,\bfv^1,\dots,\bfv^k\mid \mathcal E}\exp\parens*{\frac{s_1s_2}{nd}\norm{\bfw}_2^2\xi} \\
    &\leq \E_{\bfu^1,\dots,\bfu^k,\bfv^1,\dots,\bfv^k\mid \mathcal E}1 + 2\frac{s_1s_2}{nd}\norm{\bfw}_2^2\xi \\
    &\leq 1 + 2c
\end{align*}
Thus, 
\[
    \norm{\mathcal L_1 - \mathcal L_2}_\TV \leq \sqrt{\frac{2c}{1-2c}} + 2c \leq \frac1{10}
\]
when $c$ is small enough. 
\end{proof}

\begin{corollary}\label{cor:gaussian-detection-lb}
Let $k = 1$, $d = n$, and
\[
    w = \frac{\sqrt n}{\sqrt{s_1 s_2}}. 
\]
Then, a randomized sketching algorithm distinguishing between $\mathcal D_1$ and $\mathcal D_2$ must make at least $m = \Omega(s_1 s_2)$ measurements. 
\end{corollary}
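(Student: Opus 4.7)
The plan is to directly apply Lemma \ref{lem:gaussian-detection-lb} with the specific parameter choices of the corollary and observe that its hypothesis reduces to a clean constraint on $m$. With $k = 1$ the normalization vector $\bfw$ collapses to the single scalar $w = \sqrt{n}/\sqrt{s_1 s_2}$, so
\[
    \norm*{\bfw}_2^4 = w^4 = \frac{n^2}{(s_1 s_2)^2}.
\]
Substituting $d = n$, the hypothesis $m \cdot (s_1 s_2 / nd) \cdot \norm*{\bfw}_2^4 \leq c$ of Lemma \ref{lem:gaussian-detection-lb} becomes
\[
    m \cdot \frac{s_1 s_2}{n \cdot n} \cdot \frac{n^2}{(s_1 s_2)^2} = \frac{m}{s_1 s_2} \leq c,
\]
which is equivalent to $m \leq c \cdot s_1 s_2$ for the small absolute constant $c$ from the lemma.

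For any such $m$, Lemma \ref{lem:gaussian-detection-lb} gives $\norm*{\mathcal L_1 - \mathcal L_2}_\TV \leq 1/10$. By the standard correspondence between total variation distance and distinguishing advantage, no (possibly randomized) procedure acting on the $m$ linear measurements can distinguish $\mathcal D_1$ from $\mathcal D_2$ with success probability exceeding $1/2 + 1/20 < 2/3$. Consequently any sketching algorithm achieving success probability at least $2/3$ must use $m = \Omega(s_1 s_2)$ measurements.

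There is essentially no obstacle: the normalization $w = \sqrt{n}/\sqrt{s_1 s_2}$ is chosen precisely so that the $\norm*{\bfw}_2^4$ factor cancels the $nd$ in the denominator of the Lemma \ref{lem:gaussian-detection-lb} hypothesis, leaving the tight constraint $m = O(s_1 s_2)$. One may optionally remark that this choice of $w$ is natural, since it makes the planted signal $w \bfu \bfv^\top$ with $\bfu \sim \nu_{s_1,n}$ and $\bfv \sim \nu_{s_2,n}$ have expected Frobenius norm $w \sqrt{s_1 s_2} = \sqrt{n}$, matching the per-column scale of the Gaussian noise $\bfG \in \mathbb R^{n \times n}$ and thereby placing the problem in the genuinely hard regime where detection is nontrivial.
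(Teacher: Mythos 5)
Your proposal is correct and takes essentially the same approach as the paper: substitute $k=1$, $d=n$, and $w = \sqrt{n}/\sqrt{s_1 s_2}$ into the hypothesis of Lemma \ref{lem:gaussian-detection-lb} and observe that it simplifies to $m/(s_1 s_2) \leq c$. You phrase the concluding step slightly more carefully as a contrapositive (small TV distance implies indistinguishability, hence a distinguishing algorithm forces $m > c\, s_1 s_2$), whereas the paper states it more tersely, but the argument is the same.
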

\begin{proof}
    By Lemma \ref{lem:gaussian-detection-lb}, we must have that
    \[
        m\frac{s_1s_2}{n^2}w^4 = m \frac{1}{s_1s_2} \geq c
    \]
    for a constant $c$. Thus, $m \geq c s_1 s_2 = \Omega(s_1 s_2)$, as desired. 
\end{proof}

Now note that if we set $s_1 = s$ and $s_2 = sk$ in the above theorem, this yields a disjoint $s\times s$-sparse rank $k$ matrix. To detect this matrix, our theorem states that $\Omega(s^2 k)$ measurements are required. 

\begin{corollary}\label{cor:gaussian-detection-large-sparsity-lb}
    Any randomized sketching algorithm distinguishing between $\bfA = \bfG$ and $\bfA = \lambda \bfX + \bfG$ for $\bfG\sim\mathcal N(0,1)^{n\times n}$, $\lambda = \sqrt n$, and $\bfX\in\mathcal O_{s,k}$ with $\norm*{\bfX}_2 = 1$ requires $\Omega(s^2 k)$ measurements. 
\end{corollary}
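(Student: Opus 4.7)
The plan is to reduce to Corollary \ref{cor:gaussian-detection-lb} by setting $s_1 = s$, $s_2 = sk$, $d = n$, and (in the corollary's notation) $k = 1$, and then to reinterpret the resulting rank-$1$ sparse signal as an element of $\mathcal O_{s,k}$. With these choices, Corollary \ref{cor:gaussian-detection-lb} already tells us that distinguishing $\bfG \sim \mathcal N(0,1)^{n\times n}$ from $w\bfu\bfv^\top + \bfG$, where $\bfu\sim\nu_{s,n}$ and $\bfv\sim\nu_{sk,n}$ are independent and $w = \sqrt{n/(s \cdot sk)} = \sqrt{n/(s^2 k)}$, requires $\Omega(s_1 s_2) = \Omega(s^2 k)$ measurements. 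It therefore remains only to exhibit the signal $w\bfu\bfv^\top$ as a $\sqrt n$-scaled element of $\mathcal O_{s,k}$ of unit operator norm.

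First I would verify the norm scaling. Since $\bfu$ restricted to its $s$-sized support is $\mathcal N(0,\bfI_s)$ and likewise for $\bfv$ on its $sk$-sized support, standard Gaussian concentration gives $\|\bfu\|_2 = \Theta(\sqrt s)$ and $\|\bfv\|_2 = \Theta(\sqrt{sk})$ with probability at least $99/100$, so that
\[
    \|w\bfu\bfv^\top\|_2 \;=\; w\|\bfu\|_2\|\bfv\|_2 \;=\; \Theta(\sqrt n).
\]
Conditioning on this event shifts the total variation distance from the conclusion of Lemma \ref{lem:gaussian-detection-lb} by at most $1/100$, which can be absorbed into the $1/10$ bound after adjusting the constant $c$. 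Rescaling by a constant absorbs the $\Theta(1)$ slack so that the signal is of the form $\lambda\bfX$ with $\lambda = \sqrt n$ and $\|\bfX\|_2 = 1$.

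Second, I would exhibit the disjoint $s\times s$-sparse rank $k$ structure by a purely combinatorial partitioning of $\operatorname{supp}(\bfv)$. Since $\bfv$ is $sk$-sparse, arbitrarily partition $\operatorname{supp}(\bfv) = T_1\sqcup\cdots\sqcup T_k$ into $k$ disjoint blocks of size $s$ and let $\bfv_i = \bfP_{T_i}\bfv$. Then
\[
    \bfu\bfv^\top \;=\; \sum_{i=1}^k \bfu\bfv_i^\top,
\]
and the supports $\operatorname{supp}(\bfu)\times T_i$ are pairwise disjoint as subsets of $[n]\times[n]$ because the $T_i$ are pairwise disjoint. Writing each rank-$1$ piece as $\tau_i \bfx_i\bfy_i^\top$ with $s$-sparse unit vectors $\bfx_i,\bfy_i$ confirms $w\bfu\bfv^\top \in \mathcal O_{s,k}$ (note that the $\bfx_i$'s coincide on a shared support but this is permitted by Definition: only the product supports must be disjoint).

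Combining the two pieces, any sketching algorithm that distinguishes $\bfA = \bfG$ from $\bfA = \lambda\bfX + \bfG$ with $\bfX\in\mathcal O_{s,k}$, $\|\bfX\|_2 = 1$, and $\lambda = \sqrt n$ in particular distinguishes the two distributions from Corollary \ref{cor:gaussian-detection-lb} (up to the constant-factor rescaling and the $1/100$ conditioning loss above), and hence requires $\Omega(s^2 k)$ measurements. The only mild technical point is the normalization bookkeeping --- pushing the operator norm from $\Theta(\sqrt n)$ to exactly $\sqrt n$ by a deterministic constant and absorbing the constant-probability concentration failure --- but no new ideas are needed beyond what is already in Corollary \ref{cor:gaussian-detection-lb}.
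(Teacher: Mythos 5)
Your proposal is correct and takes essentially the same approach as the paper: set $s_1 = s$, $s_2 = sk$, $k=1$ in Corollary \ref{cor:gaussian-detection-lb}, and observe that the resulting rank-$1$ signal with $s$-sparse left factor and $sk$-sparse right factor lies in $\mathcal O_{s,k}$ by partitioning the column support into $k$ disjoint size-$s$ blocks. The paper states this in one sentence without spelling out the partitioning or the operator-norm bookkeeping; your write-up fills in exactly those details (conditioning on Gaussian-norm concentration and absorbing the $\Theta(1)$ slack into the constant $c$), which is a mild but legitimate technicality the paper glosses over.
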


Despite this natural lower bound, there is in fact a better lower bound when $s^2 k$ is small. Consider the task of distinguishing between $\bfA = \bfG$ and $\bfA = \bfG + \sqrt n \bfe_i \bfe_j^\top$ for two randomly chosen coordinates $i\sim [n]$ and $j\sim [n]$. This is the hard instance in \cite{DBLP:conf/icalp/AndoniNPW13} for estimating the $p = 4$ norm of an $n^2$ dimensional vector in the context of frequency moment estimation. For this problem, they show that 
\[
    \Omega((n^2)^{1-2/p}\log(n^2)) = \Omega(n\log n)
\]
measurements are required. Putting these together, the lower bound we obtain is
\[
    \Omega(n\log n + s^2 k).
\]

\subsubsection{Upper Bounds}

We have two upper bounds, each matching the two terms, the $n\log n$ and the $s^2 k$, in the lower bound. 

\paragraph{Small $s$ Regime: $s\leq \sqrt{n/k\log n}$.}

We first show that when 
\[
    s^2 k\log (s^2 k) \leq n
\]
or equivalently when $s \leq \sqrt{n / k\log n}$, that is, when the first term in the lower bound is roughly tight, then an algorithm based on $4$-norm estimation achieves a bound of $n\cdot\poly\log(n)$ measurements. In this setting, we essentially need none of the structure provided by the fact that $\bfX\in\mathcal O_{s,k}$. In fact, all we use is that $\norm*{\bfX}_F^2\geq 1$ and that $\bfX$ is supported on $s^2 k$ coordinates. 

For intuition, let $b\leq n$ and suppose that $\bfX$ is supported on $b$ entries with each nonzero entry equal to $1/\sqrt b$, so the corresponding entry in $\bfA$ is $\sqrt{n/b}$. Now consider a random sample $S$ of $n^2/b^2$ elements of $\bfA$, whose $4$-norm we compute in roughly $\sqrt{n^2/b^2} = n/b$ measurements. If $S$ contains no entries of $\bfX$, then $\norm{S}_4^4 = \Theta(n^2/b^2)$, while if $S$ does contain an entry of $\bfX$, then $4$-norm is a constant times larger, so we can distinguish between the two cases. If we now repeat this $b$ times, for a total of $b\cdot (n/b) = n$ measurements, then we sample a total of $b\cdot (n^2/b^2) = n^2/b$ elements, so one of these trials will find one of the $b$ elements of $\bfX$. 

As described earlier, when $s \leq \sqrt{n/k\log n}$, then we may make use of $4$-norm estimation algorithms to detect the presence of the signal.

\begin{theorem}[Sketching frequency moments \cite{DBLP:journals/corr/abs-1011-2571, DBLP:conf/focs/AndoniKO11, DBLP:conf/icalp/Ganguly15, DBLP:conf/icalp/GangulyW18}]
    \label{thm:freq-moments}
For $p>2$, the $p$-norm of an $n$-dimensional vector can be approximated up to constant factors with probability $1-\delta$ with sketching complexity $\Theta(n^{1-2/p}\log(1/\delta) + n^{1-2/p}\log^{2/p}(1/\delta)\log n)$. 
\end{theorem}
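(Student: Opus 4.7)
The plan is the classical Indyk--Woodruff/Andoni--Krauthgamer--Onak recipe: reduce $\ell_p$ estimation to heavy hitter recovery at geometrically decreasing sampling rates. The starting observation is that for $p > 2$ the norm is dominated by its $\ell_p$-heavy entries: by a counting argument, at most $O(n^{1-2/p})$ coordinates can satisfy $\abs{\bfx_i}^p \geq \norm*{\bfx}_p^p / n^{1-2/p}$, and the remaining tail contributes at most a constant fraction of $\norm*{\bfx}_p^p$. Hence it suffices to locate these heavy entries and estimate their magnitudes to constant factors.

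The sketch would maintain, in parallel, $O(\log n)$ substreams, where the $\ell$-th substream keeps each coordinate of $\bfx$ independently with probability $2^{-\ell}$, together with a \textsf{CountSketch}-style heavy hitters data structure on each substream using $B = O(n^{1-2/p})$ buckets (repeated $O(\log(1/\delta))$ times for confidence). The point of the geometric subsampling is that a coordinate $i$ with $\abs{\bfx_i}^2$ of order $\norm*{\bfx}_2^2 / 2^{\ell}$ is not an $\ell_2$-heavy hitter at level $0$, but becomes one at level $\ell$ once enough lighter coordinates have been thinned away; the tail error guarantee of \textsf{CountSketch} (Lemma~\ref{lem:cs-tail-error}) then recovers it at that level. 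Running over all $O(\log n)$ levels lets one recover every coordinate that contributes non-negligibly to $\norm*{\bfx}_p^p$ and compute an approximation of $\abs{\bfx_i}^p$ for each.

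The estimator then aggregates the recovered values across levels, taking care to use each heavy coordinate exactly once (e.g. at the coarsest level where it is detected) and to rescale by the inverse subsampling probability when estimating the contribution of lighter coordinates via an unbiased sampling estimator. The main obstacle is the probabilistic analysis: one must (i) argue that each heavy coordinate is detected at some level with the right probability, (ii) bound both the bias coming from the \textsf{CountSketch} tail error and the variance arising from subsampling, and (iii) carefully union bound over the $O(\log n)$ levels and the at most $n^{1-2/p}$ heavy coordinates per level so that the failure probabilities fit into $\delta$. Tracking these log factors is exactly what produces the two-term complexity $n^{1-2/p}\log(1/\delta) + n^{1-2/p}\log^{2/p}(1/\delta)\log n$: the first term is the cost of the base sketch repeated for confidence, while the second accounts for the amplification needed to control the combined subsampling+hashing noise to within a constant factor of $\norm*{\bfx}_p^p$. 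Matching lower bounds via a reduction from multi-party set disjointness (as in the cited works) show this is tight.
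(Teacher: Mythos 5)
The theorem you are asked about is stated in the paper as a citation of external work, not something the paper itself proves; so there is no in-paper proof to compare against. That said, the proposal can still be judged on its own merits.

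Your high-level description of the Indyk--Woodruff / AKO framework is accurate: geometric subsampling at $O(\log n)$ levels, an $\ell_2$-heavy-hitters (\textsf{CountSketch}) structure with $\Theta(n^{1-2/p})$ buckets at each level, and aggregation with inverse-probability rescaling is indeed the canonical recipe for $F_p$ estimation with $p>2$. The opening observation that at most $O(n^{1-2/p})$ coordinates can be $\ell_p$-heavy is also correct and is the right starting point.

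The genuine gap is in the final step, where you write that ``tracking these log factors is exactly what produces the two-term complexity.'' It doesn't — at least not from the algorithm as you describe it. If you run $O(\log n)$ subsampling levels, each with $O(n^{1-2/p})$ buckets repeated $O(\log(1/\delta))$ times for confidence, the sketch size you get is $\Theta(n^{1-2/p}\cdot\log n\cdot\log(1/\delta))$, which is strictly worse than the claimed $\Theta\bigl(n^{1-2/p}\log(1/\delta) + n^{1-2/p}\log^{2/p}(1/\delta)\log n\bigr)$ whenever $\log(1/\delta)\gg 1$. The sublinear exponent $\log^{2/p}(1/\delta)$ is a nontrivial tightening that requires specifically engineered variance-reduction and repetition schemes from the cited Ganguly and Ganguly--Woodruff papers (not just a careful union bound over the naive construction). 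Similarly, the $\Theta$ in the statement is two-sided, and the lower bound side requires more than the one-line appeal to multi-party set disjointness: disjointness gives the $\Omega(n^{1-2/p})$ base term, but the $\log(1/\delta)$ and $\log n$ factors in the lower bound come from separate arguments (direct-sum/information-cost arguments for the confidence factor and a distinct construction for the $\log n$ factor). As a sketch of the classical upper-bound framework it is fine, but it does not actually establish the stated complexity, and you should either derive the refined log dependence or defer the exact constant-in-the-exponent bookkeeping explicitly to the cited works rather than claiming the recipe ``exactly'' yields it.
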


\begin{algorithm}[H]
    \caption{Detection algorithm for $s\leq \sqrt{n/k\log n}$}
    {\bf Input}: $\bfA\in\mathbb R^{n\times n}$, sparsity $s$, rank $k$ \\
    {\bf Output}: True if $\bfA = \bfG + \sqrt n \bfX$ and false otherwise
    \begin{algorithmic}[1]
    \FOR{$i=0$ to $i=\ceil{\log_2(s^2 k)}$}
        \STATE{$s'\coloneqq 2^i$, $\alpha = O((s'^2\log(s^2 k))^{-1})$, $m\coloneqq \Theta(n^2\alpha^2)$, $\tau\coloneqq \Theta(m)$}
        \FOR{$j=1$ to $j=O(s'^2\log^2(s^2 k))$}
            \STATE{Sample a set $\mathcal S$ of $m$ entries of $\bfA$}
            \STATE{Obtain an estimate $y$ for $\norm{\mathcal S}_4^4$ with $O(\sqrt m(\log s^2 k)(\log n))$ measurements (Theorem \ref{thm:freq-moments})}
            \IF{$y\geq\tau$}
                \RETURN{True}
            \ENDIF
        \ENDFOR{}
    \ENDFOR{}
    \RETURN{False}
    \end{algorithmic}
    \label{alg:detection-k<=sqrt-n}
\end{algorithm}

\begin{theorem}\label{thm:small-s-upper-bound}
    Let $\bfX\in\mathcal S_{s,k}$. Then Algorithm \ref{alg:detection-k<=sqrt-n} makes $O(n(\log^3(sk))(\log n))$ sketching measurements and distinguishes between the cases of $\bfA = \bfG + \sqrt n \bfX$ and $\bfA = \bfG$ with constant probability, where $\bfG\sim\mathcal N(0,1)^{n\times n}$. 
\end{theorem}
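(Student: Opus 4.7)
The approach is to apply Lemma~\ref{lem:flat-sparsity} to the vectorization of $\bfX$ in order to reduce to a ``flat'' case in which some sparsity level $s^*$ has many heavy entries of roughly equal mass, and then show that for the outer-loop iteration whose guess $s'$ matches $s^*$, the algorithm catches a heavy entry in at least one inner trial with constant probability. A caught heavy entry boosts $\|\mathcal S\|_4^4$ past the threshold $\tau$, and the constant-factor $4$-norm estimator of Theorem~\ref{thm:freq-moments} certifies this elevation.

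Concretely, since $\bfX\in\mathcal O_{s,k}$ has operator norm $1$ (hence Frobenius norm at least $1$) and is supported on at most $s^2 k$ entries, Lemma~\ref{lem:flat-sparsity} applied to the vectorization of $\bfX$ restricted to its support produces some $s^*\in[s^2 k]$ for which $\bfX$ has at least $s^*/2$ entries of squared value at least $1/(s^*(1+\log_2(s^2 k)))$. After scaling by $\sqrt n$, each such heavy entry contributes a fourth power of at least $\Omega(n^2/(s^{*2}\log^2(s^2 k)))$. Because the outer loop sweeps $s'=2^i$ for $i$ up to $\lceil\log_2(s^2 k)\rceil$, some iteration will have $s'\in[s^*,2s^*]$, and my completeness analysis fixes attention on this iteration.

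In the correct iteration, the parameters $\alpha,m,\tau$ are tuned so that $m$ is small enough that a single captured heavy entry dominates the null baseline $\E\|\mathcal S\|_4^4\approx 3m$ by a constant factor (and hence pushes the estimate past $\tau$), while the trial count $T=\Theta(s'^2\log^2(s^2 k))$ is large enough that the expected number of heavy-entry catches across all trials is $\Omega(1)$; a standard second-moment argument then yields constant catch probability, and Theorem~\ref{thm:freq-moments} certifies the elevation. For soundness, under $\bfA=\bfG$ each $\|\mathcal S\|_4^4$ concentrates near $3m$, so the constant-factor estimate remains below $\tau$ with probability $1-1/\mathrm{poly}(s^2 k)$; a union bound over the $O(\log(s^2 k))$ outer iterations and $O(s'^2\log^2(s^2 k))$ inner trials rules out spurious threshold crossings while preserving constant success probability.

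For the measurement count, each $4$-norm estimate uses $O(\sqrt m\,\log(s^2 k)\log n)$ linear measurements by Theorem~\ref{thm:freq-moments}, and the choice of $\alpha,m,T$ makes $T\sqrt m$ essentially independent of $s'$ up to polylogarithmic factors, so summing over the $O(\log(s^2 k))$ outer iterations yields the claimed $O(n\log^3(sk)\log n)$ bound. The main obstacle is the joint tuning of $m$ and $T$ across sparsity guesses: $m$ must shrink quickly in $s'$ so that a single heavy entry maintains constant-factor dominance over the noise baseline, while $T$ must grow quickly enough to compensate for the shrinking per-trial catch probability, and both must balance so that the total budget stays $\tilde O(n)$ rather than scaling polynomially in $sk$. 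Verifying these simultaneous constraints across every $s'$ in the sweep, and checking that Theorem~\ref{thm:freq-moments}'s constant-factor accuracy still suffices after the log-loss incurred by Lemma~\ref{lem:flat-sparsity}, is the delicate part of the argument.
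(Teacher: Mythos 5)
Your proposal matches the paper's proof in all essentials: both apply Lemma~\ref{lem:flat-sparsity} to find a sparsity level with $\Theta(s')$ entries of squared value $\Omega(1/(s'\log(s^2k)))$, sample $m=O(n^2\alpha^2)$ entries so a single captured heavy entry elevates the $4$-norm above the Gaussian baseline, repeat $\tilde\Theta(s'^2)$ times to catch one, certify via Theorem~\ref{thm:freq-moments}, and union-bound over the $O(\log(s^2k))$ guesses of $s'$. The only cosmetic difference is that you invoke a second-moment argument for the catch probability where the paper uses a direct $(1-p)^{O(\log(1/\delta)/p)}\le\delta$ bound, and both correctly observe that only $\|\bfX\|_F\ge 1$ and $\nnz(\bfX)\le s^2k$ are used.
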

\begin{proof}
We show in general that when $\bfX$ has Frobenius norm $\norm*{\bfX}_F^2 \geq r$ for some $r$ and is supported on $b$ entries, then the sketching complexity is
\[
    O\parens*{\frac{n}{r}(\log^3 b)(\log\log b)(\log n)}.
\]
The result then follows since
\[
    \norm*{\bfX}_F^2 \geq \norm*{\bfX}_2^2 = 1
\]
and $\bfX$ is supported on $s^2 k$ entries. In this case, we assume the parameter regime
\[
    \frac{nr}{b\log b} \geq c
\]
for some sufficiently large constant $c$. 

By applying Lemma \ref{lem:flat-sparsity} to $\bfX$ viewed as a vector in $b$ dimensions, there exists some $s'\in[b]$ such that there are at least $s'/2$ entries of $\bfX$ with squared value at least
\[
    \alpha \coloneqq r(s'(1+\log_2 b))^{-1}.
\]
This translates to an entry of squared value at least $n\alpha$ in $\bfA$ after scaling $\bfX$ by $\sqrt n$. Now consider a random sample of $m = O(n^2\alpha^2)$ entries of $\bfA$. Note that $m\geq 1$ by assumption on our parameter regime. The $4$-norm of this sample can be approximated up to constant factors with probability at least $1-O(1/b)$ with $O(\sqrt m (\log b) (\log n))$ measurements by Theorem \ref{thm:freq-moments}.

If this sample contains only Gaussian entries, then the $4$-norm has an expectation of $O(m^{1/4})$ \cite[Proposition 2.4]{paouris2017random} and concentrates to this expectation up to a $(1+\eps)$ factor with probability at least $1 - \exp(-\Omega(\eps\sqrt m))$ \cite[Theorem 1.1]{paouris2017random}. On the other hand, if this sample contains an entry of squared size $n\alpha$, then this single entry increases the $4$-norm of the sample by a constant factor, so this distinguishes between the two cases. The probability that this random sample contains one of these entries of squared size $n\alpha$ is 
\[
    p = O\parens*{\frac{ms'}{n^2}} = O(\alpha^2 s') = O\parens*{\frac{r^2}{s'\log^2 b}}.
\]
Thus, with $O(\log(1/\delta)/p)$ repetitions for $\delta = 1/\log_2 b$, the probability that we find none of the $s'$ large entries is at most
\[
    (1-p)^{\log(1/\delta)/p} \leq \exp\parens*{-p\frac{\log\frac1\delta}{p}} = \delta
\]
we find one of the $s'$ entries of squared size at least $n\alpha$ with constant probability. 

The total number of sketching measurements assuming that we know $s'$ was
\begin{align*}
    O\parens*{\sqrt m (\log b)(\log n)\cdot \frac{\log\frac1\delta}p} &= O\parens*{n\alpha (\log b)(\log n)\cdot \frac{\log\frac1\delta}{\alpha^2s'}} \\
    &= O\parens*{\frac{n}{r}(\log^2 b)(\log n)\log\frac1\delta}
\end{align*}
By guessing $s'$ up to constant factors by $s' = 1, 2, 4, \dots, b$ in $\log_2 b$ guesses, our final sketching complexity is
\[
    O\parens*{\frac{n}{r}(\log^3 b)(\log\log b)(\log n)}.\qedhere
\]
\end{proof}

Thus, in the regime where $s \leq \sqrt{n / k\log n}$, the detection problem can roughly be viewed as a simpler case of the $4$-norm estimation problem, since both the algorithms and lower bounds are captured by $4$-norm estimation. 

\paragraph{Large $s$ Regime: $s\geq \sqrt{n/k\log n}$.}

When $s\geq \sqrt{n/k \log n}$, the $4$-norm estimation approach cannot work, and our approach shifts to sampling a submatrix and iterating over a net to identify the signal. We illustrate the basic idea of the algorithm on the hard instance of the lower bound in Corollary \ref{cor:gaussian-detection-large-sparsity-lb}, where the signal matrix is $\bfX$ is essentially $\bfu\bfv^\top$ for $\bfu\sim\mathcal N(0,\bfI_s)$ and $\bfu\sim\mathcal N(0,\bfI_{sk})$, normalized appropriately, so that the signal matrix added to the Gaussian noise has operator norm $\sqrt n$. To detect this signal, suppose we sample an $sk\times s$ submatrix. Then, in expectation, we sample a
\[
    \frac{s^2 k}{n} \times \frac{s^2 k}{n}
\]
submatrix $\bfX'$ of $\bfX$. We then iterate over a net $\mathcal N$ for $(s^2 k/n)\times (s^2 k/n)$-sparse rank $1$ matrices, which has log size roughly $\tilde O(s^2 k/n)$, to try to find a net vector that aligns with $\bfX'$. If we find such a net vector, then the inner product between the net vector and $\bfX'$ will be 
\[
    \norm*{\bfX'}_F = \sqrt{\frac{s^2 k}{n} \cdot \frac{s^2 k}{n} \cdot \frac{n}{s^2 k}} = \sqrt{\frac{s^2 k}{n}}.
\]
On the other hand, note that the inner product between the net vector and the Gaussian noise will just be a standard Gaussian, since the net vector has Frobenius norm $1$. Now for any $\delta\in(0,1)$, the Gaussian will be at most $\sqrt{\log\frac1\delta}$ with probability at least $1-\delta$, so with $\delta = 1/\Theta(\abs*{\mathcal N})$, we can union bound over the net to see that the Gaussian noise is at most 
\[
    \sqrt{\log\frac1\delta} = \tilde O\parens*{\sqrt{\frac{s^2 k}{n}}}. 
\]
We thus see that by adjusting constants and log factors appropriately, we can distinguish between the Gaussian noise and the signal and thus solve the detection problem. 

In general, this approach will work to give a sketching upper bound of $\tilde O(s^2 k)$ dimensions, if we can find an $a\times b$ submatrix with operator norm at least $\tilde \Omega(1)$ for $ab\leq \tilde O(s^2 k)$. However, we are unable to show that such a submatrix exists for every $\bfX\in\mathcal O_{s,k}$. Instead, we obtain a looser bound of
\[
    \tilde O(s^2 k^{4/3})
\]
measurements by showing two algorithms that roughly follow the above approach, one for when the Frobenius norm of $\bfX$ is large and one for when the Frobenius norm of $\bfX$ is small:

\begin{Lemma}\label{lem:gaussian-detection-large-frob-norm}
    Let $\bfX\in\mathcal O_{s,k}$ be an $n\times n$ matrix with Frobenius norm at least
    \[
        \norm*{\bfX}_F^2 \geq r.
    \]
    Then, we can distinguish $\bfG + \sqrt n \bfX$ and $\bfG$ for $\bfG\sim\mathcal N(0,1)^{n\times n}$ with
    \[
        O\parens*{\frac{s^2k^2}{r^2} (\log^6 s)(\log^2 n)}
    \]
    measurements. 
\end{Lemma}
\begin{proof}
    Write
    \[
        \bfX = \sum_{i=1}^k \tau_i \bfx_i \bfy_i^\top. 
    \]
    Note then that
    \[
        \norm*{\bfX}_F^2 = \sum_{i=1}^k \norm*{\tau_i \bfx_i \bfy_i^\top}_F^2 = \sum_{i=1}^k \tau_i^2.
    \]
    Then there is some $i\in[k]$ such that $\tau_i^2 \geq r/k$. 

    We now apply Lemma \ref{lem:flat-sparsity} to each of the vectors $\bfx_i$ and $\bfy_i$, so that $\bfx_i$ has at least $s_1/2$ entries, say $S_1\subseteq[n]$ of $\bfx_i$ have squared size at least $(s_1(1+\log_2 s))^{-1}$, and similarly, $\bfy_i$ has at least $s_2/2$ entries, say $S_2\subseteq[n]$, of $\bfy_i$ have squared size at least $(s_2(1+\log_2 s))^{-1}$. We refer to these entries as the \emph{flat entries}. 

    \paragraph{Handling Sparse Components.}

    Suppose that
    \[
        s_1 s_2 \leq O\parens*{\frac{nr}{k(1+\log_2 s)^2\log n}} = \tilde O\parens*{\frac{nr}{k}}.
    \]
    Then, there is a submatrix supported on $s_1 s_2$ entries with squared Frobenius norm at least
    \[
        \frac{n\tau_i^2}{s_1 s_2 (1+\log_2 s)^2} \geq \frac{nr}{k s_1 s_2 (1+\log_2 s)^2} \geq 1,
    \]
    which has a sparsity of $b = s_1 s_2$ and Frobenius norm $r \geq 1$ such that
    \[
        \frac{nr}{b\log b} \geq c
    \]
    so we can detect this submatrix by using the $4$-norm estimation procedure of Theorem \ref{thm:small-s-upper-bound}. We thus assume that
    \begin{equation}\label{eq:dense-component}
        s_1s_2 > \Omega\parens*{\frac{nr}{k(1+\log_2 s)^2\log n}} = \tilde\Omega\parens*{\frac{nr}{k}}.
    \end{equation}

    \paragraph{Handling Dense Components.}

    When $s_1 s_2 = \tilde \Omega(nr/k)$ as in Equation (\ref{eq:dense-component}), then our algorithm is to sample a submatrix. Suppose for the moment that we know $s_1$ and $s_2$, and we will remove this assumption later. We then sample a random $n_1\times n_2$ submatrix $\bfA'$ of $\bfA$, for $n_1$ and $n_2$ to be determined later. Then by Chernoff bounds, we sample $t_1 \coloneqq \Theta(n_1 s_1/n)$ rows and $t_2 \coloneqq \Theta(n_2 s_2/n)$ columns of the component $\tau_i \bfx_i \bfy_i$ in expectation, and with high probability $1 - O(1/\log^2 s)$ as long as
    \[
        \Theta\parens*{\frac{n_1 s_1}{n}} = \Theta\parens*{\frac{n_2 s_2}{n}} \geq \log \log s.
    \]
    Denote this flat sampled submatrix of $\bfX$, supported on the flat entries $S_1\times S_2$, as $\bfX'$. We now consider an $\eps$-net $\mathcal N$ (in the Frobenius norm) over rank $1$ $n_1\times n_2$ matrices with $t_1\times t_2$-sparse components and operator norm $1$. This has size
    \[
        \binom{n_1}{t_1}\parens*{\frac6\eps}^{t_1} \cdot \binom{n_2}{t_2}\parens*{\frac6\eps}^{t_2} \leq \parens*{\frac{6en_1}{\eps t_1}}^{t_1}\parens*{\frac{6en_2}{\eps t_2}}^{t_2}
    \]
    by a straightforward modification of the proof of Corollary \ref{cor:sxs-sparse-rank-k-eps-net}. If $\bfA\sim\mathcal N(0,1)^{n\times n}$, then $\bfA'$ is just a $n_1\times n_2$ Gaussian matrix so for all $\bfB\in \mathcal N$,
    \[
        \angle*{\bfA', \bfB} \sim \mathcal N(0, \norm{\bfB}_F^2) = \mathcal N(0, 1)
    \]
    On the other hand, if $\bfA$ has the rank $1$ signal, then for some $\bfB\in \mathcal N$, $\bfB$ is $\eps$-close to $\bfX'$ in Frobenius norm and thus 
    \[
        \angle*{\bfA', \bfB} \sim \mathcal N(\angle*{\sqrt n\bfX', \bfB}, \norm{\bfB}_F^2) = \mathcal N(\angle*{\sqrt n\bfX', \bfB}, 1)
    \]
    where
    \[
        \angle*{\sqrt n \bfX', \bfB} = \Theta(\norm{\sqrt n\bfX'}_F) = \Theta\parens*{\sqrt{ t_1 t_2 \frac{n\tau_i^2}{s_1 s_2(1+\log_2 s)^2}}} \geq \Omega\parens*{\sqrt{\frac{n_1 n_2 r}{kn\log^2 s}}}.
    \]
    Then the two Gaussians means are separated by $\Theta(\sqrt n\norm{\bfX'}_F)$, while by a union bound over the net $\mathcal N$, the Gaussian noise can be as large as
    \begin{align*}
        \Theta(\sqrt{\log\abs{\mathcal N}}) &= \Theta\parens*{\sqrt{t_1\log\frac{6en_1}{\eps t_1} + t_2\log \frac{6en_2}{\eps t_2}}} \\
        &\leq O\parens*{\frac{\sqrt{n_1s_1 + n_2s_2}}{\sqrt n}\sqrt{\log n}}.
    \end{align*}
    One can check that this is at most $\Theta(\sqrt n\norm{\bfX'}_F)$ if
    \[
        n_1 = O\parens*{\frac{k}{r}s_2(\log^2 s)(\log n)}, \qquad n_2 = O\parens*{\frac{k}{r}s_1(\log^2 s)(\log n)}.
    \]
    Thus, the total sketching complexity so far is
    \[
        O\parens*{\frac{k^2}{r^2}s_1 s_2 (\log^4 s)(\log^2 n)}
    \]
    Now to remove our assumption that we know $s_1$ and $s_2$, we must iterate over $O(\log^2 s)$ guesses for $s_1 = 1, 2, 4, \dots, s$ and $s_2 = 1, 2, 4, \dots, s$, which brings the total sketching complexity up to
    \[
        O\parens*{\frac{s^2k^2}{r^2} (\log^6 s)(\log^2 n)}.\qedhere
    \]
\end{proof}

\begin{Lemma}\label{lem:gaussian-detection-small-frob-norm}
    Let $\bfX\in\mathcal O_{s,k}$ be an $n\times n$ matrix with Frobenius norm at most
    \[
        \norm*{\bfX}_F^2 \leq r.
    \]
    Then, we can distinguish $\bfG + \sqrt n \bfX$ and $\bfG$ for $\bfG\sim\mathcal N(0,1)^{n\times n}$ with
    \[
        O(s^2 kr\log^6(sk)\log^2 n).
    \]
    measurements. 
\end{Lemma}
\begin{proof}
    The proof proceeds in two stages. We first prove the existence of a small rank $1$ approximation to $\bfX$, and then iterate over a net to find it. 

    \paragraph{Existence of a Sparse Rank 1 Approximation.}

    Let $\bfu$ and $\bfv$ be the top left and right singular vectors of $\bfX$. Write the vectors $\bfu$ and $\bfv$ in level sets by
    \[
        \bfu = \sum_{i=1}^{O(\log(sk))} \bfu_i, \qquad \bfv = \sum_{i=1}^{O(\log(sk))} \bfv_i.
    \]
    Then,
    \[
        1 = \abs*{\bfu^\top\bfX\bfv} \leq \sum_{i=1}^{O(\log n)}\sum_{j=1}^{O(\log n)}\abs*{\bfu_i^\top \bfX\bfv_j}
    \]
    and thus by averaging, there exist $i,j$ such that
    \[
        \frac1{\Theta(\log^2(sk))} \leq \abs*{\bfu_i^\top \bfX\bfv_j}
    \]
    where $\bfu_i$ is $s_1$-sparse and $\bfv_j$ is $s_2$-sparse. Let $\bfX\mid_{\bfu,\bfv}$ denote the restriction of $\bfX$ to the supports of $\bfu$ and $\bfv$. Then,
    \[
        \abs*{\bfu_i^\top \bfX\bfv_j} \leq \frac{\norm*{\bfX\mid_{\bfu,\bfv}}_1}{\sqrt{ab}} \leq \frac{\sqrt{\nnz(\bfX\mid_{\bfu,\bfv})}\norm*{\bfX\mid_{\bfu,\bfv}}_2}{\sqrt{ab}} \leq \frac{\sqrt{s^2 k}\sqrt r}{\sqrt{ab}}.
    \]
    Combining the previous bounds,
    \[
        s_1 s_2 \leq \Theta(s^2 k r\log^2(sk)).
    \]
    
    \paragraph{Net Argument.}

    By the previous argument, there exist vectors $\bfu$ and $\bfv$ of norm at most $1$ that are $s_1$-sparse and $s_2$-sparse for $s_1,s_2\in[sk]$, respectively, such that
    \[
        \abs*{\angle*{\bfu\bfv^\top, \bfX}} = \abs*{\bfu^\top\bfX\bfv} \geq \frac1{\Theta(\log^2(sk))}
    \]
    and
    \[
        s_1 s_2 \leq \Theta(s^2 k r\log^2(sk)).
    \]
    We then proceed as in the subsampling and net iteration step of Lemma \ref{lem:gaussian-detection-large-frob-norm}. Since the calculations are essentially the same, we only quickly outline the proof. We first assume we know $s_1$ and $s_2$ and we sample a $n_1\times n_2$ submatrix. This samples a $t_1 \times t_2$ submatrix of $\bfX\mid_{\bfu,\bfv}$ where
    \[
        t_1 = \Theta\parens*{\frac{n_1 s_1}{n}}, \qquad t_2 = \Theta\parens*{\frac{n_2 s_2}{n}}.
    \]
    We then iterate over a net $\mathcal N$ over $t_1\times t_2$ rank $1$ matrices of log size at most
    \[
        \log \abs*{\mathcal N} \leq (t_1 + t_2)\log n = \frac{n_1 s_2 + n_2 s_2}{n}\log n.
    \]
    Let $\bfB\in\mathcal N$. If there is no signal, then 
    \[
        \angle*{\bfB, \bfA} \sim \mathcal N(0,\norm*{\bfB}_F^2) = \mathcal N(0,1)
    \]
    while if there is a signal, then for some $\bfB\in\mathcal N$, 
    \[
        \angle*{\bfB,\bfA} \sim \mathcal N(\angle*{\sqrt n\bfX\mid_{\bfu,\bfv},\bfB},\norm*{\bfB}_F^2) = \mathcal N(\angle*{\sqrt n\bfX\mid_{\bfu,\bfv},\bfB},1)
    \]
    where
    \[
        \angle*{\sqrt n\bfX\mid_{\bfu,\bfv},\bfB} = \Theta\parens*{\sqrt{t_1 t_2\frac{n}{s_1 s_2\log^2(sk)}}} = \Theta\parens*{\sqrt{\frac{n_1 n_2}{n\log^2(sk)}}}.
    \]
    This is larger than the noise from the Gaussian over $\mathcal N$ as long as
    \[
        \sqrt{\log \abs*{\mathcal N}} = \sqrt{\frac{n_1 s_2 + n_2 s_2}{n}\log n} = O\parens*{\sqrt{\frac{n_1 n_2}{n\log^2(sk)}}}
    \]
    which happens for
    \[
        n_1 = O(s_2\log^2(sk)\log n), \qquad n_2 = O(s_1\log^2(sk)\log n).
    \]
    Thus, the required number of samples in this case is
    \[
        O(s_1 s_2\log^4(sk)\log^2 n) = O(s^2 kr\log^4(sk)\log^2 n)
    \]
    Iterating over $\log^2(sk)$ choices of $s_1$ and $s_2$, the total number of measurements is
    \[
        O(s^2 kr\log^6(sk)\log^2 n). \qedhere
    \]
\end{proof}

By combining Lemmas \ref{lem:gaussian-detection-large-frob-norm} and \ref{lem:gaussian-detection-small-frob-norm} and balancing parameters, we obtain the following:

\begin{theorem}\label{thm:gaussian-detection-ub-high-sparsity}
    Let $\bfX\in\mathcal O_{s,k}$ be an $n\times n$ matrix. Then, we can distinguish $\bfG + \sqrt n \bfX$ and $\bfG$ for $\bfG\sim\mathcal N(0,1)^{n\times n}$ with
    \[
        O\parens*{s^2 k^{4/3} (\log^6 sk)(\log^2 n)}
    \]
    measurements. 
\end{theorem}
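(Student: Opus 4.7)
The plan is to combine the two previous lemmas by running both detection procedures in parallel and taking an ``OR'' of their outputs, after choosing a threshold value of $r$ that balances the two measurement bounds. The key observation is that Lemma \ref{lem:gaussian-detection-large-frob-norm} gives a measurement bound of $O((s^2k^2/r^2)\cdot\poly\log)$ that is decreasing in $r$, while Lemma \ref{lem:gaussian-detection-small-frob-norm} gives $O(s^2kr\cdot\poly\log)$ which is increasing in $r$, so one expects an optimal crossover.

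First I would set the threshold $r_*$ by equating $s^2 k^2/r^2 = s^2 k r$, which solves to $r_* = k^{1/3}$, and plugging back in yields the common value $s^2 k^{4/3}$ in both bounds. The choice of $r_*$ partitions the parameter space for the unknown $\bfX\in\mathcal O_{s,k}$ into two regimes: either $\|\bfX\|_F^2 \geq k^{1/3}$, in which case Lemma \ref{lem:gaussian-detection-large-frob-norm} (applied with $r = k^{1/3}$) succeeds, or $\|\bfX\|_F^2 \leq k^{1/3}$, in which case Lemma \ref{lem:gaussian-detection-small-frob-norm} (again with $r = k^{1/3}$) succeeds. Since neither lemma requires us to know $\|\bfX\|_F^2$ exactly—only that the relevant inequality holds—we can simply instantiate both algorithms with the single threshold $r = k^{1/3}$ and sum their measurement costs, giving the claimed bound
\[
    O\!\left(\frac{s^2k^2}{r_*^2}\log^6 s\cdot \log^2 n\right) + O(s^2 k r_* \log^6(sk)\log^2 n) = O(s^2 k^{4/3}\log^6(sk)\log^2 n).
\]

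Combining the two detectors is simple: output ``signal present'' if either subroutine does. In the signal case ($\bfA = \sqrt n\bfX + \bfG$), at least one of the two detectors is guaranteed to fire with constant probability by the applicable lemma. In the null case ($\bfA = \bfG$), both detectors correctly output ``no signal'' with constant probability by the analyses of Lemmas \ref{lem:gaussian-detection-large-frob-norm} and \ref{lem:gaussian-detection-small-frob-norm} (whose false-alarm guarantees come from union bounds over their respective nets and do not depend on the existence of any signal). Boosting each subroutine's success probability to $1 - 1/100$ by constant-factor repetition and taking a union bound gives the overall constant-probability guarantee. The only nontrivial step is really the algebraic balancing; the rest of the argument is just bookkeeping given the two lemmas as black boxes.
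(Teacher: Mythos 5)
Your proposal is correct and follows essentially the same approach as the paper: balance the bounds from Lemmas \ref{lem:gaussian-detection-large-frob-norm} and \ref{lem:gaussian-detection-small-frob-norm} by choosing the threshold $r = k^{1/3}$, which makes both measurement counts equal to $O(s^2 k^{4/3} \cdot \poly\log)$. The one thing you make explicit that the paper leaves implicit is that the algorithm does not know $\norm{\bfX}_F^2$ and therefore must run both detectors in parallel and output ``signal present'' if either fires, with constant-factor amplification to control the combined false-alarm rate; this is a useful clarification, but it is bookkeeping on top of the same argument.
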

\begin{proof}
    If $\norm*{\bfX}_F^2 \geq k^{1/3}$, then we use Lemma \ref{lem:gaussian-detection-large-frob-norm} to obtain a bound of
    \[
        O\parens*{\frac{s^2 k^2}{k^{2/3}} (\log^6 s)(\log^2 n)} \leq O\parens*{s^2 k^{4/3} (\log^6(sk))(\log^2 n)}.
    \]
    Otherwise, when $\norm*{\bfX}_F^2 < k^{1/3}$, we use Lemma \ref{lem:gaussian-detection-small-frob-norm} to obtain a bound of 
    \[
        O\parens*{s^2 k^{4/3} (\log^6(sk))(\log^2 n)}. \qedhere
    \]
\end{proof}

\subsection{Estimation}

We now consider the estimation problem, in which, given a matrix $\bfA = \bfG + \sqrt n\bfX$, we must output an $s\times s$-sparse matrix rank $k$ $\bfX'\in\mathcal S_{s,k}$ such that 
\[
    \bfX = \norm{\bfX - \bfX'}_2\leq \eps.
\]
For this problem, for constant $\eps$, we show that iteration over a net gives an algorithm that makes $O(nsk\log(nk))$ measurements in Theorem \ref{thm:gaussian-noise-est-ub}, and this is approximately optimal by a Gaussian channel information capacity argument in Theorem \ref{thm:gaussian-noise-est-lb} by showing a lower bound of $\Omega(nsk\log(n/sk))$ measurements.

\subsubsection{Gaussian Noise Estimation Upper Bound}

We will need an analogue of the approximate matrix product lemma (Lemma \ref{lem:cs-apm}) for matrices with i.i.d.\ Gaussian entries from \cite[Lemma 6]{DBLP:conf/focs/Sarlos06}.

\begin{Lemma}[Gaussian approximate matrix product \cite{DBLP:conf/focs/Sarlos06}]\label{lem:apm-gaussian}
    Let $\bfA\in\mathbb R^{n\times d}$, $\bfB\in\mathbb R^{m\times d}$, and $\eps>0$. Consider an $r\times d$ matrix $\bfS$ of all i.i.d.\ Gaussian entries for $r = \Omega(\eps^{-2}(\log(n+m))(\log\frac1\delta))$. Then,
    \[
        \Pr\braces*{\norm*{\bfA\bfS^\top\bfS\bfB^\top - \bfA\bfB^\top}_F^2 \leq \eps^2 \norm*{\bfA}_F^2 \norm*{\bfB}_F^2} \geq 1-\delta.
    \]
\end{Lemma}

Given this lemma, our algorithm is simply to approximately maximize the inner product between $\bfA$ and net vectors given by Corollary \ref{cor:sxs-sparse-rank-k-eps-net}. 

\begin{theorem}\label{thm:gaussian-noise-est-ub}
Let $\eps, s, n$ be such that 
\[
    \sqrt{2s\log\frac{3en}{\eps s}} \leq \eps^2\sqrt n. 
\]
Let $\bfH$ be an $m\times n^2$ Gaussian matrix for 
\[
    m = O\parens*{\frac{ns}{\eps^4}\log\frac{n}{\eps s}}.
\]
Consider $\bfA = \bfG + \sqrt n\bfX$ for a $s\times s$ sparse rank $1$ matrix $\bfX$ with operator norm $1$ and $\bfG\sim\mathcal N(0,1)^{n\times n}$. Then, there is an algorithm which, given $m$ Gaussian measurements of $\bfA$, outputs an $s\times s$ sparse rank $1$ matrix $\bfX'$ such that
\[
    \norm*{\bfX - \bfX'}_2 \leq \eps.
\]
\end{theorem}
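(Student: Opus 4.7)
My plan is to iterate over an $\eps/8$-net of the set of $s\times s$-sparse rank $1$ matrices of unit Frobenius norm, approximately compute $\langle \bfA, \bfB\rangle$ from the sketch $\bfH\vc(\bfA)$ for every net point $\bfB$, and output the $\bfB$ maximizing this estimated inner product. Applying Corollary \ref{cor:sxs-sparse-rank-k-eps-net} with $k=1$ (noting that $\bfX$ has $\tau = \|\bfX\|_2 = 1 \le \poly(n)$), we obtain a net $\mathcal N$ of $\mathcal S_{s,1}^{\mathrm{bounded}}\cap \mathbb S^{n\times n - 1}$ with $\log|\mathcal N| = O(s\log(n/\eps s))$, and every $\bfB \in\mathcal N$ has operator norm equal to its Frobenius norm, which is $1$.

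To approximate inner products from the sketch, I would apply the Gaussian approximate matrix product lemma (Lemma \ref{lem:apm-gaussian}) to the pair $(\vc(\bfA)^\top, \vc(\bfB)^\top)$, setting its failure probability to $\delta = 1/(100|\mathcal N|)$ and its accuracy to $\eta = \eps^2/(C\sqrt n)$ for a large constant $C$. Union bounding over $\bfB \in \mathcal N$, with probability $99/100$ we recover an estimate $\widehat{\langle\bfA,\bfB\rangle}$ satisfying
\[
    \bigl|\widehat{\langle\bfA,\bfB\rangle} - \langle\bfA,\bfB\rangle\bigr| \le \eta\,\|\bfA\|_F\|\bfB\|_F \le \frac{\eps^2\sqrt n}{C}
\]
for all $\bfB \in \mathcal N$ simultaneously, where I used the standard Gaussian concentration bound $\|\bfG\|_F \le 2n$ whp to get $\|\bfA\|_F \le 3n$. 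The total number of measurements required is $m = O(\eta^{-2}\log|\mathcal N|) = O(ns/\eps^4 \cdot \log(n/\eps s))$, matching the claim.

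Correctness follows from a two-sided gap argument. For rank $1$ matrices of Frobenius norm $1$, we have $\|\bfB - \bfX\|_F^2 = 2(1 - \langle\bfB,\bfX\rangle)$. Let $\bfB_X\in\mathcal N$ be an $\eps/8$-approximation of $\bfX$, so $\langle\bfX,\bfB_X\rangle \ge 1 - \eps^2/128$, while for any $\bfB\in\mathcal N$ with $\|\bfB - \bfX\|_F > \eps/2$ we have $\langle\bfX,\bfB\rangle < 1 - \eps^2/8$. Decomposing $\langle\bfA,\bfB\rangle = \sqrt n\langle\bfX,\bfB\rangle + \langle\bfG,\bfB\rangle$ where $\langle\bfG,\bfB\rangle\sim\mathcal N(0,1)$, a union bound over $\mathcal N$ gives $|\langle\bfG,\bfB\rangle| \le \sqrt{2s\log(3en/\eps s)} \le \eps^2\sqrt n$ for all $\bfB\in\mathcal N$ by the hypothesis. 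Combining with the AMP error (taking $C$ large enough), the estimated inner product for the close net point $\bfB_X$ exceeds that of any far net point by $\Omega(\eps^2\sqrt n)$, so the argmax $\bfX'$ must satisfy $\|\bfX - \bfX'\|_F \le \eps/2 + \eps/8 \le \eps$, and since $\bfX - \bfX'$ has rank at most $2$, this also bounds $\|\bfX - \bfX'\|_2 \le \eps$.

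The main technical point is the triple balancing of parameters: the net discretization scale $\eps/8$ controls the signal gap $\Theta(\eps^2\sqrt n)$, the net log-cardinality $O(s\log(n/\eps s))$ controls the Gaussian noise on random directions via the hypothesis, and the AMP accuracy must be driven down to $\eps^2/\sqrt n$ (rather than merely $\eps^2$) to overcome the $\|\bfA\|_F = \Theta(n)$ factor coming from the noise dominating $\sqrt n\bfX$ in Frobenius norm. The $n/\eps^4$ factor in $m$ is precisely the price of this last step, multiplied by the $\log|\mathcal N| = O(s\log(n/\eps s))$ net-union-bound factor.
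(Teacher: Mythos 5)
Your proposal is correct and follows essentially the same route as the paper: iterate over the $\eps$-net from Corollary \ref{cor:sxs-sparse-rank-k-eps-net}, estimate $\langle\bfA,\bfB\rangle$ via the Gaussian approximate matrix product lemma (Lemma \ref{lem:apm-gaussian}) with accuracy $\Theta(\eps^2/\sqrt n)$ and failure probability $1/\poly(|\mathcal N|)$, and output the argmax; the identity $\norm{\bfX-\bfX'}_F^2 = 2 - 2\langle\bfX,\bfX'\rangle$ and the hypothesis $\sqrt{2s\log(3en/\eps s)}\le \eps^2\sqrt n$ (controlling the max of the $\langle\bfG,\bfB\rangle$ noise over the net) play the same roles in both proofs. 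Your two-sided gap bookkeeping is a cosmetic variation of the paper's more direct statement, and the parenthetical about rank-$2$ is unnecessary since $\norm{\cdot}_2\le\norm{\cdot}_F$ unconditionally.
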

\begin{proof}
    We iterate over an $\eps$-net $\mathcal N$ for $\mathcal S_{s,k}$ of size at most $(\poly(n)k/\eps)^{sk}$, which exists by Corollary \ref{cor:sxs-sparse-rank-k-eps-net}. Our approach is to select the best candidate in the net by finding the $\bfX'\in N$ that approximately maximizes the inner product with $\bfX$. Indeed, if we find a $\bfX'$ such that
    \[
        \angle*{\bfX,\bfX'} \geq 1 - \eps^2/2
    \]
    then
    \begin{align*}
        \norm*{\bfX - \bfX'}_2^2 &\leq \norm*{\bfX - \bfX'}_F^2 \\
        &= \norm*{\bfX}_F^2 + \norm*{\bfX'}_F^2 - 2\angle*{\bfX,\bfX'} \\
        &= 2 - 2\angle*{\bfX,\bfX'} \leq \eps^2.
    \end{align*}
    To do this, we make use of the \emph{approximate matrix product} lemma, Lemma \ref{lem:apm-gaussian}. Then with the failure rate set to $\delta = 1/10\abs{\mathcal N}$ and the accuracy parameter $\eps$ in the lemma set to $\eps^2/\sqrt n$, we use
    \[
        m = \Theta\parens*{\frac{nsk}{\eps^4}\log\frac{nk}{\eps}}
    \]
    measurements to guarantee that for an $m\times n^2$ i.i.d.\ Gaussian matrix $\bfS$ and a net vector $\bfX'\in\mathcal N$,
    \[
        \Pr\braces*{\abs*{\angle*{\bfA,\bfX'} - \angle*{\bfS\vc(\bfA),\bfS\vc(\bfX')}} \leq \frac{\eps^2}{\sqrt n}\norm*{\bfA}_F\norm*{\bfX'}_F} \geq 1 - \frac1{10\abs*{\mathcal N}}. 
    \]
    Note that
    \[
        \frac{\eps^2}{\sqrt n}\norm*{\bfA}_F\norm*{\bfX'}_F = \frac{\eps^2}{\sqrt n}\cdot n \cdot 1 = \eps^2 \sqrt n. 
    \]
    Furthermore, 
    \[
        \angle*{\bfA,\bfX'} = \angle*{\bfG+\sqrt n\bfX, \bfX'} \sim \mathcal N(\angle*{\sqrt n\bfX,\bfX'}, \norm*{\bfX'}_F^2) = \mathcal N(\sqrt{n}\angle*{\bfX,\bfX'},1),
    \]
    and a standard Gaussian is at most $O(\sqrt{\log(1/\delta)})$ with probability at least $1-\delta$. Then by setting $\delta = 1/\abs{N}$, we have by a union bound over the net that, with constant probability,
    \begin{align*}
        \angle*{\bfS\vc(\bfA),\bfS\vc(\bfX')} &= \angle*{\bfA,\bfX'} \pm \eps^2\sqrt n \\
        &= \sqrt n\angle*{\bfX,\bfX'} \pm \parens*{O(sk)\log\frac{nk}{\eps} + \eps^2 \sqrt n} \\
        &= \sqrt n\angle*{\bfX,\bfX'} \pm 2\eps^2\sqrt n \\
        &= \sqrt n\parens*{\angle*{\bfX,\bfX'} \pm 2\eps^2}
    \end{align*}
    for all $\bfX'\in \mathcal N$. Thus by rescaling $\eps$, we may approximately maximize the inner product between $\bfX$ and $\bfX'$, and thus retrieve a net vector $\bfX'$ such that
    \[
        \norm*{\bfX - \bfX'}_2\leq \eps.\qedhere
    \]
\end{proof}

\subsubsection{Gaussian Noise Estimation Lower Bound}

\begin{theorem}\label{thm:gaussian-noise-est-lb}
    Suppose that $\bfS$ is a random $m\times (nd)$ matrix such that, given $\bfS(\vc(\bfA))$ for $\bfA = \sqrt{n}\bfX + \frac1{100}\bfG$ for $s\times s$-sparse  rank $k$ matrix $\bfX$ with $\norm*{\bfX}_2 = 1$ and $\bfG\sim\mathcal N(0,1)^{n\times n}$, there is an algorithm which outputs an $s\times s$-sparse rank $k$ matrix $\bfB$ such that
    \[
        \norm*{\bfA - \bfB}_2 \leq \frac1{10}\norm*{\bfG}_2.
    \]
    Then, $m = \Omega(nsk\log(n/sk))$. 
\end{theorem}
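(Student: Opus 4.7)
The plan is to apply Fano's inequality against a code of well-separated $s\times s$-sparse rank $k$ matrices, adapting the sparse recovery lower bound of \cite{DBLP:conf/focs/PriceW11}; the extra factor of $n$ beyond the usual sparse recovery bound will arise because the additive Gaussian noise $\frac{1}{100}\bfG$ has Frobenius norm $\Theta(n)$, much larger than the signal $\sqrt n\bfX$ of Frobenius norm $O(\sqrt{nk})$, so each linear measurement carries only $O(1/n)$ nats of information about the planted $\bfX$.

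First I would construct a code $\bfX_1,\dots,\bfX_N \in \mathcal O_{s,k}$ with $\norm*{\bfX_i}_2 = 1$, $\norm*{\bfX_i - \bfX_j}_2 \geq c$ for $i\neq j$ and some absolute constant $c>0$, and $\log N = \Omega(sk\log(n/sk))$. A natural candidate takes $\bfX_i = \sum_{j=1}^k \bfu_j^{(i)}(\bfv_j^{(i)})^\top/s$ with $\bfu_j^{(i)}, \bfv_j^{(i)} \in \{-1,0,1\}^n$ random sign vectors supported on $k$ disjoint pairs of $s$-subsets of $[n]$, with a Gilbert--Varshamov pruning to enforce the separation. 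The hypothesis $\norm*{\bfA - \bfB}_2 \leq \frac{1}{10}\norm*{\bfG}_2$ combined with $\norm*{\bfA - \sqrt n\bfX}_2 = \frac{1}{100}\norm*{\bfG}_2$ and the standard concentration $\norm*{\bfG}_2 \leq 3\sqrt n$ forces $\norm*{\bfX - \bfB/\sqrt n}_2 = O(1)$, so rounding $\bfB/\sqrt n$ to the nearest codeword recovers the planted $\bfX_i$ provided $c$ is a large enough constant.

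Next, WLOG normalizing each row of $\bfS$ to unit Frobenius norm, the measurement $y_\ell = \sqrt n\angle*{\bfS_\ell,\vc(\bfX)} + \frac{1}{100}\angle*{\bfS_\ell,\vc(\bfG)}$ is an observation in a Gaussian channel with noise variance $10^{-4}$ and signal variance $n\cdot\bfS_\ell^\top\Sigma\bfS_\ell$, where $\Sigma = \E_\bfX[\vc(\bfX)\vc(\bfX)^\top]$. By the sign and support symmetries of the construction, $\Sigma$ is diagonal with entries $O(1/n^2)$, whence $I(y_\ell;\bfX) \leq \frac{1}{2}\log(1+O(1/n)) = O(1/n)$; Fano's inequality then gives $\log N \leq m\cdot O(1/n) + 1$, i.e., $m = \Omega(nsk\log(n/sk))$. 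The main technical obstacle is arranging $\lambda_{\max}(\Sigma) = O(1/n^2)$ rather than $O(k/n^2)$ while preserving both the full $\Omega(sk\log(n/sk))$ bits of code entropy and the $\Omega(1)$ spectral separation: the naive construction with all $k$ components at unit magnitude has $\norm*{\bfX}_F^2 = k$ and loses a factor of $k$ in the final bound, and overcoming this likely requires a more delicate packing that scales the component magnitudes so $\norm*{\bfX}_F^2 = O(1)$ while routing the bulk of the code entropy through support and sign choices, or alternatively a direct-sum / chain-rule argument across the $k$ rank-one components that restores the missing factor of $k$.
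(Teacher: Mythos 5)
Your high-level plan matches the paper's: both follow the Gaussian-channel information-capacity argument of Price and Woodruff (Fano against a well-separated code, per-measurement mutual information $O(1/n)$). But the gap you flag at the end is real, and neither of the fixes you sketch closes it. Your codewords are sums of $k$ rank-$1$ components on \emph{disjoint} $s\times s$ supports, and disjointness forces $\|\bfX\|_2 = \max_j|\tau_j|$. The theorem requires $\|\bfX\|_2 = 1$, so $\max_j|\tau_j| = 1$; to make the diagonal of $\Sigma$ be $O(1/n^2)$ you need $\|\bfX\|_F^2 = O(1)$, which forces the remaining $|\tau_j|$ down to $O(1/\sqrt k)$; but then two codewords differing only in one of those small blocks differ in operator norm by $O(1/\sqrt k)$, while the recovery guarantee $\|\bfA-\bfB\|_2 \le \frac{1}{10}\|\bfG\|_2$ only pins $\bfX$ down to operator-norm error about $0.24$, so for $k$ larger than a constant the rounding step can no longer identify the support. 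No scaling of disjoint blocks delivers $\|\bfX\|_2 = 1$, $\|\bfX\|_F = O(1)$, and pairwise operator-norm separation $\Omega(1)$ simultaneously.

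The paper sidesteps this exactly by dropping disjointness: the hard instance is rank \emph{one}, $\bfX = \frac{1}{\sqrt{sk}}\,\bfe_i\bfw^\top$ with $i$ a uniformly random row and $\bfw$ a random sign pattern on an $sk$-subset $S$ drawn from a code of $sk$-subsets with pairwise symmetric difference at least $sk$ and uniform marginals. This is still $s\times s$-sparse rank $k$ in the sense of Definition~\ref{def:sxs-sparse-rank-k-matrix} (split $S$ into $k$ groups of $s$ columns, all sharing the $1$-sparse left factor $\bfe_i$), but being rank one it has $\|\bfX\|_F = \|\bfX\|_2 = 1$ with no factor-of-$k$ slack, and two codewords with different supports already have row difference of $\ell_2$ norm at least $1$, hence operator-norm separation at least $1$. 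The full $\Omega(sk\log(n/sk))$ bits of entropy live in the choice of $S$, and the random row index and signs are precisely the auxiliary randomness that flatten $\Sigma$ to $(1/n^2)\bfI$ as you wanted, yielding SNR $1/n$. Your direct-sum/chain-rule alternative might be made to work with more effort, but the rank-one trick is the short route, and recognizing that shared (non-disjoint) factors are allowed by the definition is what makes it available.
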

\begin{proof}

We follow the sparse recovery lower bound technique of \cite{DBLP:conf/focs/PriceW11} using the information capacity of a Gaussian channel. The techniques are essentially the same, so we only briefly outline their argument and the modifications that are necessary. 

Consider a family $\mathcal F$ of $sk$-sparse supports such that:
\begin{itemize}
    \item $\abs{S \triangle S'} \geq sk$ for $S\neq S'\in\mathcal F$
    \item $\Pr_{S\in\mathcal F}[i\in S] = sk/n$ for all $i\in[n]$
    \item $\log \abs{\mathcal F} = \Omega(sk \log(n/sk))$
\end{itemize}
which can be taken to be a random linear code in $[n/sk]^{sk}$ with relative distance $1/2$. 

We then define a communication game as follows. Alice first chooses a random $S\sim\mathcal F$. Then to construct a random rank $1$ matrix $\bfX$, she chooses a random row $i\in[n]$ and chooses a random sign vector supported on the support $S$ on the $i$th row, and normalizes by $\sqrt{n/sk}$ so that the resulting matrix $\bfX$ has operator norm $\sqrt n$. She then sets $\bfy = \bfS(\vc(\bfX + \frac1{100}\bfG))$ for $\bfG\sim\mathcal N(0,1)^{n\times n}$ and sends $\bfy$ to Bob. Bob performs the Gaussian noise estimation algorithm on $\bfy$ to retrieve some $s\times s$ sparse rank $k$ matrix $\bfB$, rounds to a matrix $\bfX'$ in the set of possible matrices $\bfX$, and returns the support $S' = \supp(\bfX')$. 

We WLOG assume that $\bfS$ has orthonormal rows. Note that for any fixed matrix $\bfV$ with Frobenius norm $1$,
\[
    \bfE_{\bfX}\angle*{\bfX, \bfV}^2 = \frac{n}{sk}\bfE_{S\sim\mathcal F}\bfE_{i\sim[n]}\bracks*{\sum_{j\in S}(\bfe_i^\top\bfV\bfe_j)^2} = \frac{n}{sk}\frac{sk}{n^2} = \frac1n
\]
while $\E\angle*{\bfG,\bfV}^2 = \E_{x\sim\mathcal N(0,1)} x^2 = 1$. Thus, for $i\in[m]$, if $\bfV = \bfe_i^\top\bfS$, then
\[
    \bfe_i^\top\bfy = \bfe_i^\top\bfS\parens*{\vc\parens*{\bfX + \frac1{100}\bfG}} = \angle*{\bfV,\bfX} + \frac1{100}\angle*{\bfV,\bfG}
\]
is a Gaussian channel with noise-to-signal ratio
\[
    \alpha = \frac{\E\angle*{\bfG,\bfV}^2}{\bfE_{\bfX}\angle*{\bfX, \bfV}^2} = n.
\]

The following lemma follows exactly as in \cite{DBLP:conf/focs/PriceW11}:
\begin{Lemma}[Lemma 4.1 of \cite{DBLP:conf/focs/PriceW11}]
\[
    I(S;S') = O\parens*{m\log\parens*{1 + \frac1\alpha}}
\]
\end{Lemma}

We now show that a successful Gaussian noise estimation output must necessarily retrieve $S'$ with constant probability. Because the recovery is successful with probability at least $9/10$, with this probabiliy, we have that $\norm*{\bfA - \bfB}_2 \leq \norm*{\bfG}_2 / 10$ so
\begin{align*}
    \norm*{\bfX - \bfB}_2 &\leq \norm*{\bfX + \frac1{100}\bfG - \bfB - \frac1{100}\bfG}_2 \\
    &\leq \norm*{\bfX + \frac1{100}\bfG - \bfB}_2 +  \frac1{100}\norm*{\bfG}_2 \\
    &= \norm*{\bfA - \bfB}_2 +  \frac1{100}\norm*{\bfG}_2 \\
    &\leq \frac{11}{100}\norm*{\bfG}_2.
\end{align*}
By \cite{DBLP:books/cu/12/VershyninEK12}, $\norm*{\bfG}_2 \leq 2.1\sqrt n$ so the above bound is at most $(24/100) \sqrt n$. Now note that if $\bfX \neq \bfX'$, then 
\[
    \norm*{\bfX - \bfX'} \geq \sqrt n
\]
so
\[
    \norm*{\bfX' - \bfB}_2 \geq \norm*{\bfX - \bfX'}_2 - \norm*{\bfX - \bfB}_2 \geq \frac{76}{100}\sqrt n
\]
and thus $\bfX'$ is not the closest matrix to $\bfB$ in the allowed matrices, which is a contradiction. 

By Fano's inequality, we then have that
\[
    I(S;S') = H(S) - H(S\mid S') \geq -1  + \frac9{10}\log\abs{\mathcal F} = \Omega(sk\log(n/sk))
\]
but we showed that $I(S;S') = O(m\log (1+1/n)) = O(m/n)$, so
\[
    m = \Omega(nsk\log(n/sk))
\]
as claimed.

\end{proof}

\section{Acknowledgements}
We thank the anonymous reviewers for their insightful comments. This research was supported in part by ONR grant N00014-18-1-2562, and a Simons Investigator Award.

\bibliographystyle{alpha}
\bibliography{citations}

\newcommand{\etalchar}[1]{$^{#1}$}
\begin{thebibliography}{MRWZ20}

\bibitem[ADL{\etalchar{+}}94]{DBLP:journals/jal/AlonDLRY94}
Noga Alon, Richard~A. Duke, Hanno Lefmann, Vojtech R{\"{o}}dl, and Raphael
  Yuster.
\newblock The algorithmic aspects of the regularity lemma.
\newblock {\em J. Algorithms}, 16(1):80--109, 1994.

\bibitem[AHLW16]{DBLP:conf/coco/AiHLW16}
Yuqing Ai, Wei Hu, Yi~Li, and David~P. Woodruff.
\newblock New characterizations in turnstile streams with applications.
\newblock In {\em Computational Complexity Conference}, volume~50 of {\em
  LIPIcs}, pages 20:1--20:22. Schloss Dagstuhl - Leibniz-Zentrum f{\"{u}}r
  Informatik, 2016.

\bibitem[AKO11]{DBLP:conf/focs/AndoniKO11}
Alexandr Andoni, Robert Krauthgamer, and Krzysztof Onak.
\newblock Streaming algorithms via precision sampling.
\newblock In {\em {FOCS}}, pages 363--372. {IEEE} Computer Society, 2011.

\bibitem[ANPW13]{DBLP:conf/icalp/AndoniNPW13}
Alexandr Andoni, Huy~L. Nguy{\^{e}}n, Yury Polyanskiy, and Yihong Wu.
\newblock Tight lower bound for linear sketches of moments.
\newblock In {\em {ICALP} {(1)}}, volume 7965 of {\em Lecture Notes in Computer
  Science}, pages 25--32. Springer, 2013.

\bibitem[BB19]{DBLP:conf/colt/BrennanB19}
Matthew Brennan and Guy Bresler.
\newblock Optimal average-case reductions to sparse {PCA:} from weak
  assumptions to strong hardness.
\newblock In {\em {COLT}}, volume~99 of {\em Proceedings of Machine Learning
  Research}, pages 469--470. {PMLR}, 2019.

\bibitem[BBB{\etalchar{+}}19]{DBLP:conf/soda/BanBBKLW19}
Frank Ban, Vijay Bhattiprolu, Karl Bringmann, Pavel Kolev, Euiwoong Lee, and
  David~P. Woodruff.
\newblock A {PTAS} for {\(\ell_p\)}-low rank approximation.
\newblock In {\em {SODA}}, pages 747--766. {SIAM}, 2019.

\bibitem[BO10]{DBLP:journals/corr/abs-1011-2571}
Vladimir Braverman and Rafail Ostrovsky.
\newblock Recursive sketching for frequency moments.
\newblock {\em CoRR}, abs/1011.2571, 2010.

\bibitem[BSBP16]{DBLP:journals/tsp/BenidisSBP16}
Konstantinos Benidis, Ying Sun, Prabhu Babu, and Daniel~P. Palomar.
\newblock Orthogonal sparse {PCA} and covariance estimation via procrustes
  reformulation.
\newblock {\em {IEEE} Trans. Signal Process.}, 64(23):6211--6226, 2016.

\bibitem[BSS20]{bandeira2020mathematics}
A~Bandeira, A~Singer, and T~Strohmer.
\newblock Mathematics of data science.
\newblock {\em Book draft available at https://people. math. ethz.
  ch/abandeira/BandeiraSingerStrohmer-MDS-draft. pdf}, 2020.

\bibitem[BWZ16]{DBLP:conf/stoc/BoutsidisWZ16}
Christos Boutsidis, David~P. Woodruff, and Peilin Zhong.
\newblock Optimal principal component analysis in distributed and streaming
  models.
\newblock In {\em {STOC}}, pages 236--249. {ACM}, 2016.

\bibitem[CCF04]{DBLP:journals/tcs/CharikarCF04}
Moses Charikar, Kevin~C. Chen, and Martin Farach{-}Colton.
\newblock Finding frequent items in data streams.
\newblock {\em Theor. Comput. Sci.}, 312(1):3--15, 2004.

\bibitem[CPR16]{DBLP:conf/colt/ChanPR16}
Siu~On Chan, Dimitris Papailliopoulos, and Aviad Rubinstein.
\newblock On the approximability of sparse {PCA}.
\newblock In {\em {COLT}}, volume~49 of {\em {JMLR} Workshop and Conference
  Proceedings}, pages 623--646. JMLR.org, 2016.

\bibitem[CST13]{DBLP:journals/tsp/CandesST13}
Emmanuel~J. Cand{\`{e}}s, Carlos Sing{-}Long, and Joshua~D. Trzasko.
\newblock Unbiased risk estimates for singular value thresholding and spectral
  estimators.
\newblock {\em {IEEE} Trans. Signal Process.}, 61(19):4643--4657, 2013.

\bibitem[CW09]{DBLP:conf/stoc/ClarksonW09}
Kenneth~L. Clarkson and David~P. Woodruff.
\newblock Numerical linear algebra in the streaming model.
\newblock In {\em {STOC}}, pages 205--214. {ACM}, 2009.

\bibitem[CW13]{DBLP:conf/stoc/ClarksonW13}
Kenneth~L. Clarkson and David~P. Woodruff.
\newblock Low rank approximation and regression in input sparsity time.
\newblock In {\em {STOC}}, pages 81--90. {ACM}, 2013.

\bibitem[DG03]{DBLP:journals/rsa/DasguptaG03}
Sanjoy Dasgupta and Anupam Gupta.
\newblock An elementary proof of a theorem of johnson and lindenstrauss.
\newblock {\em Random Struct. Algorithms}, 22(1):60--65, 2003.

\bibitem[DKM06]{DBLP:journals/siamcomp/DrineasKM06a}
Petros Drineas, Ravi Kannan, and Michael~W. Mahoney.
\newblock Fast monte carlo algorithms for matrices {II:} computing a low-rank
  approximation to a matrix.
\newblock {\em {SIAM} J. Comput.}, 36(1):158--183, 2006.

\bibitem[FKV04]{DBLP:journals/jacm/FriezeKV04}
Alan~M. Frieze, Ravi Kannan, and Santosh~S. Vempala.
\newblock Fast monte-carlo algorithms for finding low-rank approximations.
\newblock {\em J. {ACM}}, 51(6):1025--1041, 2004.

\bibitem[Gan15]{DBLP:conf/icalp/Ganguly15}
Sumit Ganguly.
\newblock Taylor polynomial estimator for estimating frequency moments.
\newblock In {\em {ICALP} {(1)}}, volume 9134 of {\em Lecture Notes in Computer
  Science}, pages 542--553. Springer, 2015.

\bibitem[GG11]{DBLP:journals/siammax/GillisG11}
Nicolas Gillis and Fran{\c{c}}ois Glineur.
\newblock Low-rank matrix approximation with weights or missing data is
  np-hard.
\newblock {\em {SIAM} J. Matrix Anal. Appl.}, 32(4):1149--1165, 2011.

\bibitem[GJ79]{DBLP:books/fm/GareyJ79}
M.~R. Garey and David~S. Johnson.
\newblock {\em Computers and Intractability: {A} Guide to the Theory of
  NP-Completeness}.
\newblock W. H. Freeman, 1979.

\bibitem[Gu15]{DBLP:journals/siamsc/Gu15}
Ming Gu.
\newblock Subspace iteration randomization and singular value problems.
\newblock {\em {SIAM} J. Sci. Comput.}, 37(3), 2015.

\bibitem[GW18]{DBLP:conf/icalp/GangulyW18}
Sumit Ganguly and David~P. Woodruff.
\newblock High probability frequency moment sketches.
\newblock In {\em {ICALP}}, volume 107 of {\em LIPIcs}, pages 58:1--58:15.
  Schloss Dagstuhl - Leibniz-Zentrum f{\"{u}}r Informatik, 2018.

\bibitem[HBCY21]{DBLP:journals/nla/HernandezBCY21}
Taylor~M. Hernandez, Roel~Van Beeumen, Mark~A. Caprio, and Chao Yang.
\newblock A greedy algorithm for computing eigenvalues of a symmetric matrix
  with localized eigenvectors.
\newblock {\em Numer. Linear Algebra Appl.}, 28(2), 2021.

\bibitem[KP20]{DBLP:conf/stoc/KallaugherP20}
John Kallaugher and Eric Price.
\newblock Separations and equivalences between turnstile streaming and linear
  sketching.
\newblock In {\em {STOC}}, pages 1223--1236. {ACM}, 2020.

\bibitem[Lin18]{DBLP:journals/jacm/Lin18}
Bingkai Lin.
\newblock The parameterized complexity of the \emph{k}-biclique problem.
\newblock {\em J. {ACM}}, 65(5):34:1--34:23, 2018.

\bibitem[LNW14]{DBLP:conf/stoc/LiNW14}
Yi~Li, Huy~L. Nguyen, and David~P. Woodruff.
\newblock Turnstile streaming algorithms might as well be linear sketches.
\newblock In {\em {STOC}}, pages 174--183. {ACM}, 2014.

\bibitem[LNW19]{DBLP:journals/siamcomp/LiNW19}
Yi~Li, Huy~L. Nguyen, and David~P. Woodruff.
\newblock On approximating matrix norms in data streams.
\newblock {\em {SIAM} J. Comput.}, 48(6):1643--1697, 2019.

\bibitem[LVTW09]{lagendijk2009fifty}
Aart Lagendijk, Bart Van~Tiggelen, and Diederik~S Wiersma.
\newblock Fifty years of anderson localization.
\newblock {\em Phys. Today}, 62(8):24--29, 2009.

\bibitem[Mac08]{DBLP:conf/nips/Mackey08}
Lester~W. Mackey.
\newblock Deflation methods for sparse {PCA}.
\newblock In {\em {NIPS}}, pages 1017--1024. Curran Associates, Inc., 2008.

\bibitem[Mag17]{DBLP:journals/ipl/Magdon-Ismail17}
Malik Magdon{-}Ismail.
\newblock Np-hardness and inapproximability of sparse {PCA}.
\newblock {\em Inf. Process. Lett.}, 126:35--38, 2017.

\bibitem[MM15]{DBLP:conf/nips/MuscoM15}
Cameron Musco and Christopher Musco.
\newblock Randomized block krylov methods for stronger and faster approximate
  singular value decomposition.
\newblock In {\em {NIPS}}, pages 1396--1404, 2015.

\bibitem[MRS21]{DBLP:conf/innovations/ManurangsiRS21}
Pasin Manurangsi, Aviad Rubinstein, and Tselil Schramm.
\newblock The strongish planted clique hypothesis and its consequences.
\newblock In {\em {ITCS}}, volume 185 of {\em LIPIcs}, pages 10:1--10:21.
  Schloss Dagstuhl - Leibniz-Zentrum f{\"{u}}r Informatik, 2021.

\bibitem[MRWZ20]{DBLP:conf/stoc/MahabadiRWZ20}
Sepideh Mahabadi, Ilya~P. Razenshteyn, David~P. Woodruff, and Samson Zhou.
\newblock Non-adaptive adaptive sampling on turnstile streams.
\newblock In {\em {STOC}}, pages 1251--1264. {ACM}, 2020.

\bibitem[MWA06]{DBLP:conf/icml/MoghaddamWA06}
Baback Moghaddam, Yair Weiss, and Shai Avidan.
\newblock Generalized spectral bounds for sparse {LDA}.
\newblock In {\em {ICML}}, volume 148 of {\em {ACM} International Conference
  Proceeding Series}, pages 641--648. {ACM}, 2006.

\bibitem[NH15]{nandkishore2015many}
Rahul Nandkishore and David~A Huse.
\newblock Many-body localization and thermalization in quantum statistical
  mechanics.
\newblock {\em Annu. Rev. Condens. Matter Phys.}, 6(1):15--38, 2015.

\bibitem[NmJ15]{ning2015literature}
Shen Ning-min and Li~Jing.
\newblock A literature survey on high-dimensional sparse principal component
  analysis.
\newblock {\em International Journal of Database Theory and Application},
  8(6):57--74, 2015.

\bibitem[PSC18]{pastor2018eigenvector}
Romualdo Pastor-Satorras and Claudio Castellano.
\newblock Eigenvector localization in real networks and its implications for
  epidemic spreading.
\newblock {\em Journal of Statistical Physics}, 173(3):1110--1123, 2018.

\bibitem[PVZ17]{paouris2017random}
Grigoris Paouris, Petros Valettas, and Joel Zinn.
\newblock Random version of dvoretzky’s theorem in $\ell_p^n$.
\newblock {\em Stochastic Processes and their Applications},
  127(10):3187--3227, 2017.

\bibitem[PW11]{DBLP:conf/focs/PriceW11}
Eric Price and David~P. Woodruff.
\newblock {(1} + eps)-approximate sparse recovery.
\newblock In {\em {FOCS}}, pages 295--304. {IEEE} Computer Society, 2011.

\bibitem[RSW16]{DBLP:conf/stoc/RazenshteynSW16}
Ilya~P. Razenshteyn, Zhao Song, and David~P. Woodruff.
\newblock Weighted low rank approximations with provable guarantees.
\newblock In {\em {STOC}}, pages 250--263. {ACM}, 2016.

\bibitem[Sar06]{DBLP:conf/focs/Sarlos06}
Tam{\'{a}}s Sarl{\'{o}}s.
\newblock Improved approximation algorithms for large matrices via random
  projections.
\newblock In {\em {FOCS}}, pages 143--152. {IEEE} Computer Society, 2006.

\bibitem[SN13]{DBLP:journals/ma/ShabalinN13}
Andrey~A. Shabalin and Andrew~B. Nobel.
\newblock Reconstruction of a low-rank matrix in the presence of gaussian
  noise.
\newblock {\em J. Multivar. Anal.}, 118:67--76, 2013.

\bibitem[TZ04]{DBLP:conf/soda/ThorupZ04}
Mikkel Thorup and Yin Zhang.
\newblock Tabulation based 4-universal hashing with applications to second
  moment estimation.
\newblock In {\em {SODA}}, pages 615--624. {SIAM}, 2004.

\bibitem[Ver12]{DBLP:books/cu/12/VershyninEK12}
Roman Vershynin.
\newblock Introduction to the non-asymptotic analysis of random matrices.
\newblock In {\em Compressed Sensing}, pages 210--268. Cambridge University
  Press, 2012.

\bibitem[Ver15]{vershynin2015estimation}
Roman Vershynin.
\newblock Estimation in high dimensions: a geometric perspective.
\newblock In {\em Sampling theory, a renaissance}, pages 3--66. Springer, 2015.

\bibitem[Ver18]{vershynin2018high}
Roman Vershynin.
\newblock {\em High-dimensional probability: An introduction with applications
  in data science}, volume~47.
\newblock Cambridge university press, 2018.

\bibitem[Woo14]{DBLP:journals/fttcs/Woodruff14}
David~P. Woodruff.
\newblock Sketching as a tool for numerical linear algebra.
\newblock {\em Found. Trends Theor. Comput. Sci.}, 10(1-2):1--157, 2014.

\bibitem[ZL16]{DBLP:conf/nips/ZhuL16}
Zeyuan~Allen Zhu and Yuanzhi Li.
\newblock Even faster {SVD} decomposition yet without agonizing pain.
\newblock In {\em {NIPS}}, pages 974--982, 2016.

\bibitem[ZX18]{DBLP:journals/pieee/ZouX18}
Hui Zou and Lingzhou Xue.
\newblock A selective overview of sparse principal component analysis.
\newblock {\em Proc. {IEEE}}, 106(8):1311--1320, 2018.

\bibitem[ZYC{\etalchar{+}}20]{DBLP:journals/corr/abs-2009-04414}
Li~Zhou, Lihao Yan, Mark~A. Caprio, Weiguo Gao, and Chao Yang.
\newblock Solving the k-sparse eigenvalue problem with reinforcement learning.
\newblock {\em CoRR}, abs/2009.04414, 2020.

\bibitem[ZZS02]{DBLP:journals/siammax/ZhangZS02}
Zhenyue Zhang, Hongyuan Zha, and Horst~D. Simon.
\newblock Low-rank approximations with sparse factors {I:} basic algorithms and
  error analysis.
\newblock {\em {SIAM} J. Matrix Anal. Appl.}, 23(3):706--727, 2002.

\bibitem[ZZS04]{DBLP:journals/siammax/ZhangZS04}
Zhenyue Zhang, Hongyuan Zha, and Horst~D. Simon.
\newblock Low-rank approximations with sparse factors {II:} penalized methods
  with discrete newton-like iterations.
\newblock {\em {SIAM} J. Matrix Anal. Appl.}, 25(4):901--920, 2004.

\end{thebibliography}

\appendix

\end{document}